\theoremstyle{definition}
\newtheorem{defin}{Definition}[section]
\newtheorem{thm}[defin]{Theorem}
\newtheorem{theorem}[defin]{Theorem}
\newtheorem{rem}[defin]{Remark}
\newtheorem{ex}[defin]{Example}
\newtheorem{cor}[defin]{Corollary}
\newtheorem{conjecture}[defin]{Conjecture}
\newtheorem{lemma}[defin]{Lemma}
\newtheorem{prop}[defin]{Proposition}
\newcommand{\twomat}[4]{\left(\begin{array}{cc}#1&#2\\#3&#4\end{array}\right)}
\newcommand{\twovec}[2]{\left(\begin{array}{c}#1\\#2\end{array}\right)}
\def\hil{{\mathcal H}}
\def\A{{\mathcal A}}
\def\B{{\mathcal B}}
\def\cE{{\mathcal E}}
\def\cF{{\mathcal F}}
\def\S{{\mathcal S}}
\def\X{{\mathcal X}}
\def\U{{\mathcal U}}
\def\half{\frac{1}{2}}
\def\ep{\varepsilon}
\def\bN{\mathbb{N}}
\def\bC{\mathbb{C}}
\def\bR{\mathbb{R}}
\def\bz{\left(}
\def\jz{\right)}
\def\kii{\emph}
\def\kiii{}
\def\map{\Phi}
\def\I{\mathcal{I}}
\def\pls{\overline p_e}
\def\pli{\underline p_e}
\def\sa{_{\mathrm{sa}}}
\newcommand{\ki}{\emph}
\newcommand{\s}{\mbox{ }}
\newcommand{\ds}{\mbox{ }\mbox{ }}
\newcommand{\norm}[1]{\left\| #1\right\|}
\newcommand{\inner}[2]{\left\langle #1 , #2\right\rangle}
\newcommand{\pinner}[3]{\left\langle #1 | #2| #3\right\rangle}
\newcommand{\diad}[2]{|#1\rangle\langle #2|}
\newcommand{\pr}[1]{\diad{#1}{#1}}
\newcommand{\dmax}[2]{D_{\mathrm{max}}(#1\,\|\, #2)}
\newcommand{\ch}[2]{C(#1,#2)}
\newcommand{\ket}[1]{|#1\rangle}
\newcommand{\vecc}[1]{\vec{#1}}
\newcommand{\comp}[1]{\bar{#1}}
\newcommand{\be}{\begin{equation}}
\newcommand{\ee}{\end{equation}}
\newcommand{\bea}{\begin{eqnarray}}
\newcommand{\eea}{\end{eqnarray}}
\newcommand{\beas}{\begin{eqnarray*}}
\newcommand{\eeas}{\end{eqnarray*}}
\DeclareMathOperator{\id}{I}
\DeclareMathOperator{\Tr}{Tr}
\DeclareMathOperator{\trace}{Tr}
\DeclareMathOperator{\diag}{Diag}
\def\qopnamewl@#1{\mathop{\rm{#1}}}
\def\argmin{\qopnamewl@{arg\,min}}
\def\argmax{\qopnamewl@{arg\,max}}
\def\lub{\qopnamewl@{LUB}}
\def\LUB{\qopnamewl@{LUB}}
\def\glb{\qopnamewl@{GLB}}
\def\GLB{\qopnamewl@{GLB}}
\begin{document}

\pagestyle{plain}

\setlength{\topmargin}{-1.0cm}  % needs to be adjusted to your printer!
\setlength{\textheight}{8.6in} \setlength{\parskip}{1mm}

\setcounter{page}{1}

\title{Upper bounds on the error probabilities and asymptotic error exponents in quantum multiple state discrimination}
\author{Koenraad~M.R.~Audenaert$^{1,2,}$\footnote{Electronic address: koenraad.audenaert@rhul.ac.uk}\; and
Mil\'an Mosonyi$^{3,4,}$\footnote{Electronic address: milan.mosonyi@gmail.com}\\[3mm]
    \small\it $^1$Department of Mathematics, Royal Holloway University of London, \\
    \small\it Egham TW20 0EX, U.K.\\[1mm]
    \small\it $^2$Department of Physics and Astronomy, University of Ghent, \\
    \small\it S9, Krijgslaan 281, B-9000 Ghent, Belgium\\[1mm]
    \small\it $^3$F\'{\i}sica Te\`{o}rica: Informaci\'{o} i Fenomens Qu\`{a}ntics, Universitat Aut\`{o}noma de Barcelona, \\
    \small\it ES-08193 Bellaterra (Barcelona), Spain\\[1mm]
    \small\it $^4$Mathematical Institute, Budapest University of Technology and Economics \\
    \small\it Egry J\'ozsef u~1., Budapest, 1111 Hungary%\\[1mm]
}
\date{}
\maketitle

\begin{abstract}
We consider the multiple hypothesis testing problem for symmetric quantum state discrimination between $r$ given states $\sigma_1,\ldots,\sigma_r$.
By splitting up the overall test into multiple binary tests in various ways we obtain a number of upper bounds on the optimal error probability
in terms of the binary error probabilities.
These upper bounds allow us to deduce various bounds on the asymptotic error rate, for which it has been hypothesized that it is given by the
multi-hypothesis quantum Chernoff bound (or Chernoff divergence) $C(\sigma_1,\ldots,\sigma_r)$,
as recently introduced by Nussbaum and Szko\l a in analogy with Salikhov's classical multi-hypothesis Chernoff bound.
This quantity is defined as the minimum of the pairwise binary Chernoff divergences $\min_{j<k} C(\sigma_j,\sigma_k)$.
It was known already that the optimal asymptotic rate must lie between $C/3$ and $C$, and that for certain classes
of sets of states the bound is actually achieved.
It was known to be achieved, in particular, when the state pair that is closest together in Chernoff divergence
is more than 6 times closer than the next closest pair.
Our results improve on this in two ways. Firstly, we show that the optimal asymptotic rate must lie between $C/2$ and $C$.
Secondly, we show that the Chernoff bound is already achieved when the closest state pair
is more than 2 times closer than the next closest pair.
We also show that the Chernoff bound is achieved when at least $r-2$ of the states are pure, improving on a previous result by Nussbaum and Szko\l a.
Finally, we indicate a number of potential pathways along which a proof (or disproof) may eventually be found that the multi-hypothesis quantum
Chernoff bound is always achieved.
\end{abstract}

\newpage
\section{Introduction}\label{sec:intro}

Consider a communication scenario where a sender (say, Alice) wishes to send one of $r$ possible messages to a receiver (Bob). 
To achieve this goal, Alice has a device at her disposal that can prepare $r$ quantum states $\rho_1,\ldots,\rho_r$ 
from some state space $\S(\hil)$, one for each possible message, 
which she can then send through a quantum channel $\map$, resulting in the states
$\sigma_i=\map(\rho_i)$ at Bob's side.
Bob then has to make a quantum measurement to identify which message was sent.
His measurement is described by a set of positive semidefinite operators $E_1,\ldots,E_r$, one corresponding to each possible message, 
that form an incomplete POVM (positive operator-valued measure), i.e., they satisfy $E_1+\ldots+E_r\le I$. The operator $E_0:=I-(E_1+\ldots+E_r)$ 
corresponds to not making a decision on the identity of the received state. The probability of making an erroneous decision when the message $i$ 
was sent is then given by $\Tr\sigma_i(I-E_i)$. If we also assume that Alice sends each message $i$ with a certain probability $p_i$ then 
the best Bob can do is choose the POVM that minimizes the Bayesian error probability
\begin{equation*}
P_e\bz\{E_1,\ldots,E_r\}\jz:=\sum_{i=1}^r p_i\Tr\sigma_i(I-E_i)=\sum_{i=1}^r \Tr A_i(I-E_i),
%P_e\bz\{E_1,\ldots,E_r\}\jz:=\sum_{i=1}^r \Tr A_i(I-E_i).
\end{equation*}
where $A_i:=p_i\sigma_i$.
%We assume here that a complete mathematical description of the output states and the prior probabilities are known to Bob.

In the classical case, i.e., when the $\sigma_i$ are mutually commuting,
the optimal success probability is known to be reached by the so-called maximum likelihood measurement, and
the optimal success probability is given by $\Tr\max\{p_1\sigma_1,\ldots,p_r\sigma_r\}$, where the maximum is taken entrywise
in some basis that simultaneously diagonalizes all the $\sigma_i$.
In the general quantum case, no explicit expression is known for the optimal error probability, or for the measurement achieving it, unless $r=2$, 
in which case these optimal quantities are easy to find \cite{Helstrom,Holevo78}.
Moreover, it turns out to be impossible to extend the notion of maximum of a set of real numbers to maximum of a set of 
positive semidefinite operators on a Hilbert space while keeping all the properties of the former --
technically speaking, the positive semidefinite ordering does not induce a lattice structure -- and 
because of this a straightforward generalization of the classical results is not possible.
In Section \ref{sec:ML}, we define a generalized notion of maximum for a set of
self-adjoint operators, which we call the least upper bound (LUB). This notion reduces to the usual maximum in the classical case, and
the optimal success probability can be expressed as $\Tr\lub(p_1\sigma_1,\ldots,p_r\sigma_r)$ \cite{ykl},
giving a direct generalization of the classical expression.
We explore further properties of the least upper bound, and its dual, the greatest lower bound (GLB), in Appendix \ref{sec:LUB}.

An obvious way to reduce the error probability is to send the same message multiple times. For $n$ repetitions, the optimal error probability is given by
\begin{equation}\label{optimal error}
P_e^*\bz A_{1,n},\ldots,A_{r,n}\jz:=\min\left\{\sum_{i=1}^r \Tr A_{i,n}(I-E_i)\,:\,
\{E_1,\ldots,E_r\}\s\text{POVM on }\S(\hil^{\otimes n})\right\},
\end{equation}
where $A_{i,n}:=p_i\sigma_i^{\otimes n}$. These error probabilities are known to decay exponentially fast in the number of repetitions \cite{Aud,NSz,NS10},
and hence we are interested in the exponents (which are negative numbers)
\begin{align}
\pli\bz\vecc{A}_1,\ldots,\vecc{A}_r\jz&:=\liminf_{n\to\infty}\frac{1}{n}\log
P_e^*\bz A_{1,n},\ldots,A_{r,n}\jz \ds\ds\ds\text{and}\ds\ds\ds\label{li exponent}\\
\pls\bz\vecc{A}_1,\ldots,\vecc{A}_r\jz&:=\limsup_{n\to\infty}\frac{1}{n}\log
P_e^*\bz A_{1,n},\ldots,A_{r,n}\jz,\label{ls exponent}
\end{align}
where $\vecc{A_i}:=\{A_{i,n}\}_{n\in\bN}$.

In the case of two possible messages, the theorem for the quantum Chernoff bound \cite{Aud,ANSzV,NSz}
states that
 \begin{equation}\label{binary Chernoff}
 \pli\bz \vecc{A}_1,\vecc{A}_2\jz=
 \pls\bz \vecc{A}_1,\vecc{A}_2\jz=
\min_{0\le t\le 1}\log\Tr\sigma_1^t\sigma_2^{1-t}
=:
-\ch{\sigma_1}{\sigma_2},
 \end{equation}
where
$\ch{\sigma_1}{\sigma_2}$ is a positive quantity known as the \ki{Chernoff divergence} of $\sigma_1$ and $\sigma_2$.
According to a long-standing conjecture, it is hypothesized that
\begin{align}\label{main conjecture}
\pli\bz\vecc{A}_1,\ldots,\vecc{A}_r\jz=
\pls\bz\vecc{A}_1,\ldots,\vecc{A}_r\jz=
-\min_{(i,j):\,i\ne j}\ch{\sigma_i}{\sigma_j},
\end{align}
i.e., the multi-hypothesis exponent is equal to the worst-case pairwise exponent.
Following \cite{NS10}, we call $C(\sigma_1,\ldots,\sigma_r):=\min_{(i,j):\,i\ne j}\ch{\sigma_i}{\sigma_j}$ the \ki{multi-Chernoff bound}. 
In fact, the lower bound $\pli(\vecc{A}_1,\ldots,\vecc{A}_r)\ge -C(\sigma_1,\ldots,\sigma_r)$ (optimality) follows trivially from the 
binary case \cite{NSz}, as was pointed out e.g., in \cite{NS11a}.
The upper bound $\pls(\vecc{A}_1,\ldots,\vecc{A}_r)\le -C(\sigma_1,\ldots,\sigma_r)$ (achievability)
is known to be true for commuting states \cite{salikhov73} and when the states $\sigma_i$ have pairwise disjoint supports \cite{NS11b}. 
A special case of the latter is when all the states $\sigma_i$ are pure \cite{NS11a}.

Our aim here is to establish decoupling bounds on the \ki{single-shot} error probability
by decomposing a multi-hypothesis test into multiple binary tests. These bounds in turn yield bounds on the exponents 
\eqref{li exponent}--\eqref{ls exponent} in terms of the corresponding pairwise exponents.
We remark that the existing asymptotic results mentioned in the previous paragraph also rely implicitly on single-shot decoupling bounds. Regarding lower bounds,
it has been shown in \cite{Qiu} that, for any choice of $\sigma_1,\ldots,\sigma_r$, and priors $p_1,\ldots,p_r$, we have
\begin{align}\label{single-shot lower bounds}
P_e^*\bz A_{1},\ldots,A_{r}\jz\ge \frac{1}{r-1}\sum_{(i,j):\,i<j}P_e^*(A_i,A_j),
\end{align}
where $A_i:=p_i\sigma_i$. Taking then $r$ sequences of states $\sigma_{i,n},\,n\in\bN$, and $\vecc{A_i}:=\{p_i\sigma_{i,n}\}_{n\in\bN}$, we get
\begin{align}\label{asymptotic lower bounds}
\pli\bz\vecc{A}_1,\ldots,\vecc{A}_r\jz\ge\max_{(i,j):\,i\ne j}
\pli\bz\vecc{A}_i,\vecc{A}_j\jz.
\end{align}
Note that this is true for arbitrary sequences of states $\{\sigma_{i,n}\}_{n\in\bN}$, with no special assumption on the correlations. In the
i.i.d.~case the right-hand side (RHS) of \eqref{asymptotic lower bounds} is exactly $-\min_{(i,j):\,i\ne j}\ch{\sigma_i}{\sigma_j}$,
and we recover the optimality part of \eqref{main conjecture}.

Hence, in this paper we will focus on upper decoupling bounds.
Upper bounds on the optimal error in terms of the pairwise fidelities can easily be obtained from some results in \cite{barnumknill}:
% They %showed that the error probability of the pretty good measurement
\begin{align}\label{BK1}
P^*_{e}(A_1,\ldots,A_r)\le \half\sum_{(i,j):\,i\ne j} \sqrt{p_ip_j}F(\sigma_i,\sigma_j),
\end{align}
where $F(\sigma_i,\sigma_j):=\|\sigma_i^{1/2}\sigma_j^{1/2}\|_1$ is the fidelity.
We provide a short proof of this bound in Appendix \ref{sec:Tysonproofs}.
When all the $\sigma_i$ are of rank one, the above bound can be improved as \cite{HLS}
\begin{align}\label{HLS1}
P^*_{e}(A_1,\ldots,A_r)\le \half\sum_{(i,j):\,i\ne j}\frac{p_i^2+p_j^2}{p_ip_j} F^2(\sigma_i,\sigma_j).
\end{align}
Using the Fuchs--van de Graaf inequalities \cite{FvdG},
these bounds can easily be translated into bounds in terms of the pairwise error probabilities, and we obtain
the following converses to \eqref{single-shot lower bounds} and
\eqref{asymptotic lower bounds}:
\smallskip

\noindent \textbf{Single-shot upper decoupling bounds:}\ds For $A_i:=p_i\sigma_i$, $i=1,\ldots,r$,
\begin{align}\label{sh-main1}
P^*_{e}(A_1,\ldots,A_r)\le\sum_{(i,j):\,i\ne j}\sqrt{p_i+p_j}\sqrt{P_e^*(A_i,A_j)}.
\end{align}
If $A_i$ is rank one for all $i$ then the square root can be removed from the pairwise errors; more precisely,
\begin{align}\label{sh-main2}
P^*_{e}(A_1,\ldots,A_r)\le\sum_{(i,j):\,i\ne j}(p_i+p_j)\frac{p_i^2+p_j^2}{p_i^2p_j^2}P_e^*\bz A_i,A_j\jz.
\end{align}
\smallskip

%The pure state case \eqref{sh-main2} follows immediately from an inequality in Appendix A of \cite{HLS}.
%We give several alternative proofs of \eqref{sh-main2} along the way of proving \eqref{sh-main1}.
%\smallskip

These single-shot bounds immediately yield the following
\smallskip

\noindent \textbf{Asymptotic upper decoupling bounds:}\ds For $\vecc{A}_i:=\{p_i\sigma_{i,n}\}_{n\in\bN}$, $i=1,\ldots,r$,
\begin{align}\label{a-main1}
\pls\bz\vecc{A}_1,\ldots,\vecc{A}_r\jz\le\half\max_{(i,j):\,i\ne j}\pls(\vecc{A}_i,\vecc{A}_j).
\end{align}
If $A_{i,n}$ is rank one for all $i$ and $n$ then
\begin{align}\label{a-main2}
\pls\bz\vecc{A}_1,\ldots,\vecc{A}_r\jz\le\max_{(i,j):\,i\ne j}\pls(\vecc{A}_i,\vecc{A}_j).
\end{align}
\smallskip

Note that again \eqref{a-main1} and \eqref{a-main2} are true for \ki{arbitrary} sequences of states, and in the i.i.d.~case we have
$\max_{(i,j):\,i\ne j}\pls(\vecc{A}_i,\vecc{A}_j)=-\min_{(i,j):\,i\ne j}\ch{\sigma_i}{\sigma_j}=-C(\sigma_1,\ldots,\sigma_r)$.
In particular, by \eqref{asymptotic lower bounds} and \eqref{a-main2} we recover the result of \cite{NS11a}, i.e., that
\eqref{main conjecture} is true for pure states. %Moreover, \eqref{single-shot lower bounds} and \eqref{sh-main2} give a stronger version of this result.
%In fact, we prove more: namely that \eqref{a-main2} holds whenever $A_{i,n}$ is pure for at least $r-2$ of the hypotheses. This latter improvement is based on
%the recent techniques of \cite{N13}.
For mixed states, Theorem 3 in \cite{NS11b} gives that $\pls(\vecc{A}_1,\ldots,\vecc{A}_r)$ is between $-C(\sigma_1,\ldots,\sigma_r)$ and
$-\frac{1}{3}C(\sigma_1,\ldots,\sigma_r)$.
Our bound \eqref{a-main1} improves the factor $1/3$ in this upper bound to $1/2$, which is the best general result known so far.

Analytical proofs for various special cases and extensive numerical simulations for the general case suggest that the square root in \eqref{sh-main1}
-- and, consequently, the factor $\half$ in \eqref{a-main1} -- can be removed.
The fidelity bounds in \eqref{BK1} and \eqref{HLS1} were obtained by bounding from above the error probability of the pretty good
measurement, using matrix analytic techniques. Here we explore a completely different approach to obtaining upper decoupling
bounds. Namely, we show that the optimal error $P_e^*$ can be bounded from above by the sum of the optimal error probabilities of $r$
binary state discrimination problems, where in each of these problems, the goal is to discriminate one of the original hypotheses
from all the rest. Thus, finding the optimal error exponent of the symmetric multiple state discrimination problem with i.i.d.~hypotheses 
can be reduced to finding the optimal error exponent of
the correlated binary state discrimination problem where one of the hypotheses is i.i.d., while the other is a convex mixture
of i.i.d.~states. This latter problem is interesting in its own right, as it is arguably the simplest non-i.i.d.~state discrimination problem 
and yet its solution is not yet known, despite considerable effort
towards establishing non-i.i.d.~analogs of the binary Chernoff bound theorem \cite{HMO,HMO2,HMH,MHOF,M}.
Here we make some progress towards the solution of this problem, and provide a complete solution when the i.i.d.~state is pure.

The structure of the paper is as follows. In Section \ref{sec:prelim} we summarize the necessary preliminaries and review the known results 
that are relevant for the rest of the paper. In particular, we give a short proof of \eqref{single-shot lower bounds}, and summarize the known 
results for the binary case. We also introduce the notion of the least upper bound for self-adjoint operators, and show how the optimal error 
probability can be expressed in this formalism.

In Section \ref{sec:Tyson} we first review the fidelity bounds of \cite{barnumknill} and \cite{HLS},
which are based on the performance of the suboptimal pretty good measurement.
%When combined with the
%Fuchs-van de Graaf inequality \cite{FvdG}, these bounds then imply
%the error bounds \eqref{sh-main1}--\eqref{sh-main2}.
Then we follow a similar approach to obtain bounds in terms of pairwise fidelity-like quantities.
From these bounds we can recover
\eqref{BK1}--\eqref{HLS1} up to a constant factor, and for some configurations they are strictly better than
\eqref{BK1}--\eqref{HLS1}.
This approach is based on Tyson's bound \cite{Tyson} on the performance of an other suboptimal family of measurements, the square
measurements.

In Section \ref{sec:averaged} we study a special binary problem where one of the hypotheses is i.i.d., while the other one is 
averaged i.i.d., i.e., a convex mixture of i.i.d.~states.
In the setting of Stein's lemma (with the averaged state being the null-hypothesis) the corresponding error exponent is known to be the worst-case pairwise exponent
\cite{BS} (see also \cite{M13} for a simple proof), and we conjecture the same to hold in the symmetric setting of the Chernoff bound.
Similarly to the case of multiple hypotheses, it is easy to show that the worst-case pairwise exponent cannot be exceeded (optimality).
In Theorem \ref{thm:averaged upper bounds} we present upper decoupling bounds on the error probability, analogous to \eqref{sh-main1} and 
\eqref{sh-main2}, which in the asymptotics yield that $1/2$ times the conjectured exponent is achievable. Moreover, when the i.i.d.~state is 
pure then the factor $1/2$ can be removed and we get both optimality and achievability.
%Moreover, it turns out that the solution of this binary non-i.i.d.~problem would provide the solution of the general multi-state discrimination problem.

%Indeed, in
In Section \ref{sec:dichotomic} we show that the exponential decay rate of the optimal error probability \eqref{optimal error} 
is the same as that of another quantity, which we call the \ki{dichotomic error}. This is defined as the sum of the error probabilities 
of the binary state discrimination problems where we only want to decide whether hypothesis $i$ is true or not, for every $i=1,\ldots,r$. 
In the i.i.d.~case these binary problems are exactly of the type discussed in Section \ref{sec:averaged}, and we can directly apply the 
bounds obtained there to get upper decoupling bounds on both the single-shot and the asymptotic error probabilities, which give the bounds 
\eqref{sh-main1}--\eqref{a-main2}.

In Section \ref{sec:nussbaum} we follow Nussbaum's approach \cite{N13} to obtain a different kind of decoupling of the optimal error
probability. When applied recursively and combined with the bounds of Section \ref{sec:averaged}, this approach provides an alternative way 
to obtain bounds of the type \eqref{sh-main1}--\eqref{sh-main2}, which again yield
\eqref{a-main1}--\eqref{a-main2} in the asymptotics.

In Section \ref{sec:asymptotics} we show how the various single-shot bounds of the above described approaches translate into
bounds for the error rates, i.e., we derive \eqref{a-main1}--\eqref{a-main2} for the most general scenario, and its variants
for more specific settings, where the pairwise rates can be replaced by pairwise Chernoff divergences.
In particular, we improve on the result of \cite{NS11a} by showing that \eqref{main conjecture} holds if at least $r-2$ of the states
$\sigma_i$ are pure.
We also give an improvement of Nussbaum's asymptotic result \cite{N13}, which says that \eqref{main conjecture} is true if there is a 
pair of states $(\sigma_i,\sigma_j)$ such that $C(\sigma_i,\sigma_j)<\frac{1}{6}C(\sigma_k,\sigma_l)$ for any
$(k,l)\ne(i,j)$. Here we show that the constant $\frac{1}{6}$ can be replaced with $\frac{1}{2}$.

Supplementary material is provided in a number of Appendices.
In Appendix \ref{sec:LUB}, we explore some properties of the least upper bound and the greatest lower bound for self-adjoint
operators, which further extend their analogy to the classical notions of minimum and maximum.
In Appendix \ref{sec:classical} we show how our approaches
work in the classical case (when all operators commute), thus providing various alternative proofs for \eqref{main conjecture} in the classical case.
In Appendix \ref{sec:pure} we review the pure state case; we show an elementary way to derive the Chernoff bound theorem \eqref{binary Chernoff} 
for two pure states, and show how the
combination of the single-shot bounds of \cite{HLS} and \cite{Qiu} yield \eqref{main conjecture} for an arbitrary number of pure states.
In Appendix \ref{sec:semidef} we review the dual formulation of the optimal error probability due to \cite{ykl}.
For readers' convenience, we provide a proof for Tyson's and Barnum and Knill's error bounds in Appendix \ref{sec:Tysonproofs}.

\section{Preliminaries}\label{sec:prelim}

\subsection{Notations}

For a finite-dimensional Hilbert space $\hil$, let $\B(\hil)$
denote the set of linear operators on $\hil$, let $\B(\hil)\sa$ denote the set of self-adjoint (Hermitian) operators,
$\B(\hil)_+$ the set of positive semidefinite (PSD) operators, and
$\S(\hil)$ the set of density operators (states), i.e., the set of PSD operators with unit trace.

For $X$ a Hermitian operator, let $|X|$ denote its absolute value (or modulus), $|X|:=\sqrt{X^2}$.
The Jordan decomposition of $X$ into its positive and negative parts is given by
$X=X_+ - X_-$, with $X_\pm = (|X|\pm X)/2$, and $|X|=X_++X_-$. It is clear that $X_+ X_-=0$.
As the eigenvalues of $|X|$ are the absolute values of the eigenvalues of $X$,
the eigenvalues of $X_+$ ($X_-$) are the positive (negative) eigenvalues of $X$.
We denote the projections onto the support of $X_+$ and $X_-$ by
$\{X>0\}$ and $\{X<0\}$, respectively.

We will follow the convention that powers of a positive semidefinite (PSD) operator are only taken on its support. 
That is, if $a_1,\ldots,a_r$ are the strictly positive eigenvalues of $A\ge 0$,
with corresponding spectral projections $P_1,\ldots,P_r$, then
$A^s:=\sum_{i=1}^r a_i^s P_i$ for every $s\in\bR$. In particular, $A^0$ denotes the projection onto the support of $A$.

By a POVM we will mean a set of PSD operators $E_1,\ldots,E_r$ such that
$E_1+\ldots+E_r\le I$.
On occasion we will also consider the underlying measurement operators $\{X_k\}_{k=1}^r$, which are sets of operators such that
the products $X_k^*X_k$ constitute a POVM.

We will normally not indicate the base of the logarithm, but we will always assume that it is larger than $1$, and hence $\log$ is a strictly increasing function.
We will use the conventions $\log 0:=-\infty$ and $\log+\infty:=+\infty$.

\subsection{The problem setting}

We will consider a generalized state discrimination problem, where the hypotheses are represented by
arbitrary non-zero PSD operators (i.e., not necessarily states). We consider such a generalized setting partly
to absorb the priors into the states to make the formalism simpler, and partly
because
the formalism supports it, and all our results can be formulated and proved in this more general setting.
More importantly, however, we need to treat such generalized setups even if we restrict our original hypotheses
to be states; see, e.g., Lemma \ref{lem:nussbaum}.

More in detail, in the single-shot case our hypotheses are represented by non-zero PSD operators
$A_1,\ldots,A_r$. Occasionally, we will use the notations
\begin{equation*}
p_k:=\Tr A_k,\ds\ds\ds\sigma_k:=A_k/p_k,\ds\ds k=1,\ldots,r.
\end{equation*}
If $p_1+\ldots+p_r=1$ then we say that $\{A_k\}$ forms a set of \ki{weighted states}.
%We say that the problem is
%\ki{classical} if all the $A_i$ commute with each other.

For any POVM $E_1,\ldots,E_r$, we define the corresponding \ki{success- and error probabilities} as
\begin{align*}
P_s(\{E_i\}):=\sum_{i=1}^r\Tr A_i E_i,\ds\ds\ds\ds\ds\ds
P_e(\{E_i\}):=\sum_{i=1}^r\Tr A_i (I-E_i).
\end{align*}
These can indeed be interpreted as probabilities in the case of weighted states, whereas in the general case they might take values above $1$.
Since it will always be obvious what the hypotheses are, we don't indicate them in the above notations. The optimal values of these quantities 
over all possible choices of POVMs are the optimal success- and error probability
\begin{align}
P_s^*(A_1,\ldots,A_r)&:=\max\left\{ \sum\nolimits_{i=1}^r\Tr A_i E_i\,:\,\{E_1,\ldots,E_r\}\s\text{POVM}\right\},
\label{success prob}\\
P_e^*(A_1,\ldots,A_r)&:=\min\left\{\sum\nolimits_{i=1}^r\Tr A_i (I-E_i)\,:\,\{E_1,\ldots,E_r\}\s\text{POVM}\right\}.\nonumber
\end{align}
The maximum and the minimum above exist because the domain of optimization is compact and the functions to optimize are
continuous with respect to any natural topology on the set of POVMs on a fixed set of outcomes.
We will use the shorthand notations $P_s^*$ and $P_e^*$ when it is clear what the hypotheses are.
Note that
\begin{equation*}
P_s^*(A_1,\ldots,A_r)+P_e^*(A_1,\ldots,A_r)=\Tr A_0,\ds\ds
\text{where}\ds\ds A_0:=\sum_{i=1}^r A_i.
\end{equation*}
Again, these can be interpreted as probabilities if $1=\Tr A_0=\sum_i p_i$, i.e., in the case of weighted states.
Note that any POVM that is optimal for $P_s^*$ is also optimal for $P_e^*$ and vice versa. Moreover, there always exists an
optimal POVM $\{E_i\}_{i=1}^r$ such that $E_1+\ldots+E_r=I$.

In the asymptotic setting, our hypotheses are going to be represented by sequences of PSD operators, $\vecc{A_i}:=\{A_{i,n}\}_{n\in\bN},\,i=1,\ldots, r$, 
and we will be interested in the exponents $\pli$ and $\pls$, defined in \eqref{li exponent} and \eqref{ls exponent}, respectively.
We say that the $i$-th hypothesis is \ki{i.i.d.} (for the classical analogy of independent and identically distributed)
if $p_{i,n}=\Tr A_{i,n}$ is independent of $n$ (and hence we can define $p_i:=p_{i,n}$, $n\in\bN$),
and $A_{i,n}/p_i=\sigma_i^{\otimes n}$ for every $i$, where $\sigma_i:=\sigma_{i,1}$.
We say that the asymptotic state discrimination problem is i.i.d. if all the hypotheses are i.i.d.

When passing from single-shot error bounds to asymptotic error bounds, we will use the following standard lemma without further notice.
\begin{lemma}\label{lemma:maximum rate}
Let $a_{i,n},\,n\in\bN,\,i=1,\ldots,r$, be sequences of positive numbers. Then
\begin{align*}
\max_i\liminf_{n\to\infty}\frac{1}{n}\log a_{i,n}\le
\liminf_{n\to\infty}\frac{1}{n}\log\sum_{i=1}^r a_{i,n}
\le
\limsup_{n\to\infty}\frac{1}{n}\log\sum_{i=1}^r a_{i,n}
\le
\max_i\limsup_{n\to\infty}\frac{1}{n}\log a_{i,n}.
\end{align*}
\end{lemma}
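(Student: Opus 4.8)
The plan is to prove the two non-trivial inequalities separately; the middle one, $\liminf_n\tfrac1n\log\sum_i a_{i,n}\le\limsup_n\tfrac1n\log\sum_i a_{i,n}$, is immediate from the definitions of $\liminf$ and $\limsup$. Throughout I will use that $x\mapsto\tfrac1n\log x$ is increasing (the base of the logarithm exceeds $1$) and that $\tfrac1n\log\max_i a_{i,n}=\max_i\tfrac1n\log a_{i,n}$.

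For the leftmost inequality I would start from the trivial bound $\sum_{i=1}^r a_{i,n}\ge a_{j,n}$, valid for every $j$ and every $n$ since the $a_{i,n}$ are positive. Applying $\tfrac1n\log(\cdot)$ and passing to the $\liminf$ gives $\liminf_n\tfrac1n\log\sum_i a_{i,n}\ge\liminf_n\tfrac1n\log a_{j,n}$ for each fixed $j$; taking the maximum over $j\in\{1,\dots,r\}$ yields the claim.

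For the rightmost inequality I would use the opposite bound $\sum_{i=1}^r a_{i,n}\le r\max_i a_{i,n}$. Applying $\tfrac1n\log(\cdot)$ gives $\tfrac1n\log\sum_i a_{i,n}\le\tfrac{\log r}{n}+\max_i\tfrac1n\log a_{i,n}$, and since $\tfrac{\log r}{n}\to 0$ it suffices to show $\limsup_n\max_i\tfrac1n\log a_{i,n}\le\max_i\limsup_n\tfrac1n\log a_{i,n}=:M$. Fix $\varepsilon>0$. For each $i$ there is $N_i$ with $\tfrac1n\log a_{i,n}\le\limsup_n\tfrac1n\log a_{i,n}+\varepsilon\le M+\varepsilon$ for all $n\ge N_i$; since there are only finitely many indices, for $n\ge\max_i N_i$ we get $\max_i\tfrac1n\log a_{i,n}\le M+\varepsilon$, hence $\limsup_n\max_i\tfrac1n\log a_{i,n}\le M+\varepsilon$, and letting $\varepsilon\downarrow 0$ finishes the argument (with the obvious modifications if $M=\pm\infty$).

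There is no real obstacle here: the only point where anything beyond monotonicity of the logarithm enters is the interchange $\limsup_n\max_i=\max_i\limsup_n$ in the last step, which relies crucially on $r$ being finite — the same statement is false for infinitely many sequences. In particular, all four quantities in the chain coincide whenever each $\tfrac1n\log a_{i,n}$ converges, which is exactly the situation in the i.i.d.~applications later in the paper.
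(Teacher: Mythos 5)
Your proof is correct and follows essentially the same route as the paper's: the trivial lower bound $a_{j,n}\le\sum_i a_{i,n}$ for the left inequality, and for the right inequality an $\varepsilon$-argument that bounds all $r$ sequences simultaneously beyond a common finite threshold, with the $\frac{1}{n}\log r$ correction vanishing in the limit. The only cosmetic difference is that you route the estimate through $\max_i a_{i,n}$ whereas the paper bounds the sum directly, and you note the $M=\pm\infty$ cases explicitly, which the paper also handles (for $M=+\infty$).
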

\begin{proof}
The first inequality is straightforward from $a_{i,n}\le \sum_i a_{i,n},\,\forall i$, and the second inequality is obvious. To prove the last inequality, let
$M:=\max_i\limsup_{n\to\infty}\frac{1}{n}\log a_{i,n}$. If $M=+\infty$ then the assertion is trivial, and hence we assume that $M<+\infty$. 
By the definition of the limit superior, for every
$M'>M$, there exists an $N_{M'}$ such that for all $n\ge N_{M'}$, $a_{i,n}<\exp(M'),\,i=1,\ldots,r$, and hence $\frac{1}{n}\log\sum_{i=1}^r a_{i,n}<\frac{1}{n}\log r+M'$. Thus
$\limsup_{n\to\infty}\frac{1}{n}\log\sum_{i=1}^r a_{i,n}\le M'$. Since this is true for all $M'>M$, the assertion follows.
\end{proof}

\subsection{The generalized maximum likelihood error}
\label{sec:ML}

In the classical state discrimination problem, where the hypotheses are represented by non-negative functions
$A_i:\,\X\to\bR_+$ on some finite set $\X$, the optimal success probability is known to be
$\sum_x\max\{A_1,\ldots,A_r\}$, and it is achieved by the maximum likelihood measurement (see Appendix \ref{sec:classical} for details). 
If we consider the $A_i$ as diagonal operators in some fixed basis, then $P_s^*$
can be rewritten as
\begin{align}\label{classical success}
P_s^*(A_1,\ldots,A_r)=\Tr\max\{A_1,\ldots,A_r\},
\end{align}
where $\max\{A_1,\ldots,A_r\}$ is the operator with $\max_i A_i(x)$ in its diagonals.
Note that this is not a maximum in the usual sense of PSD ordering;
indeed, it is well-known that the PSD ordering does not induce a lattice structure \cite{ando99}, so in general
the set of upper bounds to $r$ given self-adjoint operators $A_1,\ldots,A_r$, which is defined as $\A:=\{Y: Y\ge A_k,\, k=1,\ldots,r\}$,
has no minimal element, not even when the $A_k$ mutually commute; see, e.g. Example \ref{ex:no max} in Appendix \ref{sec:LUB}.
However, there is a unique minimal element within $\A$ in terms of the \emph{trace ordering}.
We can therefore define a least upper bound in this more restrictive sense as
\be
\lub(A_1,\ldots,A_r) := \argmin_{Y}\{\trace Y: Y\ge A_k,\, k=1,\ldots,r\}.\label{eq:defLUB}
\ee
For the proof of uniqueness, see Appendix \ref{sec:LUB}.
In a similar vein we can define the \textit{greatest lower bound} (GLB) as
\begin{equation}
\glb(A_1,\ldots,A_r) := \argmax_Y\{\trace Y: Y\le A_k,\, k=1,\ldots,r\}.\label{eq:defGLB}
\end{equation}
Clearly, we have
\begin{equation}
\glb(A_1,\ldots,A_r)  = -\lub(-A_1,\ldots,-A_r).\label{eq:LUBGLB}
\end{equation}
For further properties of the above notions, see Appendix \ref{sec:LUB}.

Note that the set of $k$-outcome POVMs forms a convex set, and the optimal success probability in \eqref{success prob} is given as the maximum 
of a linear functional over this convex set.
It was shown in \cite{ykl} that the duality of convex optimization yields
\begin{align}\label{dual formulation}
P_s^*(A_1,\ldots,A_r)=\min\{\Tr Y: Y\ge A_k,\, k=1,\ldots,r\}
\end{align}
(see also \cite{KRS} for a different formulation of the same result). Using the definition of the LUB above, this can be rewritten as
\begin{align}\label{ml success}
P_s^*(A_1,\ldots,A_r)=\Tr\lub(A_1,\ldots,A_r),
\end{align}
in complete analogy with the classical case \eqref{classical success}. For readers' convenience, we provide a detailed derivation of 
\eqref{dual formulation} in Appendix \ref{sec:semidef}.

For an ensemble of PSD operators $\{A_i\}_{i=1}^r$,
we define the \textit{complementary operator} of $A_i$ as the
operator given by the sum of all other  operators in the ensemble:
\begin{align*}
\bar A _i:=\sum_{j\neq i}A_j=A_0-A_i,
\end{align*}
where $A_0=\sum_i A_i$.
The optimal error probability can be expressed in terms of the GLB of the
complementary density operators:
\begin{equation}\label{ml error}
P_e^*(A_1,\ldots,A_r) = \trace\glb(\bar A_1,\ldots,\bar A_r).
\end{equation}
This is easy to show:
\begin{align*}
P_e^* %= 1-P_s^*
&=\min_{\{E_k\}} \sum_k \trace A_k \sum_{l:\,l\neq k} E_l
=\min_{\{E_k\}} \sum_l \trace E_l \sum_{k:\,k\neq l} A_k
=\min_{\{E_k\}} \sum_l \trace E_l \bar A_l
=-\max_{\{E_k\}} \sum_l \trace E_l(- \bar A_l)\\
&=
-\Tr\lub(-\bar A_1,\ldots,-\bar A_r)
=
\Tr\glb(\bar A_1,\ldots,\bar A_r),
\end{align*}
where we used \eqref{ml success}, \eqref{eq:LUBGLB}, and
that an optimal POVM can be chosen so that $E_1+\ldots+E_r=I$.
Note that this is in general different from $\trace\glb(A_1,\ldots,A_r)$, which is the minimal
i.e.\ worst-case success probability $P_{s,\min}=\min_{\{E_k\}} \sum_k \trace A_k E_k$.

In the binary case, i.e., when $r=2$,
we have
\begin{align*}
P_s^*(A_1,A_2)&=
\max\{\Tr A_1E+\Tr A_2(I-E):\,0\le E\le I\}
=
\Tr A_2+\max_{0\le E\le I}\Tr(A_1-A_2)E\\
&=
\Tr A_2+\Tr(A_1-A_2)_+
=
\half\Tr(A_1+A_2)+\half\norm{A_1-A_2}_1,
\end{align*}
and the maximum is attained at $E=\{A_1-A_2>0\}$;
this is the so-called Holevo-Helstr\"om measurement \cite{Holevo78,Helstrom}.
Consequently, we have
\begin{align}
P_e^*(A_1,A_2)&=
\min\{\Tr A_1(I-E)+\Tr A_2E:\,0\le E\le I\}\label{minimum error2}\\
&=
\Tr(A_1+A_2)-P_s^*(A,A_2)=\half\Tr(A_1+A_2)-\half\norm{A_1-A_2}.\label{minimum error}
\end{align}

Comparing these with \eqref{ml success} and \eqref{ml error}, and noting that in the binary
case $\comp{A}_1=A_2,\,\comp{A}_2=A_1$, we obtain
\begin{align*}
\Tr\lub(A_1,A_2)&=\half\Tr(A_1+A_2)+\half\norm{A_1-A_2}_1,\\
\Tr\glb(A_1,A_2)&=\half\Tr(A_1+A_2)-\half\norm{A_1-A_2}_1.
\end{align*}
For a more straightforward way to derive these identities, see Appendix \ref{sec:LUB}.

In the rest of the paper, we will use the notations $P_e^*(A,B),\,\Tr\glb(A,B)$ and
$\half\Tr(A+B)-\half\norm{A-B}_1$ interchangeably for PSD operators $A,B$.

\subsection{Chernoff bound for binary state discrimination}\label{sec:chernoff}

For PSD operators $A,B$ on the same Hilbert space, define
\begin{align}
Q_s(A\|B) &:= \trace A^sB^{1-s},\ds\ds\ds s\in\bR,\nonumber\\
Q_{\min}(A,B) &:= \min_{0\le s\le 1} Q_s(A\|B), %\label{Qmin}
\\
\ch{A}{B} &:= -\log Q_{\min}(A,B).\nonumber
\end{align}
The last quantity, $\ch{A}{B}$ is the \ki{Chernoff divergence} of $A$ and $B$.
As it was shown in Theorem 1 in \cite{Aud} (see also \cite{ANSzV,sandwich}),
\begin{align}\label{Aud ineq}
\half\Tr(A+B)-\half\norm{A-B}_1\le Q_s(A\|B),\ds\ds\ds s\in[0,1].
\end{align}

Consider now the generalized asymptotic binary hypothesis testing problem with hypotheses
$\vecc{A}_1,\vecc{A_2}$. By \eqref{Aud ineq}, we have
\begin{equation}\label{Aud error bound}
P_e^*(A_{1,n},A_{2,n})
=
\half\Tr(A_{1,n}+A_{2,n})-\half\norm{A_{1,n}-A_{2,n}}_1
\le Q_s(A_{1,n}\|A_{2,n})
\end{equation}
for every $n\in\bN$ and $s\in[0,1]$, and hence,
\begin{equation}\label{Chernoff upper}
\pls(\vecc{A}_1,\vecc{A_2})=
\limsup_{n\to\infty}\frac{1}{n}\log P_e^*(A_{1,n},A_{2,n})\le
-C(\vecc{A}_1,\vecc{A_2}),
\end{equation}
where
\begin{equation}\label{regularized Chernoff}
C(\vecc{A}_1,\vecc{A_2}):=
-\limsup_{n\to\infty}\frac{1}{n}\log Q_{\min}(A_{1,n},A_{2,n})
=
\liminf_{n\to\infty}\frac{1}{n}\ch{A_{1,n}}{A_{2,n}}
\end{equation}
is the \ki{regularized Chernoff divergence}.

In the i.i.d.~case, i.e., when $A_{1,n}=p_1\sigma_1^{\otimes n}$ and
$A_{2,n}=p_2\sigma_2^{\otimes n}$ for every $n\in\bN$, we have
$Q_s(A_{1,n}\|A_{2,n})=p_1^sp_2^{1-s}Q_s(\sigma_1\|\sigma_2)^n\le \max\{p_1,p_2\}Q_s(\sigma_1\|\sigma_2)^n$, and \eqref{Aud error bound} yields
\begin{equation*}
P_e^*(p_1\sigma_1^{\otimes n},p_2\sigma_2^{\otimes n})
\le
\max\{p_1,p_2\}Q_{\min}(\sigma_1,\sigma_2)^n=
\max\{p_1,p_2\}\exp\bz-n\ch{\sigma_1}{\sigma_2}\jz
\end{equation*}
for every $n\in\bN$. In particular,
\begin{equation*}
\limsup_{n\to\infty}\frac{1}{n}\log P_e^*(p_1\sigma_1^{\otimes n},p_2\sigma_2^{\otimes n})\le
-\ch{\sigma_1}{\sigma_2}.
\end{equation*}
(Note that in this case $C(\vecc{A}_1,\vecc{A_2})=\ch{\sigma_1}{\sigma_2}$).

The above argument shows that the asymptotic Chernoff divergence (which is equal to the
single-shot Chernoff divergence in the i.i.d.~case) is an achievable error rate. Optimality
means that no faster exponential decay of the optimal error is possible, i.e., that
\begin{equation*}
\pli(\vecc{A}_1,\vecc{A_2})=
\liminf_{n\to\infty}\frac{1}{n}\log P_e^*(A_{1,n},A_{2,n})\ge
-C(\vecc{A}_1,\vecc{A_2}).
\end{equation*}
This was shown to be true in the i.i.d.~case in \cite{NSz}.
Optimality for various correlated scenarios was obtained in \cite{HMO,HMO2,HMH,MHOF,M};
the classes of states covered include Gibbs states of finite-range translation-invariant interactions on a spin chain, and 
thermal states of non-interacting bosonic and fermionic lattice systems.

\subsection{Pairwise discrimination and lower decoupling bounds}
\label{sec:pair}

Consider the generalized state discrimination problem with hypotheses $A_1,\ldots,A_r$.
In this section we %provide a
review a lower bound on the optimal error probability for discriminating between $r$ given states
in terms of the optimal \textit{pairwise} error probabilities, originally given in \cite{Qiu}.
Let us thereto define the following quantities:
\begin{eqnarray}
P_{s,2}^* \s:= \s
P_{s,2}^*(A_1,\ldots,A_r)
&:=& \frac{1}{r-1} \sum_{(k,l):\,k< l} P_s^*(A_k,A_l) = \frac{1}{r-1} \sum_{(k,l):\,k< l} \trace\lub(A_k,A_l) \nonumber \\
&=& \frac{1}{r-1} \sum_{(k,l):\,k< l} \half\bz \trace(A_k+A_l)+\norm{A_k-A_l}_1\jz.\label{eq:defps2}
\end{eqnarray}
and
\begin{eqnarray}
P_{e,2}^* \s :=\s
P_{e,2}^*(A_1,\ldots,A_r) &:=& \frac{1}{r-1} \sum_{(k,l):\,k< l} P_e^*(A_k,A_l) = \frac{1}{r-1} \sum_{(k,l):\,k< l} \trace\glb(A_k,A_l) \nonumber \\
&=& \frac{1}{r-1} \sum_{(k,l):\,k< l} \half\bz \trace(A_k+A_l)-\norm{A_k-A_l}_1\jz.\label{eq:defpe2}
\end{eqnarray}
Note that
\begin{equation*}
P_{s,2}^*(A_1,\ldots,A_r)+P_{e,2}^*(A_1,\ldots,A_r)=\sum_i\Tr A_i=P_{s}^*(A_1,\ldots,A_r)+P_{e}^*(A_1,\ldots,A_r),
\end{equation*}
explaining the choice $1/(r-1)$ for the normalization.

In the case of weighted states, i.e., when $\Tr A_0=1$,
we can interpret these quantities as optimal success and error probabilities in a very special setting, whereby the receiver
can make use of a particular kind of side information. We shall assume that this side information has been provided by an oracle.
The oracle knows the correct value of each symbol sent out by the source, but in the best of oracular traditions, does not quite
reveal this information to the receiver.
Rather, the oracle provides the receiver with a choice of two symbols,
one of which is the correct one and the other
is chosen from the remaining values at random, with uniform probability $1/(r-1)$.
It is intuitively plausible that the receiver should only try to discriminate between the two options provided.

The optimal success probability in this setup can easily be calculated.
From the receiver's viewpoint, the probability that the values of the symbols provided by the oracle are
$k$ and $l$ (with $k\neq l$) is $p_k(r-1)^{-1}+(r-1)^{-1}p_l$, and the conditional probability
that $k$ is the correct one is $p_k/(p_k+p_l)$.
Hence, the receiver's optimal success probability will be
$$
P_{s,2}^*= \sum_{(k,l):\,k< l} \frac{p_k+p_l}{r-1}\;\;\;
\half\bz 1+\norm{\frac{p_k}{p_k+p_l}\sigma_k-\frac{p_l}{p_k+p_l}\sigma_l}_1\jz,
$$
which simplifies to (\ref{eq:defps2}).

It is intuitively clear that this oracle-assisted success probability should never be smaller than the unassisted
optimal success probability, whereby the receiver needs to discriminate between all $r$ possible symbols.
The following Theorem, first given in \cite{Qiu}, shows that this is indeed the case.
Here we give a detailed and slightly simplified proof for readers' convenience. We also provide a different proof and a strengthening of \eqref{ineq:Qiu}
in Theorem \ref{th:dich}.

\begin{theorem}\label{th:41}
For any $A_1,\ldots,A_r\in\B(\hil)_+$,
\begin{equation}\label{ineq:Qiu}
P_s^*(A_1,\ldots,A_r) \le P_{s,2}^*(A_1,\ldots,A_r),\ds\ds\mbox{ and } \ds\ds P_e^*(A_1,\ldots,A_r)\ge P_{e,2}^*(A_1,\ldots,A_r).
\end{equation}
\end{theorem}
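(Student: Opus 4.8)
The plan is to prove the success-probability inequality $P_s^*(A_1,\ldots,A_r)\le P_{s,2}^*(A_1,\ldots,A_r)$ directly; the error-probability statement then follows at once from the identity $P_s^*+P_e^*=\sum_i\Tr A_i=P_{s,2}^*+P_{e,2}^*$ recorded just above the theorem, since it gives $P_e^*-P_{e,2}^*=P_{s,2}^*-P_s^*$.

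First I would fix an optimal $r$-outcome POVM $\{E_i\}_{i=1}^r$, so that $P_s^*(A_1,\ldots,A_r)=\sum_{i=1}^r\Tr A_iE_i$. The key point is that a single $r$-ary measurement restricts usefully to each binary subproblem: for any pair $k\neq l$, positivity of the POVM elements gives $E_k+E_l\le E_1+\cdots+E_r\le I$, hence $E_l\le I-E_k$, and since $A_l\ge 0$ this yields $\Tr A_lE_l\le\Tr A_l(I-E_k)$. Consequently
\begin{align*}
\Tr A_kE_k+\Tr A_lE_l\le\Tr A_kE_k+\Tr A_l(I-E_k)\le P_s^*(A_k,A_l),
\end{align*}
where the last inequality holds because $0\le E_k\le I$ is a feasible binary POVM for $(A_k,A_l)$ and $P_s^*(A_k,A_l)$ is the maximum of $\Tr A_kE+\Tr A_l(I-E)$ over all such $E$ (the two-outcome formula from Section~\ref{sec:ML}).

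Then I would sum this pairwise bound over all $\binom{r}{2}$ unordered pairs $\{k,l\}$. Each index $i\in\{1,\ldots,r\}$ lies in exactly $r-1$ of these pairs, so the left-hand side sums to $(r-1)\sum_{i=1}^r\Tr A_iE_i=(r-1)P_s^*(A_1,\ldots,A_r)$, while the right-hand side is $\sum_{k<l}P_s^*(A_k,A_l)$. Dividing by $r-1$ gives exactly $P_s^*(A_1,\ldots,A_r)\le\frac{1}{r-1}\sum_{k<l}P_s^*(A_k,A_l)=P_{s,2}^*(A_1,\ldots,A_r)$. The proof is short and uses nothing beyond operator positivity, the two-outcome formula for $P_s^*$, and a double count of pairs; the only step that genuinely carries the argument — and hence the one to state carefully — is the inequality $E_k+E_l\le I$ for POVM elements, which is what allows the optimal global measurement to be compared simultaneously against all the pairwise-optimal ones.
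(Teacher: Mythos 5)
Your proposal is correct and is essentially the paper's own argument: both bound $\Tr A_kE_k+\Tr A_lE_l$ by $\Tr A_kE_k+\Tr A_l(I-E_k)\le P_s^*(A_k,A_l)$ using $E_l\le I-E_k$, then double-count over pairs and divide by $r-1$, with the error-probability statement following from $P_s^*+P_e^*=\Tr A_0=P_{s,2}^*+P_{e,2}^*$. The only cosmetic difference is that the paper writes the pairwise optimum explicitly as $\half\bz\Tr(A_k+A_l)+\norm{A_k-A_l}_1\jz$ and sums over ordered pairs, which changes nothing of substance.
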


\begin{proof}
First notice that
\begin{align*}
\sum_{(k,l):\,k\ne l}\bz \Tr A_k E_k+\Tr A_l E_l\jz=
2(r-1)\sum_{k=1}^r \Tr A_k E_k
\end{align*}
and
\begin{align*}
\Tr A_k E_k+ \Tr A_l E_l
&\le \Tr A_k E_k+\Tr A_l\bz I-E_k\jz
=   \trace A_l+\Tr\bz A_k-A_l\jz E_k\\
&\le \trace A_l+\Tr\bz A_k-A_l\jz_+
= \half(\trace(A_k+A_l)+\norm{A_k-A_l}_1).
\end{align*}
Hence,
\begin{align*}
\sum_{k=1}^r \Tr A_k E_k&=
\frac{1}{2(r-1)}\sum_{(k,l):\,k\ne l}\bz \Tr A_k E_k+ \Tr A_l E_l\jz \\
&\le
\frac{1}{4(r-1)}\sum_{(k,l):\,k\ne l}\bz \trace(A_k+A_l)+\norm{A_k-A_l}_1\jz,
\end{align*}
which yields the first assertion.
The second assertion, $P_e^*\ge P_{e,2}^*$, is now obvious.
\end{proof}

We conjecture that for any choice of signal states and source probabilities
the oracle-assisted error probability can not be arbitrarily smaller than the unassisted one.
In particular, we believe:
\begin{conjecture}\label{claim:1}
There exists a constant $c$, only depending on the number of hypotheses $r$, such that for all $A_1,\ldots,A_r\ge0$
\begin{equation*}
P_e^*(A_1,\ldots,A_r) \le c P_{e,2}^*(A_1,\ldots,A_r).
\end{equation*}
\end{conjecture}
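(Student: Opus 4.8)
The plan is to reduce the conjecture to the single-shot upper decoupling bound \eqref{sh-main1} for weighted states, and then handle the extra scaling factors by hand. By homogeneity of both $P_e^*(A_1,\ldots,A_r)$ and $P_{e,2}^*(A_1,\ldots,A_r)$ under simultaneous rescaling $A_i\mapsto \lambda A_i$, we may assume $\Tr A_0=1$, i.e.\ that the $A_i=p_i\sigma_i$ form a set of weighted states. First I would combine \eqref{sh-main1}, which reads $P_e^*(A_1,\ldots,A_r)\le\sum_{i\ne j}\sqrt{p_i+p_j}\sqrt{P_e^*(A_i,A_j)}$, with the Cauchy--Schwarz inequality applied to the sum over the $\binom{r}{2}$ unordered pairs: writing $P_e^*(A_i,A_j)=\trace\glb(A_i,A_j)$ and summing the ordered pairs as twice the unordered sum, one gets
\begin{align*}
P_e^*(A_1,\ldots,A_r)\le 2\sum_{(i,j):\,i<j}\sqrt{p_i+p_j}\sqrt{P_e^*(A_i,A_j)}
\le 2\left(\sum_{(i,j):\,i<j}(p_i+p_j)\right)^{1/2}\left(\sum_{(i,j):\,i<j}P_e^*(A_i,A_j)\right)^{1/2}.
\end{align*}
The first factor is $\sum_{i<j}(p_i+p_j)=(r-1)\sum_i p_i=r-1$ since $\Tr A_0=1$, and the second factor is $\sqrt{(r-1)\,P_{e,2}^*(A_1,\ldots,A_r)}$ by definition \eqref{eq:defpe2}. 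Hence $P_e^*\le 2(r-1)\sqrt{P_{e,2}^*}$.

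The remaining obstacle — and the genuinely nontrivial point — is removing the square root so as to obtain a \emph{linear} bound $P_e^*\le c\,P_{e,2}^*$. The square-root bound above is useless when $P_{e,2}^*$ is small, which is precisely the regime of interest for the asymptotic Chernoff exponent. The idea would be to observe that $P_{e,2}^*\le P_e^*$ always (this is Theorem \ref{th:41}, the reverse inequality), so whenever $P_e^*\le 1$ we could write $\sqrt{P_{e,2}^*}\le \sqrt{P_e^*}$... but that goes the wrong way. Instead I would try to prove the linear version directly at the single-shot level: establish a sharpened form of \eqref{sh-main1} without the square root in full generality, i.e.\ $P_e^*(A_1,\ldots,A_r)\le \sum_{i\ne j} c_{ij}\,P_e^*(A_i,A_j)$ for some weights $c_{ij}$ depending only on $r$ (and not on the $p_i$ once we normalize). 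The natural route is the dichotomic-error approach announced in Section~\ref{sec:dichotomic}: bound $P_e^*$ above by the sum over $i$ of the binary errors for ``hypothesis $i$ versus $\bar A_i$'', then bound each such error by $\sum_{j\ne i}P_e^*(A_i,A_j)$ using subadditivity of $\glb$ / the triangle inequality for the trace norm, namely $\trace\glb(A_i,\bar A_i)=\tfrac12\Tr A_0-\tfrac12\|A_i-\sum_{j\ne i}A_j\|_1\le \sum_{j\ne i}\bigl(\tfrac12\Tr(A_i+A_j)-\tfrac12\|A_i-A_j\|_1\bigr)$ after suitably splitting the mass of $A_i$.

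The main difficulty I anticipate is exactly this last inequality: the naive split of $A_i$ into $r-1$ pieces introduces a factor that grows with $r$ and, more seriously, the inequality $\tfrac12\|A_i-\sum_{j\ne i}A_j\|_1 \ge \sum_{j}\tfrac12\|A_{i}/(r-1)-A_j\|_1 - (\text{correction})$ need not hold with a clean constant, because the trace norm is not superadditive in the way one would want. A cleaner alternative, which I would pursue if the direct splitting fails, is to prove the conjecture only for $c=c(r)$ of the form $O(r^2)$ or even $O(r^{3/2})$ via the rank-one-style bound \eqref{sh-main2} after a purification/flattening trick — but that requires controlling the price of replacing mixed states by pure ones, which is not available in general. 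So the honest assessment is: the square-root bound $P_e^*\le 2(r-1)\sqrt{P_{e,2}^*}$ is immediate from the tools in the excerpt; promoting it to a linear bound is the real content of the conjecture, and I expect it to hinge on finding the right single-shot decoupling of $P_e^*$ into genuine \emph{pairwise} binary errors (rather than ``one-against-the-rest'' errors), which is exactly where the known techniques currently stop short — consistent with the fact that the paper states this as a conjecture rather than a theorem.
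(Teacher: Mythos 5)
You do not prove the statement, and you are right not to claim otherwise: in the paper this is an open conjecture, not a theorem, so there is no ``paper proof'' to match. What the paper has are partial results and reductions: Theorem \ref{th:dich} sandwiches the optimal error as $P_{e,2}^*\le\half P^*_{e,dich}\le P_e^*\le P^*_{e,dich}$, so Conjecture \ref{claim:1} would follow from Conjecture \ref{claim:2} (that $P^*_{e,dich}\le c(r)\,P_{e,2}^*$), which in turn would follow from Conjecture \ref{claim:3}; the linear bound is actually proved only in the commuting case (Appendix \ref{sec:classical}, with $c=2(r-1)$, via \eqref{classical decoupling}) and asserted for pure states (the explicit rank-one bounds such as \eqref{pure dich upper} and \eqref{nussbaum pure} still carry prior-dependent factors like $\Tr A_0/\min_i\Tr A_i$). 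For general mixed states the best available single-shot statements are exactly of the square-root type you exploit. Your intermediate bound $P_e^*\le 2(r-1)\sqrt{P_{e,2}^*}$ under the normalization $\Tr A_0=1$ is a correct consequence of \eqref{sh-main1} and Cauchy--Schwarz (just note it is not scale-invariant, unlike the conjectured linear bound, so it genuinely lives only on weighted states), and your diagnosis that removing the square root is the real content of the conjecture coincides with the paper's own assessment.

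One concrete correction to the repair route you sketch at the end: the step $\trace\glb(A_i,\bar A_i)\le c\sum_{j\ne i}\trace\glb(A_i,A_j)$ is not merely delicate, it is provably false in general. The paper exhibits a two-dimensional counterexample (end of Section \ref{sec:averaged}): with $A=\ep\pr{\psi_1}$, $B_1=\pr{\psi_2}$, $B_2=\pr{\psi_3}$ and $\sin^2\alpha=\ep/2$, one gets $\trace\glb(A,B_1+B_2)=\ep$ while $\trace\glb(A,B_j)\approx\ep^2/2$, so no constant works; moreover, the direct-sum trick of Lemma \ref{lub direct sum} rules out fixing this by any factor $f(\Tr A)$. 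This is precisely why the paper formulates Conjecture \ref{claim:3} with the cross terms $\trace\glb(A_k,A_l)$, $k,l\ne i$, on the right-hand side: in the counterexample those terms are of order one and absorb the discrepancy, whereas the pure one-against-the-rest decomposition cannot. So your honest conclusion stands, but if you pursue the dichotomic route further, the cross-term version \eqref{eq:claim3} (or its POVM reformulation \eqref{eq:claim3b}) is the form that remains open, not the subadditivity you proposed.
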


We have ample numerical evidence for this conjecture, and this evidence suggests that $c=4(r-1)$.
Several approaches towards a proof will be provided in the next sections.

\subsection{Inequalities for various operator distinguishability measures}

We will often benefit from inequalities between various operator distinguishability measures. In particular, 
we will use inequalities between the optimal binary error, the Chernoff divergence, and the fidelity.
For positive semidefinite operators $A,B\in\B(\hil)_+$, their \ki{fidelity} is defined as
\begin{equation}\label{fidelity def}
F(A,B):=\norm{A^{1/2}B^{1/2}}_1=\trace(A^{1/2}BA^{1/2})^{1/2}.
\end{equation}

The following bounds between the fidelity and the trace-norm were shown in \cite{FvdG} for
states, and extended to weighted states in \cite{sandwich}, where also the sharpness
of the inequalities $\half\Tr(A+B)-\half\norm{A-B}_1\le F(A,B)\le
\sqrt{\frac{1}{4}\bz\Tr (A+B)\jz^2-\frac{1}{4}\norm{A-B}_1^2}$
has been shown. The proof for the general case can be obtained exactly the same way as in the
above cases.
\begin{lemma}\label{lemma:FT bounds}
For any $A,B\in\B(\hil)_+$,
\begin{align}
P_e^*(A,B)&=\Tr\glb(A,B)=\half\Tr(A+B)-\half\norm{A-B}_1\le F(A,B)\label{fid lower}\\
&\le
\sqrt{\frac{1}{4}\bz\Tr (A+B)\jz^2-\frac{1}{4}\norm{A-B}_1^2}
=
\sqrt{\Tr\lub(A,B)}\sqrt{\Tr\glb(A,B)}\nonumber\\
&\le
\sqrt{\Tr(A+B)}\sqrt{\Tr\glb(A,B)}
=
\sqrt{\Tr(A+B)}\sqrt{P_e^*(A,B)}.
\nonumber
\end{align}
\end{lemma}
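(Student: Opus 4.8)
I would first strip away the equalities in the chain, all of which are already available. The block $P_e^*(A,B)=\Tr\glb(A,B)=\half\Tr(A+B)-\half\norm{A-B}_1$ is \eqref{minimum error2}--\eqref{minimum error}; the identity $\sqrt{\frac14\bz\Tr(A+B)\jz^2-\frac14\norm{A-B}_1^2}=\sqrt{\Tr\lub(A,B)}\sqrt{\Tr\glb(A,B)}$ is nothing but the binary formulas $\Tr\lub(A,B)=\half\Tr(A+B)+\half\norm{A-B}_1$, $\Tr\glb(A,B)=\half\Tr(A+B)-\half\norm{A-B}_1$ derived in Section~\ref{sec:ML}, where $\Tr\glb(A,B)=P_e^*(A,B)$ was also recorded. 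This leaves three inequalities: (i)~$\half\Tr(A+B)-\half\norm{A-B}_1\le F(A,B)$; (ii)~$F(A,B)\le\sqrt{\frac14\bz\Tr(A+B)\jz^2-\frac14\norm{A-B}_1^2}$; and (iii)~$\Tr\lub(A,B)\le\Tr(A+B)$, the last of which, after multiplication by $\sqrt{\Tr\glb(A,B)}=\sqrt{P_e^*(A,B)}$, gives the final inequality of the statement.

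Inequality (iii) is immediate: $A,B\ge0$ forces $\norm{A-B}_1\le\norm{A}_1+\norm{B}_1=\Tr(A+B)$, so $\Tr\lub(A,B)=\half\Tr(A+B)+\half\norm{A-B}_1\le\Tr(A+B)$. For (i) I would specialize the Audenaert inequality \eqref{Aud ineq} to $s=\half$, giving $\half\Tr(A+B)-\half\norm{A-B}_1\le Q_{1/2}(A\|B)=\Tr A^{1/2}B^{1/2}$, and then use the trivial bound $\Tr A^{1/2}B^{1/2}\le\abs{\Tr A^{1/2}B^{1/2}}\le\norm{A^{1/2}B^{1/2}}_1=F(A,B)$. (Equivalently, (i) is the rearranged Powers--St{\o}rmer inequality $\norm{A^{1/2}-B^{1/2}}_2^2\le\norm{A-B}_1$ together with $\Tr A^{1/2}B^{1/2}\le F(A,B)$.)

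The inequality carrying real content is (ii), and for it I would invoke Uhlmann's theorem in its form for arbitrary PSD operators: purifications of $A$ and $B$ are vectors $\ket{\psi},\ket{\phi}\in\hil\otimes\kil$ with $\Tr_{\kil}\pr{\psi}=A$, $\Tr_{\kil}\pr{\phi}=B$ (hence $\norm{\psi}^2=\Tr A$, $\norm{\phi}^2=\Tr B$), and $F(A,B)=\max\abs{\inpr{\psi}{\phi}}$ over all such pairs on a common ancilla. Picking the maximizing pair, with phases chosen so that $\inpr{\psi}{\phi}=F(A,B)\ge0$, a direct computation on the (at most two-dimensional) space $\spa\{\ket{\psi},\ket{\phi}\}$ gives $\Tr(\pr{\psi}-\pr{\phi})=\Tr A-\Tr B$ and $\Tr[(\pr{\psi}-\pr{\phi})^2]=(\Tr A)^2+(\Tr B)^2-2F(A,B)^2$, so the two eigenvalues of $\pr{\psi}-\pr{\phi}$ sum to $\Tr A-\Tr B$ and have product $F(A,B)^2-\Tr A\,\Tr B$, which is $\le0$ because $F(A,B)\le\sqrt{\Tr A}\sqrt{\Tr B}$ by Cauchy--Schwarz for the Hilbert--Schmidt inner product. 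A non-positive product then yields $\norm{\pr{\psi}-\pr{\phi}}_1=\sqrt{(\Tr A-\Tr B)^2-4\bz F(A,B)^2-\Tr A\,\Tr B\jz}=\sqrt{\bz\Tr(A+B)\jz^2-4F(A,B)^2}$. Since the partial trace is a trace-preserving positive map, hence a trace-norm contraction, $\norm{A-B}_1\le\norm{\pr{\psi}-\pr{\phi}}_1$; squaring and rearranging gives exactly (ii).

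The only point requiring attention --- and the reason the lemma is stated for weighted rather than normalized operators --- is that Uhlmann's theorem, the trace-norm contractivity of the partial trace, the bound $\abs{\Tr X}\le\norm X_1$, and \eqref{Aud ineq} must be applied to non-normalized PSD operators. Each of these extends to that setting without change; for the fidelity facts one may alternatively rescale, via $F(A,B)=\sqrt{\Tr A}\sqrt{\Tr B}\,F(A/\Tr A,B/\Tr B)$, to reduce to the density-operator case of \cite{FvdG,sandwich}. Granting these routine extensions, there is no genuine obstacle: (i) and (iii) are one-liners and (ii) is the short purification argument above.
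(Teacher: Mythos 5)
Your proof is correct. The paper does not spell out a proof of this lemma---it cites \cite{FvdG} for states and \cite{sandwich} for weighted states and asserts that the general case follows in the same way---and your argument is exactly that standard route: Uhlmann's theorem together with trace-norm contractivity of the partial trace for the upper bound, and the inequality \eqref{Aud ineq} at $s=\half$ (equivalently Powers--St{\o}rmer) for the lower bound, so it is essentially the same approach as the proofs the paper relies on.
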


When $A$ is rank one, we also have the following inequality. This has been stated as an exercise in \cite{NC} for states; 
we provide a proof here for readers' convenience.
\begin{lemma}\label{lem:FTpure}
Let $A,B\in\B(\hil)_+$ and assume that $A$ has rank one. Then
\begin{align}\label{F^2 bound}
P_e^*(A,B)=\Tr\glb(A,B)&=\half\Tr(A+B)-\half\norm{A-B}_1
\le\frac{1}{\Tr A} F(A,B)^2.
\end{align}
\end{lemma}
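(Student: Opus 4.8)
The plan is to reduce everything to the case where $A = \pr{\psi}$ for some vector $\psi$ (not necessarily normalized), and then compute all the quantities appearing in \eqref{F^2 bound} explicitly. Write $a := \Tr A = \inner{\psi}{\psi}$. Since $A$ is a positive rank-one operator, $A = \pr{\psi}$ with $\inner{\psi}{\psi} = a$, and then $F(A,B)^2 = \norm{A^{1/2}B^{1/2}}_1^2$; because $A^{1/2} = \frac{1}{\sqrt a}\pr{\psi}$, one has $A^{1/2}B^{1/2} = \frac{1}{\sqrt a}\ket{\psi}\bra{B^{1/2}\psi}$, which is rank one, so its trace norm is just the product of the norms of the two vectors: $F(A,B)^2 = \frac{1}{a}\inner{\psi}{\psi}\cdot\inner{\psi}{B\psi} = \inner{\psi}{B\psi}$. (One should be slightly careful about the convention that powers are taken on the support, but since we only ever need $A^{1/2}$ restricted to $\spa\{\psi\}$ this causes no trouble.) So the right-hand side of \eqref{F^2 bound} equals $\frac{1}{a}\inner{\psi}{B\psi}$.

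For the left-hand side, I would use the formula $P_e^*(A,B) = \half\Tr(A+B) - \half\norm{A-B}_1$ from Lemma \ref{lemma:FT bounds}. Equivalently, it is cleaner to use $P_e^*(A,B) = \Tr(A+B) - P_s^*(A,B)$ together with $P_s^*(A,B) = \Tr B + \Tr(A-B)_+$ (the binary Holevo--Helstr\"om formula derived in Section \ref{sec:ML}), which gives $P_e^*(A,B) = \Tr A - \Tr(A-B)_+$. Now I need a lower bound on $\Tr(A-B)_+$ when $A = \pr\psi$. The key step is the variational characterization $\Tr(A-B)_+ = \max\{\Tr(A-B)E : 0\le E\le I\}$. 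Choosing the particular feasible operator $E = \frac{1}{a}\pr\psi$ (which satisfies $0 \le E \le I$ since $a = \inner{\psi}{\psi}$), we get
\begin{align*}
\Tr(A-B)_+ \ge \Tr(A-B)\tfrac{1}{a}\pr\psi = \tfrac{1}{a}\brac{\inner{\psi}{A\psi} - \inner{\psi}{B\psi}} = \tfrac{1}{a}\brac{a^2/a \cdot a - \inner{\psi}{B\psi}}\!,
\end{align*}
wait — more carefully, $\inner{\psi}{A\psi} = \inner{\psi}{\psi}^2 = a^2$, so $\tfrac{1}{a}\inner{\psi}{A\psi} = a$, and hence $\Tr(A-B)_+ \ge a - \tfrac{1}{a}\inner{\psi}{B\psi}$. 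Substituting into $P_e^*(A,B) = a - \Tr(A-B)_+$ yields $P_e^*(A,B) \le a - a + \tfrac{1}{a}\inner{\psi}{B\psi} = \tfrac{1}{a}\inner{\psi}{B\psi} = \tfrac{1}{\Tr A}F(A,B)^2$, which is exactly the claim.

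The main obstacle is essentially bookkeeping rather than conceptual: making sure the normalization factors of $1/a$ are tracked correctly through the identification $A^{1/2} = a^{-1/2}\pr\psi$ and through the two expressions for $P_e^*$, and checking that the trial projector $E = a^{-1}\pr\psi$ is indeed admissible (i.e. $\le I$), which is where the rank-one hypothesis is used crucially — for higher rank $A$ there is no single rank-one $E$ that captures all of $A$. If one prefers to avoid the variational argument, an alternative is to diagonalize $A - B$ in the two-dimensional (at most) picture relevant to $\psi$ and compute $\norm{A-B}_1$ directly, but the $E = a^{-1}\pr\psi$ trick is shorter and makes the role of the rank-one assumption transparent. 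One should also note the equality case is immediate when $B$ commutes appropriately with $A$, consistent with the remark after Lemma \ref{lemma:FT bounds} that these bounds are known to be sharp.
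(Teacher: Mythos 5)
Your proof is correct and is essentially the paper's own argument: the paper also plugs the trial test $E=\tilde A:=A/\Tr A$ (your $a^{-1}\pr{\psi}$) into the variational expression $P_e^*(A,B)=\min\{\Tr A(I-E)+\Tr BE:\,0\le E\le I\}$, which is the same optimization you perform via $\Tr(A-B)_+=\max_{0\le E\le I}\Tr(A-B)E$, and uses the same identity $F(A,B)^2=\Tr AB$ for rank-one $A$. The only difference is cosmetic bookkeeping through $P_s^*$ rather than $P_e^*$ directly.
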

\begin{proof}
Let $\tilde A:=A/\Tr A$.
The assumption that $A$ is rank one yields that
$\Tr A\tilde A=\Tr A$ and $F(A,B)=\sqrt{\Tr AB}$.
Using the representation \eqref{minimum error}, we get
\begin{align*}
\half\Tr(A+B)-\half\norm{A-B}_1&=
\min\{A(I-E)+\Tr BE:\,0\le E\le I\}\\
&\le
\Tr A(I-\tilde A)+\Tr B\tilde A
=
\frac{1}{\Tr A}\Tr AB
=
\frac{1}{\Tr A}F(A,B)^2.\qedhere
\end{align*}
\end{proof}

\begin{rem}
Monotonicity of the fidelity under the trace yields that
$F(A,B)\le(\Tr A)^{1/2}(\Tr B)^{1/2}$. If $\Tr B\le\Tr A$ then
$F(A,B)\le\Tr A$, or equivalently, $\frac{1}{\Tr A}F(A,B)^2\le F(A,B)$, and hence the upper
bound in \eqref{F^2 bound} is stronger than the inequality in \eqref{fid lower}.
This is the case, for instance, for states.
In general, however, the two bounds are not comparable.
\end{rem}

According to Theorem 6 in \cite{ANSzV}, for any $A,B\in\B(\hil)_+$,
\begin{align*}
F(A,B)^2\le \Tr A^t B^{1-t}(\Tr A)^{1-t}(\Tr B)^t,\ds\ds\ds t\in[0,1].
\end{align*}
In particular, for states $\rho,\sigma$,
\begin{align}\label{fid chernoff}
F(\rho,\sigma)^2\le Q_{\min}(\rho,\sigma)=\min_{0\le t\le 1}\Tr\rho^t\sigma^{1-t}.
\end{align}

\section{Upper bounds from suboptimal measurements}
\label{sec:Tyson}

Consider the generalized state discrimination problem with hypotheses
$A_i\in\B(\hil)_+,\,i=1,\ldots,r$. As before, we write $A_i=p_i\sigma_i$, with $\Tr\sigma_i=1$.
When the number of hypotheses $r$ is larger than $2$, there is no explicit expression known
for the optimal error probability $P_e^*(A_1,\ldots,A_r)$ in general.
Obviously, any measurement yields an upper bound on the optimal error probability, some of which are known to have the same 
asymptotics in the limit of infinitely many copies as the optimal error probability. Here we first review the pretty good 
measurement (PGM), and the bounds \eqref{BK1}--\eqref{HLS1} from \cite{barnumknill,HLS}.
Next, we consider the
square measurement (SM), and derive upper bounds on its optimal error probability.
These upper bounds sometimes outperform those of \eqref{BK1}--\eqref{HLS1}.

%Let $A_1,\ldots,A_r\in\B(\hil)_+$ represent the hypotheses as usual, and f
For every $\alpha\in\bR$, define the $\alpha$-weighted POVM $\cE^{(\alpha)}$ by
\begin{equation*}
E_k^{(\alpha)}:=S_{\alpha}^{-1/2}A_k^{\alpha}S_{\alpha}^{-1/2},\ds\ds\ds
S_{\alpha}:=\sum_k A_k^{\alpha}.
\end{equation*}
Note that if $A_k=\pr{\psi_k}$ for some vectors $\psi_k$ then
$E_k^{(\alpha)}=\pr{\psi_k^{(\alpha)}}$ with $\psi_k^{(\alpha)}:=S_{\alpha}^{-1/2}\norm{\psi_k}^{\alpha-1}\psi_k$, and
$\sum_k\pr{\psi_k^{(\alpha)}}=\bz\sum_k A_k\jz^0$. In particular, if $\psi_1,\ldots,\psi_r$ are linearly independent 
then $\psi_1^{(\alpha)},\ldots,\psi_r^{(\alpha)}$ is an orthonormal system, spanning the same subspace as the original $\psi$ vectors.
That is, the above procedure yields an orthogonalization of the original set of vectors, which is different from the 
Gram-Schmidt orthogonalization in general.

The case $\alpha=1$ yields the so-called \textit{pretty good measurement} (PG) \cite{PGM}.
Barnum and Knill \cite{barnumknill} have shown that in the case of weighted states, the success probability of the PG measurement is bounded
below by the square of the optimal success probability:
$(P_s^*)^2 \le P_s^{PG} \le P_s^*$, which in turn yields that
$\half P_e^{PG}\le P_e^*\le P_e^{PG}$. In particular, $P_e^*$ and $P_e^{PG}$ have the same exponential decay rate in the
asymptotic setting. 
\begin{thm}[Barnum and Knill]\label{th:BKbound}
\begin{align}\label{BK bound}
%P_e^*(A_1,\ldots,A_r)\le
P_e^{PG}\le \half\sum_{(i,j):\,i\ne j}F(A_i,A_j)=
\half\sum_{(i,j):\,i\ne j}\sqrt{p_ip_j}F(\sigma_i,\sigma_j).
\end{align}
\end{thm}
Actually, Theorem 4 in \cite{barnumknill} gives the upper bound in
\eqref{BK bound} without the $1/2$ pre-factor. 
We give a short proof of the improved bound in Appendix \ref{sec:Tysonproofs}.
This theorem immediately yields \eqref{BK1}.
It was shown in \cite{HLS} that when all the $A_i$ are rank one then
\begin{align}\label{HLS bound}
%P_e^*(A_1,\ldots,A_r)\le
P_e^{PG}\le
\half\sum_{(i,j):\,i\ne j}\frac{p_i^2+p_j^2}{p_i^2p_j^2} F(A_i,A_j)^2=
\half\sum_{(i,j):\,i\ne j}\frac{p_i^2+p_j^2}{p_ip_j} F(\sigma_i,\sigma_j)^2,
\end{align}
which yields \eqref{HLS1}.
\medskip

The case $\alpha=2$ yields the
\textit{square measurement} (SQ),
with POVM elements
$$
E_{SQ;k}=X_{SQ;k}^*X_{SQ;k},\ds\ds\ds
X_{SQ;k}=A_k \bz\sum_k A_k^2\jz^{-1/2}.
% S^{-1/2} A_k^2 S^{-1/2}.
$$
This type of measurement
has been used by various authors \cite{BM,Curlander,jezek,Holevo78}, and it
features in Tyson's bounds on the error probability \cite{Tyson}, which we briefly review below.
For a comprehensive overview of the use of the pretty good and the square measurements for state discrimination, see \cite{Tyson2}.

For any set $\{X_k\}$ of measurement operators (i.e., $\sum_k X_k^*X_k\le I$), let
\be
\Gamma(\{X_k\}) := \Tr A_0-\sum_{k=1}^r \norm{X_k A_k}_1.
\ee
Minimizing over all possible choices of $\{X_k\}$ yields the optimal value $\Gamma^*$:
\be
\Gamma^*:=\inf_{\{X_k\}: \sum_k X_k^*X_k=I} \Gamma(\{X_k\}).
\ee
The importance of this quantity $\Gamma$ comes from a combination of two facts.
First, it differs from the error probability $P_e$
only by a factor between 1 and at most 2. Hence, $\Gamma^*$ is a good approximation of $P_e^*$, especially in the asymptotic regime.
Moreover, unlike the optimal error probability, $\Gamma^*$ can be calculated explicitly by a closed-form expression.
This is the content of the following two theorems, first proven by Tyson \cite{Tyson}.
For completeness, we provide short proofs in Appendix \ref{sec:Tysonproofs}.

\begin{theorem}[Tyson]\label{th:Tyson}
Let $A_1,\ldots,A_r\in\B(\hil)_+$ and
$\{E_k=X_k^*X_k\}$ be a POVM. Then
% For any POVM $\{E_k=X_k^*X_k\}$, and any $A_i\ge0$ with $\sum_i \trace A_i=1$,
$$
\Gamma(\{X_k\}) \le P_e(\{E_k\}) \le 2\Gamma(\{X_k\}).
$$
In particular, for the optimal POVM and optimal $X_k$ that achieve the minimum:
\begin{equation}\label{Tyson bound}
\Gamma^* \le P_e^* \le 2\Gamma^*.
\end{equation}
\end{theorem}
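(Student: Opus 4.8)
The plan is to prove the two inequalities $\Gamma(\{X_k\}) \le P_e(\{E_k\})$ and $P_e(\{E_k\}) \le 2\Gamma(\{X_k\})$ separately, for an arbitrary but fixed POVM $\{E_k = X_k^* X_k\}$, by reducing each to a termwise estimate and then summing over $k$. Since $P_e(\{E_k\}) = \Tr A_0 - \sum_k \Tr A_k E_k$ and $\Gamma(\{X_k\}) = \Tr A_0 - \sum_k \norm{X_k A_k}_1$, both inequalities are really comparisons between $\sum_k \Tr A_k E_k$ and $\sum_k \norm{X_k A_k}_1$, and in each case the heart of the matter is a bound relating the single-index quantities $\Tr A_k E_k$ and $\norm{X_k A_k}_1$.

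First I would handle the lower bound $\Gamma\le P_e$, for which it suffices to show $\Tr A_k E_k \le \norm{X_k A_k}_1$ for each $k$. Using cyclicity of the trace, $\Tr A_k E_k = \Tr A_k X_k^* X_k = \Tr\big((X_k A_k)\,X_k^*\big)$, which is a nonnegative real number since $A_k, E_k \ge 0$, so it equals its absolute value; by H\"older's inequality for Schatten norms (the trace-norm/operator-norm case) this is at most $\norm{X_k A_k}_1\norm{X_k^*}_\infty = \norm{X_k A_k}_1\norm{X_k}_\infty$. The POVM condition $\sum_k X_k^* X_k \le I$ forces $X_k^* X_k \le I$, hence $\norm{X_k}_\infty \le 1$, which yields the termwise bound; summing over $k$ and rearranging gives $\Gamma(\{X_k\}) \le P_e(\{E_k\})$.

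For the upper bound, rearranging shows $P_e \le 2\Gamma$ is equivalent to $2\sum_k \norm{X_k A_k}_1 \le \Tr A_0 + \sum_k \Tr A_k E_k = \sum_k \Tr A_k(I + X_k^* X_k)$, so it is enough to prove $2\norm{X_k A_k}_1 \le \Tr A_k + \Tr A_k X_k^* X_k$ for each $k$. I would use the polar decomposition $X_k A_k = V_k\abs{X_k A_k}$ with $V_k$ a partial isometry, so that $\norm{X_k A_k}_1 = \Tr\abs{X_k A_k} = \Tr V_k^* X_k A_k = \Tr A_k V_k^* X_k$; this is again a nonnegative real, hence equals $\re\Tr A_k V_k^* X_k$. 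Expanding the inequality $0 \le \Tr A_k(V_k - X_k)^*(V_k - X_k)$, which holds because $A_k \ge 0$, gives $2\re\Tr A_k V_k^* X_k \le \Tr A_k V_k^* V_k + \Tr A_k X_k^* X_k$, and since $V_k^* V_k$ is an orthogonal projection we have $\Tr A_k V_k^* V_k \le \Tr A_k$, producing the desired termwise inequality. Summing over $k$ then gives $P_e(\{E_k\}) \le 2\Gamma(\{X_k\})$.

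Finally, \eqref{Tyson bound} follows by specialization: an optimal POVM for $P_e^*$ can be chosen with $\sum_k E_k = I$, and applying the lower bound to it (with $X_k = E_k^{1/2}$) gives $\Gamma^* \le P_e^*$; applying the upper bound to an arbitrary $\{X_k\}$ with $\sum_k X_k^* X_k = I$ and then taking the infimum gives $P_e^* \le 2\Gamma^*$. I expect the only point requiring care to be the polar-decomposition bookkeeping in the upper bound — namely that $\abs{X_k A_k} = V_k^* X_k A_k$ with $V_k^* V_k$ a genuine projection, which relies on the standard support considerations — together with, for the assertion about the optimal $\Gamma^*$, checking that $\norm{X_k A_k}_1$ depends only on $E_k = X_k^* X_k$ (via unitary invariance of the trace norm), so that $\Gamma$ is continuous on the compact set of POVMs and its infimum is attained.
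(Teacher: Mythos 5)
Your proof is correct, and for the lower bound it coincides with the paper's argument: both reduce to the termwise H\"older estimate $\Tr A_k X_k^* X_k \le \norm{X_k A_k}_1\,\norm{X_k}\le \norm{X_k A_k}_1$, using $E_k\le I$. For the upper bound you arrive at the same termwise inequality $2\norm{X_k A_k}_1 \le \Tr A_k + \Tr A_k E_k$ as the paper, but by a different mechanism: you complete an operator square, using the polar partial isometry $V_k$ of $X_k A_k$, the positivity of $\Tr A_k (V_k-X_k)^*(V_k-X_k)$, and $V_k^*V_k\le I$; the paper instead normalizes $\sigma_k:=A_k/\Tr A_k$, applies the Cauchy--Schwarz-type bound $\norm{X\sigma}_1^2=\norm{(X\sigma^{1/2})\sigma^{1/2}}_1^2\le \Tr(X^*X\sigma)$ (which uses $\norm{\sigma^{1/2}}_2^2=\Tr\sigma=1$), and then the scalar inequality $1-x^2\le 2(1-x)$. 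The two routes are essentially equivalent --- your completing-the-square step is the standard proof of that Cauchy--Schwarz bound in disguise --- but yours works directly with the unnormalized $A_k$ (no weighting by $\Tr A_k$ is needed) and makes the near-saturation condition transparent ($X_k$ close to $V_k$ on the support of $A_k$), while the paper's chain $1-\norm{X\sigma}_1\le 1-\Tr(X^*X\sigma)\le 1-\norm{X\sigma}_1^2\le 2(1-\norm{X\sigma}_1)$ additionally records the sharper intermediate bound $P_e(\{E_k\})\le \Tr A_0-\sum_k\norm{X_kA_k}_1^2/\Tr A_k$. Your final specialization (taking $X_k=E_k^{1/2}$ for an optimal POVM with $\sum_k E_k=I$ in one direction, and taking the infimum over $\{X_k\}$ with $\sum_k X_k^*X_k=I$ in the other) is exactly the paper's concluding step; the side remark about attainment of $\Gamma^*$ is not needed for \eqref{Tyson bound} and is in any case settled by Theorem \ref{th:Tyson2}, which exhibits the square measurement as an explicit minimizer.
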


\begin{theorem}[Tyson]\label{th:Tyson2}
Let $A_1,\ldots,A_r\in\B(\hil)_+$ and $A_0:=\sum_i A_i$. Then
\begin{equation}\label{T2}
\Gamma^* = \Tr A_0-\Tr\left(\sum_{i=1}^r A_i^2\right)^{1/2},
\end{equation}
with the optimal measurement operators being those of the SQ measurement.
\end{theorem}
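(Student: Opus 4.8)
The plan is to carry out a constrained optimization of the functional $\Gamma(\{X_k\}) = \Tr A_0 - \sum_k \norm{X_k A_k}_1$ over all measurement operators satisfying $\sum_k X_k^* X_k = I$, and to show that the minimum equals $\Tr A_0 - \Tr(\sum_i A_i^2)^{1/2}$, attained at the square measurement operators $X_{SQ;k} = A_k S^{-1/2}$ with $S := \sum_i A_i^2$. Since $\Tr A_0$ is a fixed constant, minimizing $\Gamma$ is the same as \emph{maximizing} $\sum_k \norm{X_k A_k}_1$ subject to the POVM constraint.

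First I would establish an upper bound on $\sum_k \norm{X_k A_k}_1$ valid for every admissible $\{X_k\}$. The key tool is the variational characterization of the trace norm, $\norm{Y}_1 = \max\{\re\Tr U^* Y : U \text{ unitary}\}$ (or, more conveniently here, $\norm{Y}_1 = \max\{\abs{\Tr U Y} : \norm{U}_\infty \le 1\}$), together with the Cauchy--Schwarz inequality for the Hilbert--Schmidt inner product. Writing $\norm{X_k A_k}_1 = \re \Tr (U_k X_k A_k)$ for suitable contractions $U_k$, one gets $\sum_k \norm{X_k A_k}_1 = \re \Tr \sum_k U_k X_k A_k = \re \sum_k \inner{X_k^* U_k^*}{A_k}_{HS}$ and then bounds this, via Cauchy--Schwarz applied to the block vectors $(X_k^* U_k^*)_k$ and $(A_k)_k$, by $\big(\sum_k \Tr X_k^* X_k\big)^{1/2}$ -- wait, one must be slightly more careful: the natural bound is $\big(\sum_k \Tr U_k X_k X_k^* U_k^*\big)^{1/2}\big(\sum_k \Tr A_k^* A_k\big)^{1/2}$, which does not immediately use the constraint. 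The correct route is to keep the operators in the right order: $\norm{X_k A_k}_1 = \Tr \abs{X_k A_k} \le \Tr (A_k X_k^* X_k A_k)^{1/2}$ is not linear, so instead I would use the sharper step $\norm{X_k A_k}_1 \le \Tr\big( (X_k^* X_k)^{1/2} (A_k^2)^{1/2}\big)$... and then sum, invoking $\sum_k X_k^* X_k = I$ through an operator Cauchy--Schwarz / concavity argument to arrive at $\sum_k \norm{X_k A_k}_1 \le \Tr\, S^{1/2} = \Tr\big(\sum_i A_i^2\big)^{1/2}$. This inequality, combined with $\Gamma(\{X_k\}) = \Tr A_0 - \sum_k\norm{X_k A_k}_1$, gives $\Gamma^* \ge \Tr A_0 - \Tr(\sum_i A_i^2)^{1/2}$.

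Next I would verify that the square measurement achieves equality. With $X_{SQ;k} = A_k S^{-1/2}$ one checks $\sum_k X_{SQ;k}^* X_{SQ;k} = S^{-1/2}\big(\sum_k A_k^2\big) S^{-1/2} = S^{-1/2} S\, S^{-1/2}$, which equals the support projection of $S$; on the orthogonal complement all $A_k$ (hence $A_0$) vanish, so this is harmless and one may extend to a genuine POVM with $\sum_k E_k = I$ without changing any trace. Then $X_{SQ;k} A_k = A_k S^{-1/2} A_k$ is positive semidefinite, so $\norm{X_{SQ;k} A_k}_1 = \Tr A_k S^{-1/2} A_k = \Tr A_k^2 S^{-1/2}$, and summing over $k$ gives $\sum_k \norm{X_{SQ;k} A_k}_1 = \Tr\big(\sum_k A_k^2\big) S^{-1/2} = \Tr S\, S^{-1/2} = \Tr S^{1/2}$. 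Hence $\Gamma(\{X_{SQ;k}\}) = \Tr A_0 - \Tr(\sum_i A_i^2)^{1/2}$, matching the lower bound, which proves both the formula \eqref{T2} and the optimality of the SQ measurement.

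The main obstacle is the first, lower-bound step: getting a \emph{clean} operator inequality of the form $\sum_k \norm{X_k A_k}_1 \le \Tr(\sum_k A_k^2)^{1/2}$ that correctly exploits the normalization $\sum_k X_k^* X_k = I$, since the trace norm is not inner-product-like and the factors $X_k$ and $A_k$ do not commute. I expect the right way through is to introduce the polar parts $U_k$ with $X_k A_k = U_k \abs{X_k A_k}$, write $\sum_k\norm{X_k A_k}_1 = \Tr \sum_k U_k^* X_k A_k$, and then apply the matrix Cauchy--Schwarz inequality $\abs{\Tr \sum_k B_k^* C_k} \le \big(\Tr \sum_k B_k^* B_k\big)^{1/2}\big(\Tr \sum_k C_k^* C_k\big)^{1/2}$ with $B_k = X_k^*$ (so that $\sum_k B_k^* B_k = \sum_k X_k X_k^*$ -- here one uses that $\Tr \sum_k X_k X_k^* = \Tr\sum_k X_k^* X_k = \Tr I$ is \emph{not} quite what is needed either) -- so the genuinely careful choice is $B_k := (X_k^* X_k)^{1/2}$ acting appropriately and $C_k := (X_k^* X_k)^{1/2,\dagger}\!\!\cdots$; sorting out exactly which factorization makes the constraint $\sum_k X_k^* X_k = I$ drop in to give $\Tr(\sum_k A_k^2)^{1/2}$ on the nose is the delicate point, and I would handle it by first proving the scalar-case identity to see the pattern and then lifting it. Once that inequality is in hand, everything else is a direct computation.
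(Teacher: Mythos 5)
Your second half (achievability) is correct and matches the paper: with $S:=\sum_i A_i^2$ and $X_{SQ;k}=A_kS^{-1/2}$ one has $X_{SQ;k}A_k=A_kS^{-1/2}A_k\ge 0$, so $\sum_k\norm{X_{SQ;k}A_k}_1=\Tr S^{-1/2}\sum_k A_k^2=\Tr S^{1/2}$, and the support subtlety you flag is harmless. The problem is the first half: the converse bound $\sum_k\norm{X_kA_k}_1\le\Tr S^{1/2}$ for \emph{every} admissible $\{X_k\}$ is never actually proved, and you say so yourself ("sorting out exactly which factorization\dots is the delicate point"). Worse, the one concrete intermediate inequality you write down, $\norm{X_kA_k}_1\le\Tr\bz(X_k^*X_k)^{1/2}(A_k^2)^{1/2}\jz=\Tr(|X_k|A_k)$, goes the wrong way: writing $X_k=U_k|X_k|$ gives $\norm{X_kA_k}_1=\norm{\,|X_k|A_k}_1\ge|\Tr(|X_k|A_k)|$ by $|\Tr Y|\le\norm{Y}_1$, so in general $\Tr(|X_k|A_k)\le\norm{X_kA_k}_1$, not the reverse. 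And the naive block Cauchy--Schwarz you sketch (with $B_k=X_k^*U_k$, $C_k=A_k$) only yields $\sum_k\norm{X_kA_k}_1\le(\Tr I)^{1/2}(\Tr S)^{1/2}$, which is strictly weaker than $\Tr S^{1/2}$ in general. So the key step of the theorem is a genuine gap, not a routine detail.

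Two standard ways to close it. The paper's route: observe that $\bigoplus_k X_kA_k$ is a pinching of the block operator $(X_jA_k)_{j,k}=\cX\A$, where $\cX$ is the column block $(X_j)_j$ and $\A$ the row block $(A_k)_k$; since unitarily invariant norms do not increase under pinchings, $\sum_k\norm{X_kA_k}_1\le\norm{\cX\A}_1=\Tr(\A^*\cX^*\cX\A)^{1/2}=\Tr(\A^*\A)^{1/2}=\norm{\A^*}_1=\Tr S^{1/2}$, where the constraint enters exactly through $\cX^*\cX=\sum_kX_k^*X_k=I$. Alternatively, a completing-the-square argument in the spirit of what you were groping for works: with polar parts $U_k$, $\re\Tr(U_kX_kA_k)\le\half\Tr\bz U_kX_kS^{1/2}X_k^*U_k^*\jz+\half\Tr\bz A_kS^{-1/2}A_k\jz$ (from $0\le(S^{1/4}X_k^*U_k^*-S^{-1/4}A_k)^*(S^{1/4}X_k^*U_k^*-S^{-1/4}A_k)$, traced); summing over $k$ and using $\sum_kX_k^*U_k^*U_kX_k\le\sum_kX_k^*X_k= I$ gives $\half\Tr S^{1/2}+\half\Tr S^{1/2}=\Tr S^{1/2}$. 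Either of these would complete your argument; as written, the proposal does not.
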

From (\ref{T2}) it follows that $\Gamma^*$ can take values between $0$ (when all $A_i$ are mutually orthogonal)
and $\Tr A_0-\Tr A_0^2/\sqrt{r}$ (when all $A_i$ are equal), whereas $P_e^*$ lies between 0 and $(\Tr A_0)(1-1/r)$.

Tyson's theorems yield that
\begin{align}\label{square bound}
P_e^*\le P_e^{SQ}\le 2\left[ \Tr A_0-\Tr\left(\sum_{i=1}^r A_i^2\right)^{1/2}\right].
\end{align}
Thus any decoupling bound on the RHS of \eqref{square bound} yields a decoupling bound on $P_e^*$. Here we show the following:

\begin{prop}\label{prop:half}
Let $A_1,\ldots,A_r\in B(\hil)_+$ and $A_0:=\sum_i A_i$. Then
\begin{align*}
\Tr A_0-\trace\left(\sum_j A_j^{2}\right)^{1/2}
&\le
\Tr A_0-(\Tr A_0)^{\frac{3}{2}}\left((\Tr A_0)+2\sum_{(i,j):\,i<j} \trace A_i^{1/2}A_j^{1/2}\right)^{-1/2}\\
&\le
\sum_{(i,j):\,i<j} \trace A_i^{1/2}A_j^{1/2}.
\end{align*}
\end{prop}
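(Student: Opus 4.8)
Write $T:=\Tr A_0=\sum_j\Tr A_j$; we may assume $T>0$, since $A_0=0$ makes all three expressions vanish. Set $S:=\sum_{(i,j):\,i<j}\Tr A_i^{1/2}A_j^{1/2}$ (each summand equals $\Tr A_j^{1/4}A_i^{1/2}A_j^{1/4}\ge0$, so $S\ge0$), $N:=\bz\sum_j A_j^2\jz^{1/2}$, and $M:=\sum_j A_j^{1/2}$. Since $A_j^2\le N^2$ and $A_j^{1/2}\le M$ for every $j$, operator monotonicity of $t\mapsto t^{1/3}$ and $t\mapsto t^{2/3}$ gives $A_j^{2/3}\le N^{2/3}$ and $A_j^{1/3}\le M^{2/3}$, and expanding the square gives $\Tr M^2=T+2S$. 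The rightmost inequality of the proposition is elementary; the leftmost one is the substantial part.

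\emph{Rightmost inequality.} Cancelling $T$, it becomes $T-S\le T^{3/2}(T+2S)^{-1/2}$. If $S\ge T$ the left-hand side is $\le0$ while the right-hand side is positive, so there is nothing to prove. If $S<T$, both sides are nonnegative and $T+2S>0$, so squaring reduces the claim to $(T-S)^2(T+2S)\le T^3$; expanding the left-hand side to $T^3-3TS^2+2S^3$, this becomes $2S\le 3T$, which holds since $S<T$.

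\emph{Reduction of the leftmost inequality.} Cancelling $T$, it becomes $\Tr N\ge T^{3/2}(T+2S)^{-1/2}=T^{3/2}(\Tr M^2)^{-1/2}$, that is,
\begin{equation}\label{plan:target}
\Tr A_0\ \le\ (\Tr N)^{2/3}\,(\Tr M^2)^{1/3}.
\end{equation}
This is exactly the operator counterpart of the scalar log-convexity (Lyapunov/H\"older) inequality $\sum_j a_j\le\bz\sum_j a_j^2\jz^{1/3}\bz\sum_j\sqrt{a_j}\jz^{2/3}$, which comes from the convexity of $p\mapsto\log\sum_j a_j^p$ at $p=2,1,\tfrac12$. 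I would prove \eqref{plan:target} as follows. In the Schatten scale one has $(\Tr N)^{2/3}=\norm{N^{2/3}}_{3/2}$ and $(\Tr M^2)^{1/3}=\norm{M^{2/3}}_{3}$, and $\tfrac23+\tfrac13=1$, so the noncommutative H\"older inequality gives $\norm{N^{2/3}}_{3/2}\norm{M^{2/3}}_{3}\ge\norm{N^{2/3}M^{2/3}}_1$. Hence \eqref{plan:target} follows once we know
\begin{equation}\label{plan:crux}
\norm{N^{2/3}M^{2/3}}_1\ \ge\ \Tr A_0 .
\end{equation}

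\emph{Proof of \eqref{plan:crux}, and the main obstacle.} From $A_j^{1/3}\le M^{2/3}$, $A_j^{2/3}\le N^{2/3}$, together with the fact that the trace of a product of two positive operators is nondecreasing in each factor, one gets for every $j$ that $\Tr A_j=\Tr A_j^{1/3}A_j^{2/3}\le\Tr M^{2/3}N^{2/3}\le\norm{M^{2/3}N^{2/3}}_1$. Summing over $j$ only yields $\Tr A_0\le r\,\norm{M^{2/3}N^{2/3}}_1$, which is weaker than \eqref{plan:crux} by a factor $r$; removing this factor is the heart of the proof, and it has to use that $M$ is the \emph{sum} of the $A_j^{1/2}$ and not merely an upper bound for each of them. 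The route I would take is to pass to the block-diagonal operator $\mathcal A:=\bigoplus_{j=1}^r A_j$ on $\bC^r\otimes\hil$ and the isometry $V_0:\hil\to\bC^r\otimes\hil$ onto the symmetric subspace $\bC(1,\ldots,1)\otimes\hil$. For the unital completely positive map $\Phi(X):=V_0^*XV_0$ one computes $\Phi(\mathcal A)=\tfrac1r A_0$, $\Phi(\mathcal A^{1/2})=\tfrac1r M$, $\Phi(\mathcal A^2)=\tfrac1r N^2$, so \eqref{plan:crux} is equivalent to $\norm{c^{1/3}b^{2/3}}_1\ge\Tr a$ with $a:=\Phi(\mathcal A)$, $b:=\Phi(\mathcal A^{1/2})$, $c:=\Phi(\mathcal A^2)$, while the Kadison--Schwarz inequalities for $\Phi$ supply $c\ge a^2$ and $b\le a^{1/2}$. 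The step I expect to be genuinely hard is converting these two operator inequalities into the trace-norm lower bound: inequality \eqref{plan:crux} is saturated both when all the $A_j$ coincide and when they have pairwise orthogonal supports, so the argument must be tight simultaneously at these two opposite extremes and can afford no slack at either.
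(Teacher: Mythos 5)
Your reduction is sound up to a point: the elementary treatment of the second inequality is correct (the paper gets the same thing from the scalar bound $(a+x)^{-1/2}\ge a^{-1/2}-\tfrac12 a^{-3/2}x$), and the leftmost inequality is indeed equivalent to $\Tr A_0\le(\Tr N)^{2/3}\,(\Tr M^2)^{1/3}$ with $N=\bz\sum_j A_j^2\jz^{1/2}$, $M=\sum_j A_j^{1/2}$, which via H\"older for Schatten norms with exponents $3/2$ and $3$ reduces to $\norm{N^{2/3}M^{2/3}}_1\ge\Tr A_0$. But this last inequality is exactly where you stop, so the proof is incomplete, and the fallback you sketch cannot close it: from the block embedding and Kadison--Schwarz/operator concavity you only get $c\ge a^2$ and $b\le a^{1/2}$ (with $a=\Phi(\mathcal A)$, $b=\Phi(\mathcal A^{1/2})$, $c=\Phi(\mathcal A^2)$), and these two operator inequalities alone do not imply $\norm{c^{1/3}b^{2/3}}_1\ge\Tr a$ --- the constraint $b\le a^{1/2}$ points the wrong way for a lower bound on the left-hand side (nothing in those hypotheses prevents $b$ from being much smaller than $a^{1/2}$, even zero, which kills the left-hand side while leaving $\Tr a$ fixed). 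So the step you flag as ``genuinely hard'' is not merely hard along that route; it is unreachable from those ingredients, and more of the additive structure of $M$ and $N^2$ must enter.

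The missing idea, which is the paper's one-line key step, is Lieb's concavity theorem: $(B,C)\mapsto\trace B^tC^{1-t}$ is jointly concave for $0<t\le1$, hence $\trace\sum_j B_j^tC_j^{1-t}\le\trace\bz\sum_j B_j\jz^t\bz\sum_j C_j\jz^{1-t}$ for PSD $B_j,C_j$. Applying this with $t=2/3$, $B_j=A_j^{1/2}$, $C_j=A_j^2$ gives
\begin{align*}
\Tr A_0=\sum_j\trace A_j^{1/3}A_j^{2/3}\le\trace M^{2/3}N^{2/3}\le\norm{M^{2/3}N^{2/3}}_1,
\end{align*}
which is precisely your crux inequality, saturated at both of the extremes you identify (all $A_j$ equal, or pairwise orthogonal supports) because joint concavity is an equality in the orthogonal case and trivial in the equal case. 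From there your H\"older step and the elementary algebra coincide with the paper's argument, so supplying Lieb's theorem is all that is needed to turn your plan into the paper's proof.
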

\begin{proof}
According to Lieb's theorem, the functional $(B,C)\mapsto\trace(B^t C^{1-t})$ is jointly concave for $0<t\le 1$.
That is, for PSD operators $B_j$ and $C_j$,
$$
\trace \sum\nolimits_j B_j^t C_j^{1-t} \le \trace\bz\sum\nolimits_j B_j\jz^t\bz\sum\nolimits_j C_j\jz^{1-t}.
$$
Then, using the fact $|\trace X|\le ||X||_1$ and H\"older's inequality,
\begin{align*}
\trace \sum\nolimits_j B_j^t C_j^{1-t}
&\le
\norm{\bz\sum\nolimits_j B_j\jz^t\bz\sum\nolimits_j C_j\jz^{1-t}}_1
\le \norm{\bz\sum\nolimits_j B_j\jz^t}_{1/s}\;\;\norm{\bz\sum\nolimits_j C_j\jz^{1-t}}_{1/(1-s)}
\end{align*}
for every $0<s<1$.
Now take $t=2/3$, $s=1/3$,
$B_j=A_j^{1/2}$ and $C_j=A_j^2$, then
\begin{align*}
\trace \sum\nolimits_j A_j^{1/3} A_j^{2/3} &\le
\norm{\bz\sum\nolimits_j A_j^{1/2}\jz^{2/3}}_{3}\;\;\norm{\bz\sum\nolimits_j A_j^2\jz^{1/3}}_{3/2}\\
&=\bz\trace\bz\sum\nolimits_j A_j^{1/2}\jz^2\jz^{1/3}\;\;
\left(\trace\left(\sum\nolimits_j A_j^{2}\right)^{1/2}\right)^{2/3}.
\end{align*}
Obviously, the LHS equals $\trace A_0$.
Taking the $3/2$ power and rearranging then yields
\begin{align*}
\trace\left(\sum\nolimits_j A_j^{2}\right)^{1/2}
&\ge
(\Tr A_0)^{\frac{3}{2}}\left(\trace\left(\sum\nolimits_j A_j^{1/2}\right)^2\right)^{-1/2}
=
(\Tr A_0)^{\frac{3}{2}}\left(\sum\nolimits_{i,j} \trace A_i^{1/2}A_j^{1/2}\right)^{-1/2}\\
&=
(\Tr A_0)^{\frac{3}{2}}\left(\Tr A_0+2\sum_{(i,j):\,i<j} \trace A_i^{1/2}A_j^{1/2}\right)^{-1/2}\\
&\ge
\Tr A_0-\sum_{(i,j):\,i<j} \trace A_i^{1/2}A_j^{1/2},
\end{align*}
where in the last line we exploited the inequality $(a+x)^{-1/2}\ge a^{-1/2}-\half a^{-3/2} x$.
\end{proof}

\begin{theorem}\label{th:half}
Let $A_1,\ldots,A_r\in\B(\hil)_+$ and let $p_k:=\Tr A_k,\,\sigma_k:=A_k/\Tr A_k$. Then
% For general mixed states $\sigma_1,\ldots,\sigma_r$ and any prior $(p_1,\ldots,p_r)$,
\begin{align}
P_e^*(A_1,\ldots,A_r)\le
\sum_{(i,j):\,i\ne j}\Tr A_i^{1/2} A_j^{1/2}
&\le
\sum_{(i,j):\,i\ne j}F( A_i, A_j)
\label{eq:mixedboundT}\\
&\le
\begin{cases}
\sum_{(i,j):\,i\ne j}\sqrt{p_i+ p_j}\;\;P_e^*(A_i,A_j)^{1/2},\\
\sum_{(i,j):\,i\ne j} \sqrt{p_i p_j} \;\;Q_{\min}(\sigma_i,\sigma_j)^{1/2}
\end{cases}.\label{eq:mixedboundT2}
\end{align}
In the special case that all states $\sigma_k$ are pure, we have the improved bound
\begin{align}
P_e^*(A_1,\ldots,A_r)&\le
\sum_{(i,j):\,i\ne j}\Tr A_i^{1/2} A_j^{1/2}
=
\sum_{(i,j):\,i\ne j} \sqrt{p_i p_j} \;\;Q_{\min}(\sigma_i,\sigma_j)\label{eq:pureboundT}\\
&=
\sum_{(i,j):\,i\ne j} \frac{1}{\sqrt{p_i p_j}} \;\;F(A_i,A_j)^2
%=\sum_{(j,k):\,j\ne k} \sqrt{p_i p_j} \;\;F(\sigma_i,\sigma_j)^2\\
\le
\sum_{(i,j):\,i\ne j} \frac{p_i+p_j}{\sqrt{p_i p_j}} \;\;P_e^*(A_i,A_j).
\label{eq:pureboundT2}
\end{align}
\end{theorem}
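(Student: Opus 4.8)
The plan is to derive Theorem~\ref{th:half} by assembling the single-shot results already at hand: Tyson's bound \eqref{square bound}, Proposition~\ref{prop:half}, and the fidelity comparisons of Lemma~\ref{lemma:FT bounds}, Lemma~\ref{lem:FTpure} and \eqref{fid chernoff}. No new idea is needed beyond Proposition~\ref{prop:half} (which already carries the only real work, via Lieb concavity and a H\"older interpolation); the rest is bookkeeping. The backbone is the first inequality in \eqref{eq:mixedboundT}: chaining \eqref{square bound} with Proposition~\ref{prop:half} gives $P_e^*(A_1,\ldots,A_r)\le 2\bz\Tr A_0-\trace\bz\sum_j A_j^2\jz^{1/2}\jz\le 2\sum_{(i,j):\,i<j}\trace A_i^{1/2}A_j^{1/2}$. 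Since $\trace(A_i^{1/2}A_j^{1/2})$ is a nonnegative real number, equal to $\trace(A_j^{1/4}A_i^{1/2}A_j^{1/4})$ and symmetric in $i,j$ by cyclicity, the factor $2$ converts the sum over unordered pairs into the sum over ordered pairs $i\ne j$, which is exactly the claimed first inequality; this holds for all $A_i\in\B(\hil)_+$ without restriction.

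Next I would handle the second inequality in \eqref{eq:mixedboundT} and the two branches of \eqref{eq:mixedboundT2}. The bound $\trace A_i^{1/2}A_j^{1/2}\le F(A_i,A_j)$ is just $\abs{\trace X}\le\norm{X}_1$ applied to $X=A_i^{1/2}A_j^{1/2}$. For the first branch of \eqref{eq:mixedboundT2}, apply the last line of Lemma~\ref{lemma:FT bounds} with $(A,B)=(A_i,A_j)$ and $\Tr(A_i+A_j)=p_i+p_j$ to get $F(A_i,A_j)\le\sqrt{p_i+p_j}\,P_e^*(A_i,A_j)^{1/2}$. For the second branch, use homogeneity of the fidelity, $F(A_i,A_j)=F(p_i\sigma_i,p_j\sigma_j)=\sqrt{p_ip_j}\,F(\sigma_i,\sigma_j)$, together with \eqref{fid chernoff}, $F(\sigma_i,\sigma_j)^2\le Q_{\min}(\sigma_i,\sigma_j)$, to get $F(A_i,A_j)\le\sqrt{p_ip_j}\,Q_{\min}(\sigma_i,\sigma_j)^{1/2}$. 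Summing over ordered pairs $i\ne j$ finishes \eqref{eq:mixedboundT2}.

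Finally, the pure-state case. Writing $\sigma_k=\pr{\psi_k}$ with unit vectors $\psi_k$, one has $A_k^{1/2}=\sqrt{p_k}\,\pr{\psi_k}$, hence $\trace A_i^{1/2}A_j^{1/2}=\sqrt{p_ip_j}\,\abs{\inner{\psi_i}{\psi_j}}^2$. For rank-one projections the support convention gives $\sigma_i^0=\pr{\psi_i}$, so $Q_s(\sigma_i\|\sigma_j)=\abs{\inner{\psi_i}{\psi_j}}^2$ for every $s\in[0,1]$, whence $Q_{\min}(\sigma_i,\sigma_j)=\abs{\inner{\psi_i}{\psi_j}}^2=F(\sigma_i,\sigma_j)^2$ and $F(A_i,A_j)^2=p_ip_j\abs{\inner{\psi_i}{\psi_j}}^2$; this yields the two equalities in \eqref{eq:pureboundT}. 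For \eqref{eq:pureboundT2}, each $A_i$ is rank one, so Lemma~\ref{lem:FTpure} with $A=A_i$ gives $F(A_i,A_j)^2\le p_i\,P_e^*(A_i,A_j)\le(p_i+p_j)P_e^*(A_i,A_j)$, i.e.\ $\frac{1}{\sqrt{p_ip_j}}F(A_i,A_j)^2\le\frac{p_i+p_j}{\sqrt{p_ip_j}}P_e^*(A_i,A_j)$; summing over ordered pairs completes the proof. The ``main obstacle'' is thus already behind us in Proposition~\ref{prop:half}; the only places demanding care in the assembly are the homogeneity scaling $F(p\sigma,q\tau)=\sqrt{pq}\,F(\sigma,\tau)$, the identity $A^{1/2}=\sqrt{\Tr A}\,\sigma$ for rank-one $A$, and keeping track of ordered versus unordered index pairs.
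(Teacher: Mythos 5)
Your assembly follows the paper's own proof essentially step for step: first inequality from \eqref{square bound} plus Proposition~\ref{prop:half} (with the correct bookkeeping of the factor $2$ versus ordered pairs), the fidelity comparison via $|\Tr X|\le\norm{X}_1$, the two branches of \eqref{eq:mixedboundT2} from Lemma~\ref{lemma:FT bounds} and \eqref{fid chernoff}, and the direct verification of the pure-state identities in \eqref{eq:pureboundT}--\eqref{eq:pureboundT2}. All of that is correct.

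There is, however, one step that fails as written: the last inequality of \eqref{eq:pureboundT2}. You invoke Lemma~\ref{lem:FTpure} as giving $F(A_i,A_j)^2\le p_i\,P_e^*(A_i,A_j)$, but that lemma states the \emph{reverse} inequality, $P_e^*(A,B)\le\frac{1}{\Tr A}F(A,B)^2$, i.e.\ $p_i\,P_e^*(A_i,A_j)\le F(A_i,A_j)^2$. The intermediate bound you claim is in fact false in general: take $A_i=p_i\pr{\psi}$ and $A_j=p_j\pr{\psi}$ with $p_i<p_j$; then $F(A_i,A_j)^2=p_ip_j$, while $p_i\,P_e^*(A_i,A_j)=p_i\min\{p_i,p_j\}=p_i^2<p_ip_j$. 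The inequality you actually need, $F(A_i,A_j)^2\le(p_i+p_j)\,P_e^*(A_i,A_j)$, is true, but it comes from the last line of Lemma~\ref{lemma:FT bounds} (squared) --- precisely the bound you already used for the first branch of \eqref{eq:mixedboundT2}, and the route the paper takes for \eqref{eq:pureboundT2} as well. Replacing the erroneous citation of Lemma~\ref{lem:FTpure} by Lemma~\ref{lemma:FT bounds} repairs the argument, after which your proof coincides with the paper's.
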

\begin{proof}
The first inequalities in \eqref{eq:mixedboundT} and \eqref{eq:pureboundT} are due to \eqref{square bound} and
Proposition \ref{prop:half}.
The second inequality in  \eqref{eq:mixedboundT} is obvious from
$\Tr A_i^{1/2} A_j^{1/2}\le\|A_i^{1/2} A_j^{1/2}\|_1=F(A_i,A_j)$. The first bound in
\eqref{eq:mixedboundT2} follows from lemma \ref{lemma:FT bounds}, while the second bound is due to \eqref{fid chernoff}.
The identities in \eqref{eq:pureboundT} and \eqref{eq:pureboundT2} are straightforward to verify, and
the inequality in \eqref{eq:pureboundT2} is again due to lemma \ref{lemma:FT bounds}.
\end{proof}

\begin{rem}
Since the inequalities used to prove \eqref{eq:mixedboundT2} can be saturated, the square roots in \eqref{eq:mixedboundT2}
cannot be removed from $P_e^*(A_i,A_j)^{1/2}$ and $Q_{\min}(\sigma_i,\sigma_j)^{1/2}$ in general.
\end{rem}

\begin{rem}
When all states are pure and the prior is uniform (i.e., $p_k=\Tr A_k=1/r\s\forall k$), we can use another argument.
By the inequality $\trace\rho\sigma\le 1-(||\rho-\sigma||_1/2)^2$
\cite{sandwich}, we get
$$
\trace\sigma_j\sigma_k \le 1-\norm{\sigma_j-\sigma_k}_1^2/4 \le 2-\norm{\sigma_j-\sigma_k}_1.
$$
Hence,
$$
\trace A_j^{1/2}A_k^{1/2} = \frac{1}{r}\trace \sigma_j \sigma_k \le \frac{1}{r}(2-\norm{\sigma_j-\sigma_k}_1),
$$
so that
\be
1-\Gamma^* \ge  \left(1+\frac{2}{r} \sum_{(j,k):\,j<k}(2-\norm{\sigma_j-\sigma_k}_1)\right)^{-1/2}.\label{eq:purebound2}
\ee
\end{rem}

Based on extensive numerical simulations, %(see Fig.~\ref{fig:scatter1}),
we conjecture that the latter bound also holds for mixed states and for non-uniform priors:
\begin{conjecture}
For any $A_i\ge0$ with $\sum_i \trace A_i=1$,
\be
\Gamma^* \le 1- \left(1+4(r-1)P_{e,2}^*\right)^{-1/2} \le 2(r-1)P_{e,2}^*.\label{eq:mixedbound}
\ee
\end{conjecture}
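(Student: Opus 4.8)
\emph{Plan.} The second inequality is elementary: $(1+x)^{-1/2}\ge1-x/2$ for all $x\ge0$, so with $x=4(r-1)P_{e,2}^*$ we get $1-\bz1+4(r-1)P_{e,2}^*\jz^{-1/2}\le2(r-1)P_{e,2}^*$. All the content is in the first inequality, and the plan is to pass to Tyson's closed form (Theorem~\ref{th:Tyson2}): since $\Tr A_0=1$ it gives $\Gamma^*=1-\Tr\bz\sum_iA_i^2\jz^{1/2}$, so the claim is equivalent to the operator inequality $\Tr\bz\sum_iA_i^2\jz^{1/2}\ge\bz1+4(r-1)P_{e,2}^*\jz^{-1/2}$. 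Using $P_e^*(A_i,A_j)=\half\Tr(A_i+A_j)-\half\norm{A_i-A_j}_1$ and $\Tr A_0=1$ one rewrites $1+4(r-1)P_{e,2}^*=2r-1-2\sum_{i<j}\norm{A_i-A_j}_1$, so the target is
\[
\left(\Tr\Big(\sum\nolimits_iA_i^2\Big)^{1/2}\right)^{2}\Big(2r-1-2\sum\nolimits_{i<j}\norm{A_i-A_j}_1\Big)\ge1 .
\]
As a sanity check, when all $A_i$ coincide this is $\tfrac1r(2r-1)\ge1$, and when the $A_i$ have pairwise orthogonal supports both factors equal $1$; so the inequality is consistent, with equality in the orthogonal case.

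For the proof I see two handles. The first is the algebraic identity $\sum_iA_i^2=\tfrac1r\bz A_0^2+\sum_{i<j}(A_i-A_j)^2\jz$, which reduces the problem to a sharp lower bound on $\Tr\bz A_0^2+D\jz^{1/2}$, where $D:=\sum_{i<j}(A_i-A_j)^2\ge0$, in terms of the pairwise trace distances $\norm{A_i-A_j}_1$. Operator monotonicity and concavity of the square root give only the crude bounds $\Tr\sqrt{A_0^2+D}\ge\Tr A_0$ and $\Tr\sqrt{A_0^2+D}\ge\tfrac1{\sqrt2}\bz\Tr A_0+\Tr\sqrt D\jz$ (the latter from $(\sqrt X-\sqrt Y)^2\ge0$), and the work is to quantify the excess over $\Tr A_0$ and to bound $\Tr\sqrt D$ below by a multiple of $\sum_{i<j}\norm{A_i-A_j}_1$. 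The second handle is the variational formula: $1-\Gamma^*=\Tr\bz\sum_iA_i^2\jz^{1/2}$ is a trace norm, so by trace-norm duality $1-\Gamma^*=\max\bigl\{\sum_k\re\Tr(X_k^*A_k):\sum_kX_k^*X_k\le I\bigr\}$ (the maximum attained by the square-measurement operators), and a lower bound follows from \emph{any} admissible $\{X_k\}$; one wants a choice built from the pairwise Holevo--Helstr\"om projections $\{A_i-A_j>0\}$ whose value is a controlled multiple of $\bz1+4(r-1)P_{e,2}^*\jz^{-1/2}$ and which degenerates correctly to the support-projection-type choice that is optimal when the states nearly coincide. In parallel I would try to reduce to the commuting case by pinching all $A_i$ in a common eigenbasis of $A_0$, so that the inequality becomes a statement about nonnegative numbers, amenable to Lagrange multipliers as in Appendix~\ref{sec:classical}; and to induct on $r$ by peeling off one hypothesis as in Section~\ref{sec:nussbaum}. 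The pinching reduction needs both sides to move compatibly, which is not automatic.

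The hard part, on any route, is the \emph{tightness} of the lower bound on $\Tr\bz\sum_iA_i^2\jz^{1/2}$. In particular, the bound obtained from Proposition~\ref{prop:half}, namely $\Gamma^*\le1-\bz1+2\sum_{i<j}\Tr A_i^{1/2}A_j^{1/2}\jz^{-1/2}$, is too weak: closing it to the conjectured form would require $\sum_{i<j}\Tr A_i^{1/2}A_j^{1/2}\le2\sum_{i<j}P_e^*(A_i,A_j)$, which fails for skewed priors — for rank-one $A_i=p_i\pr{\psi_i}$ the left summand equals $\sqrt{p_ip_j}\,\abs{\inpr{\psi_i}{\psi_j}}^2$ whereas $P_e^*(A_i,A_j)\approx\tfrac{p_ip_j}{p_i+p_j}\abs{\inpr{\psi_i}{\psi_j}}^2$, so no prior-independent constant works. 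The H\"older/Lieb step inside Proposition~\ref{prop:half} collapses the genuinely quantum pairwise overlap onto the commutative-looking quantity $\Tr A_i^{1/2}A_j^{1/2}$ and thereby loses a full power of the overlap; a proof of the conjecture must avoid this collapse, keeping control of the non-commuting cross terms $\sum_{i\ne j}A_iA_j$ that actually govern $\Gamma^*$, and must extract the precise constant $4(r-1)$, which the pure-state estimate \eqref{eq:purebound2} shows is the right one. I expect this is exactly why the statement appears here only as a conjecture.
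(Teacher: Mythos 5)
This statement is not proved in the paper: it is stated as a conjecture, supported only by numerical evidence and by the special case \eqref{eq:purebound2} (all states pure, uniform priors), which the paper establishes just above via $\trace\rho\sigma\le 1-(\norm{\rho-\sigma}_1/2)^2$. Your proposal likewise does not establish the first inequality of \eqref{eq:mixedbound}, and you say so; judged against the paper, there is therefore no discrepancy to report, only the fact that the substantive claim remains open on both sides. What you do supply is correct: the second inequality indeed follows from $(1+x)^{-1/2}\ge 1-x/2$; the reduction via Theorem \ref{th:Tyson2} and the rewriting $1+4(r-1)P_{e,2}^*=2r-1-2\sum_{i<j}\norm{A_i-A_j}_1$ (using $\sum_{i<j}\trace(A_i+A_j)=(r-1)\trace A_0=r-1$) is accurate; your two consistency checks (all $A_i$ equal, and pairwise orthogonal supports) are right, and it is worth noting explicitly that in the pure-state, uniform-prior case your reformulation reduces the conjecture exactly to the proven bound \eqref{eq:purebound2}, so that case of the first inequality is in fact known.

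Your diagnosis of why Proposition \ref{prop:half} cannot close the gap is also sound: deducing \eqref{eq:mixedbound} from it would need $\sum_{i<j}\trace A_i^{1/2}A_j^{1/2}\le 2\sum_{i<j}P_e^*(A_i,A_j)$, and for rank-one $A_i=p_i\pr{\psi_i}$ the left term scales as $\sqrt{p_ip_j}\,|\inner{\psi_i}{\psi_j}|^2$ while $P_e^*(A_i,A_j)\asymp\frac{p_ip_j}{p_i+p_j}|\inner{\psi_i}{\psi_j}|^2$, so the ratio $\frac{p_i+p_j}{\sqrt{p_ip_j}}$ is unbounded for skewed priors; this correctly identifies where the Lieb/H\"older step loses the non-commuting cross terms. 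One small caution: \eqref{eq:purebound2} shows the conjectured \emph{form} is attained in a special case, not that the constant $4(r-1)$ is optimal in general (the paper's numerics, not a proof, suggest that). If you pursue the variational route via the duality behind Theorem \ref{th:Tyson2}, be aware that a proof would, through Theorem \ref{th:Tyson}, also settle Conjecture \ref{claim:1} with $c=4(r-1)$, so you should expect difficulty of at least that order.
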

By Theorem \ref{th:Tyson}, this would imply the inequality of Conjecture \ref{claim:1} for weighted states:
$$
P_e^* \le 4(r-1)P_{e,2}^*.
$$

\begin{rem}
Note that for any $p_i,p_j$, $(p_ip_j)^{3/2}\le p_ip_j\le (p_i^2+p_j)^2/2$, from which it follows that the constants
in the bound
\begin{align*}
P_e^*\le
\sum_{(i,j):\,i\ne j} \frac{1}{\sqrt{p_i p_j}} \;\;F(A_i,A_j)^2,
\end{align*}
given in \eqref{eq:pureboundT2}, are better than in \eqref{HLS bound}, i.e.,
\eqref{eq:pureboundT2} gives a tighter upper bound on the optimal error than \eqref{HLS bound}.

To compare the bounds in \eqref{BK bound} and \eqref{eq:mixedboundT}, first choose
 all the $\sigma_j$ to be pure, i.e., $\sigma_j=\pr{\psi_j}$ for some unit vectors $\psi_j$. Then
$\Tr A_i^{1/2}A_j^{1/2}=\sqrt{p_ip_j}|\inner{\psi_i}{\psi_j}|^2$, while
$F(A_i,A_j)=\sqrt{p_ip_j}|\inner{\psi_i}{\psi_j}|$. Choosing thus the $\psi_j$ close to orthogonal, but not orthogonal, we see that the ratio
\begin{align*}
\frac{\sum_{(i,j):\,i\ne j}\Tr A_i^{1/2} A_j^{1/2}}{\sum_{(i,j):\,i\ne j}F(A_i,A_j)}
\end{align*}
can be arbitrarily small. By continuity, we can also add a small perturbation to obtain PSD operators $A_j$ of full support
with the same property.
In this sense, the upper bound $P_e^*\le \sum_{(i,j):\,i\ne j}\Tr A_i^{1/2} A_j^{1/2}$ in \eqref{eq:mixedboundT}
can be arbitrarily better
than the bound in \eqref{BK bound}, for any fixed $r$. On the other hand, there are configurations for which
the bound in  \eqref{BK bound} outperforms the one in \eqref{eq:mixedboundT}, due to the $1/2$ pre-factor in the former.
\end{rem}

\section{Binary  state discrimination: i.i.d.~vs.~averaged i.i.d}
\label{sec:averaged}

Consider the binary state discrimination problem where one of the hypotheses is i.i.d., i.e.,
for $n$ copies it is represented by $\rho^{\otimes n}$ for some state $\rho$,
while the other hypothesis is averaged i.i.d., i.e., for $n$ copies it is of the form
$\sum_{i=1}^r q_i\sigma_i^{\otimes n}$ for some states $\sigma_1,\ldots,\sigma_r$, and a
probability distribution $q_1,\ldots,q_r$. This represents a situation where we have a
further uncertainty about the identity of the true state when the second hypothesis is true.
Alternatively, this can be considered as a state discrimination problem with $r+1$
i.i.d.~hypotheses, where we only want to know whether one of the hypotheses is true or not.
If the state $\rho$ has prior probability $0<p<1$ then the optimal error probability for $n$
copies is
\begin{equation*}
P_e^*\bz p\rho^{\otimes n},(1-p)\sum\nolimits_i q_i\sigma_i^{\otimes n}\jz
=
\half\bz 1-\norm{p\rho^{\otimes n}-(1-p)\sum\nolimits_i q_i\sigma_i^{\otimes n}}_1\jz.
\end{equation*}
Convexity of the trace-norm implies that
\begin{align}
P_e^*\bz p\rho^{\otimes n},(1-p)\sum\nolimits_i q_i\sigma_i^{\otimes n}\jz
&\ge
\half\sum_{i=1}^r q_i\bz
1-\norm{p\rho^{\otimes n}-(1-p)\sigma_i^{\otimes n}}_1
\jz\nonumber\\
&=
\sum_{i=1}^r q_i P_e^*\bz p\rho^{\otimes n},(1-p)\sigma_i^{\otimes n}\jz,\label{concavity of error}
\end{align}
and hence,
\begin{align*}
\pli\bz \{p\rho^{\otimes n}\}_n,\left\{(1-p)\sum\nolimits_i q_i\sigma_i^{\otimes n}\right\}_n\jz
\ge
\max_{1\le i\le r}\pli\bz \{p\rho^{\otimes n}\}_n,\{(1-p)\sigma_i^{\otimes n}\}_n\jz
=
-\min_i C(\rho,\sigma_i).
\end{align*}
Based on analytical proofs for various special cases as well extensive numerical search, we conjecture that the following converse decoupling inequality is also true:
\begin{conjecture}\label{con:averaged asymptotics}
\begin{align*}
\pls\bz \{p\rho^{\otimes n}\}_n,\left\{(1-p)\sum\nolimits_i q_i\sigma_i^{\otimes n}\right\}_n\jz
\le
\max_{1\le i\le r}\pls\bz \{p\rho^{\otimes n}\}_n,\{(1-p)\sigma_i^{\otimes n}\}_n\jz
=
-\min_i C(\rho,\sigma_i).
\end{align*}
\end{conjecture}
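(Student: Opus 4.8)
The equality on the right is immediate from the i.i.d.\ binary quantum Chernoff theorem \eqref{binary Chernoff}: for each $i$ one has $\pls\bz\{p\rho^{\otimes n}\}_n,\{(1-p)\sigma_i^{\otimes n}\}_n\jz=-\ch{\rho}{\sigma_i}$. The matching lower bound $\pli\ge-\min_i\ch{\rho}{\sigma_i}$ has already been recorded above from convexity of the trace norm, cf.\ \eqref{concavity of error}. So the whole content of the conjecture is the \ki{achievability} inequality for $\pls$, and the plan is to deduce it from a \ki{single-shot} upper decoupling bound (which I will state as Theorem~\ref{thm:averaged upper bounds}) and then let $n\to\infty$ with the help of Lemma~\ref{lemma:maximum rate}.

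For the single-shot bound I would set $A:=p\rho^{\otimes n}$ and $B:=(1-p)\sum_i q_i\sigma_i^{\otimes n}=\sum_i(1-p)q_i\sigma_i^{\otimes n}$, and start from $P_e^*(A,B)\le F(A,B)$ (Lemma~\ref{lemma:FT bounds}). The key observation is a \ki{subadditivity of the fidelity in its second argument}: for any PSD $A$ and any PSD $X,Y$, $F(A,X+Y)\le F(A,X)+F(A,Y)$; this follows by applying Rotfel'd's trace inequality $\Tr f(U+V)\le\Tr f(U)+\Tr f(V)$ — valid for concave $f\colon[0,\infty)\to[0,\infty)$ with $f(0)=0$ — to $f(x)=\sqrt x$, $U=A^{1/2}XA^{1/2}$, $V=A^{1/2}YA^{1/2}$, since $F(A,X)=\Tr\bz A^{1/2}XA^{1/2}\jz^{1/2}$. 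Iterating and using $F(A,cX)=\sqrt c\,F(A,X)$ gives $F(A,B)\le\sum_i\sqrt{q_i}\,F\bz A,(1-p)\sigma_i^{\otimes n}\jz$, and the last inequality of Lemma~\ref{lemma:FT bounds}, together with $\Tr\bz A+(1-p)\sigma_i^{\otimes n}\jz=1$, yields
\begin{align*}
P_e^*\bz p\rho^{\otimes n},(1-p)\textstyle\sum_i q_i\sigma_i^{\otimes n}\jz\le\sum_{i=1}^r\sqrt{q_i}\,\sqrt{P_e^*\bz p\rho^{\otimes n},(1-p)\sigma_i^{\otimes n}\jz}.
\end{align*}
Taking $\tfrac1n\log$, applying Lemma~\ref{lemma:maximum rate}, and using $\pls\bz\{p\rho^{\otimes n}\}_n,\{(1-p)\sigma_i^{\otimes n}\}_n\jz=-\ch{\rho}{\sigma_i}$, this gives $\pls\le-\half\min_i\ch{\rho}{\sigma_i}$ — \ki{half} the conjectured exponent.

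When $\rho$ is \ki{pure}, say $\rho=\pr{\psi}$, the factor $\half$ disappears. Now $A=p\pr{\psi^{\otimes n}}$ has rank one, so $F(A,M)^2=p\,\pinner{\psi^{\otimes n}}{M}{\psi^{\otimes n}}$ is \ki{linear} in $M\ge0$, whence $F(A,B)^2=\sum_i q_i\,F\bz A,(1-p)\sigma_i^{\otimes n}\jz^2$; combining this with the rank-one bound $P_e^*(A,B)\le\frac1{\Tr A}F(A,B)^2$ of Lemma~\ref{lem:FTpure}, and with $F\bz A,(1-p)\sigma_i^{\otimes n}\jz^2\le P_e^*\bz A,(1-p)\sigma_i^{\otimes n}\jz$ (Lemma~\ref{lemma:FT bounds}, again using $\Tr(A+(1-p)\sigma_i^{\otimes n})=1$), yields the square-root-free single-shot bound $P_e^*(A,B)\le\frac1p\sum_i q_i\,P_e^*\bz A,(1-p)\sigma_i^{\otimes n}\jz$. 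Hence $\pls\le\max_i\pls\bz\{p\rho^{\otimes n}\}_n,\{(1-p)\sigma_i^{\otimes n}\}_n\jz=-\min_i\ch{\rho}{\sigma_i}$, which together with the lower bound proves the conjecture in the pure-$\rho$ case.

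The hard part will be removing the factor $\half$ for \ki{mixed} $\rho$. The loss is intrinsic to routing through the fidelity: for mixed states $F$ is only quadratically, not linearly, related to the error probability (Lemma~\ref{lemma:FT bounds}), and the rank-one linearisation $F(A,M)^2\propto\pinner{\psi^{\otimes n}}{M}{\psi^{\otimes n}}$ exploited above is exactly what is unavailable for mixed $A$. A naive operator-convexity shortcut is also barred: $x\mapsto x^{t}$ is \emph{not} operator subadditive for $t\in(0,1)$, so one cannot bound $\bz\sum_i q_i\sigma_i^{\otimes n}\jz^{1-s}$ by $\sum_i q_i^{1-s}(\sigma_i^{\otimes n})^{1-s}$ inside $\Tr A^s(\cdot)$ to mimic \eqref{Aud ineq}. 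Proving the full conjecture therefore seems to need a genuinely new ingredient — for instance an Audenaert--Nussbaum--Szko\l{}a-type single-shot inequality \eqref{Aud ineq} tailored to a convex mixture in one argument, or a method-of-types/large-deviations argument following Salikhov that uses the i.i.d.\ structure of the $\sigma_i^{\otimes n}$ directly, or an analysis through the LUB/GLB operator formalism of Section~\ref{sec:ML} applied to the $(r+1)$-hypothesis reformulation with only the dichotomic error of the $\rho$-hypothesis retained. Any of these would also, via Section~\ref{sec:dichotomic}, sharpen \eqref{a-main1} to \eqref{a-main2}.
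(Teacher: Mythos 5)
Your treatment matches the paper's own: the statement is left as a conjecture there as well, and what you actually prove --- the weaker bound with the factor $\half$ via $P_e^*\le F$ plus subadditivity of the fidelity in its second argument, and the full exponent when $\rho$ is pure via the rank-one bound $P_e^*\le F^2/\Tr A$ --- is precisely the content of Theorem \ref{thm:averaged upper bounds} and Theorem \ref{thm:averaged iid}, obtained along the same route (your Rotfel'd step is the paper's Lemma \ref{lemma:fidelity subadditivity}, proved there by an integral/operator-monotonicity argument). The only minor variation is that in the pure case you use the exact additivity of $F(A,\cdot)^2$ in the second argument for rank-one $A$ where the paper uses subadditivity plus Cauchy--Schwarz; this slightly improves the constant but changes nothing asymptotically.
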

This conjecture would immediately yield
\begin{conjecture}\label{con:averaged asymptotics2}
\begin{align*}
\lim_{n\to\infty}\frac{1}{n}\log P_e^*\bz p\rho^{\otimes n},(1-p)\sum\nolimits_i q_i\sigma_i^{\otimes n}\jz
=
-\min_i C(\rho,\sigma_i).
\end{align*}
\end{conjecture}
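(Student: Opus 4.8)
We outline a possible approach to Conjecture~\ref{con:averaged asymptotics2}. The optimality bound $\pli\ge-\min_i C(\rho,\sigma_i)$ is already in hand --- it follows from \eqref{concavity of error} and the binary Chernoff theorem \eqref{binary Chernoff} applied to the pairs $\bz p\rho,(1-p)\sigma_i\jz$ --- so the whole content is the achievability bound, i.e.\ Conjecture~\ref{con:averaged asymptotics}. Since $P_e^*$ is attained by a test, it suffices to produce, for each $n$, a single POVM element $0\le T_n\le I$ on $\hil^{\otimes n}$ (``accept $\rho$'') whose type-I error $\Tr\bz\rho^{\otimes n}(I-T_n)\jz$ and whose $r$ type-II errors $\Tr\bz\sigma_i^{\otimes n}T_n\jz$ all decay exponentially at rate at least $\min_i C(\rho,\sigma_i)$; $n$-independent prefactors are harmless by Lemma~\ref{lemma:maximum rate}.

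The plan is to build $T_n$ from the optimal binary tests. For each $i$ let $R_{i,n}:=\{(1-p)q_i\sigma_i^{\otimes n}-p\rho^{\otimes n}>0\}$ be the Holevo--Helstr\"om projection of the weighted pair $\bz p\rho^{\otimes n},(1-p)q_i\sigma_i^{\otimes n}\jz$. Picking in \eqref{Aud ineq} the exponent $s_i$ that minimises $Q_s(\rho\|\sigma_i)$, and writing $c_i:=p^{s_i}\bz(1-p)q_i\jz^{1-s_i}$, one gets
\begin{align*}
p\,\Tr\bz\rho^{\otimes n}R_{i,n}\jz&\le c_i\,Q_{\min}(\rho,\sigma_i)^n,\\
(1-p)q_i\,\Tr\bz\sigma_i^{\otimes n}(I-R_{i,n})\jz&\le c_i\,Q_{\min}(\rho,\sigma_i)^n.
\end{align*}
Now set $T_n:=I-\bigvee_i R_{i,n}$, i.e.\ \emph{reject $\rho$ as soon as one of the pairwise tests rejects}. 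Since $T_n\le I-R_{i,n}$ for every $i$, the type-II errors are controlled at once: $(1-p)q_i\,\Tr\bz\sigma_i^{\otimes n}T_n\jz\le c_i\,Q_{\min}(\rho,\sigma_i)^n$, decaying at rate $C(\rho,\sigma_i)\ge\min_j C(\rho,\sigma_j)$. In the commuting (classical) case this already finishes the argument: there $\bigvee_i R_{i,n}\le\sum_i R_{i,n}$, so the type-I error is $\le\sum_i\Tr\bz\rho^{\otimes n}R_{i,n}\jz$, which also decays at rate $\min_i C(\rho,\sigma_i)$; this recovers Salikhov's classical result \cite{salikhov73}.

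The crux --- and, I expect, the main obstacle --- is bounding the type-I error $\Tr\bz\rho^{\otimes n}\bigvee_i R_{i,n}\jz$ when the $R_{i,n}$ do not commute. What is wanted is a \emph{loss-free quantum union bound}: a constant $c_r$, depending only on $r$, with $\Tr\bz\rho^{\otimes n}\bigvee_i R_{i,n}\jz\le c_r\sum_i\Tr\bz\rho^{\otimes n}R_{i,n}\jz$. No such universal inequality exists for arbitrary projections: already at $r=2$, taking $R_1\neq R_2$ rank one and a unit vector nearly orthogonal to both in their joint span gives $\Tr[\rho^{\otimes n}(R_1\vee R_2)]=1$ with $\Tr[\rho^{\otimes n}R_1]+\Tr[\rho^{\otimes n}R_2]$ arbitrarily small. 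Hence the argument must exploit the extra structure of the $R_{i,n}$ --- each is a Neyman--Pearson test against the \emph{same} reference $\rho^{\otimes n}$. The generic surrogates (Gao's quantum union bound for the sequential measurement $I-R_{1,n},\dots,I-R_{r,n}$, or a Hayashi--Nagaoka / pretty-good combination) only give a bound with a \emph{square root} on the right-hand side, which degrades the exponent to $\half\min_i C(\rho,\sigma_i)$ --- exactly the loss that Theorem~\ref{thm:averaged upper bounds} leaves open. Three routes towards removing it suggest themselves: (i) pinch all the operators by $\rho^{\otimes n}$ (or pass to the Nussbaum--Szko\l a representation \cite{NSz}), so that the $R_{i,n}$ become asymptotically commuting on the relevant typical subspace and the classical union bound transfers up to subexponential corrections; (ii) avoid the join entirely by rewriting $C(\rho,\sigma_i)$ through Hoeffding-type exponents and invoking a composite-hypothesis Hoeffding bound --- its Stein endpoint being known \cite{BS,M13} --- and then optimising over the type-II rate; (iii) replace the single merge by a finer recursion over the $r$ hypotheses, in the spirit of Section~\ref{sec:nussbaum}, arranged so that no step loses a factor $\half$.

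When $\rho$ is pure the obstacle disappears and the statement holds unconditionally; this is worth recording, both as a special case and as a template for a loss-free argument. If $\rho=\pr{\psi}$ then $p\rho^{\otimes n}$ is rank one, so Lemma~\ref{lem:FTpure} gives
\begin{align*}
P_e^*\bz p\rho^{\otimes n},(1-p)\sum\nolimits_i q_i\sigma_i^{\otimes n}\jz
&\le\frac{1}{p}\,F\bz p\rho^{\otimes n},(1-p)\sum\nolimits_i q_i\sigma_i^{\otimes n}\jz^2\\
&=(1-p)\sum\nolimits_i q_i\,\langle\psi|\sigma_i|\psi\rangle^n,
\end{align*}
and by \eqref{fid chernoff} one has $\langle\psi|\sigma_i|\psi\rangle=F(\rho,\sigma_i)^2\le Q_{\min}(\rho,\sigma_i)$. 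Hence $P_e^*\le(1-p)\sum_i q_i\,Q_{\min}(\rho,\sigma_i)^n$, whose exponential rate (Lemma~\ref{lemma:maximum rate}) equals $\max_i\log Q_{\min}(\rho,\sigma_i)=-\min_i C(\rho,\sigma_i)$; together with the optimality bound this yields the limit in Conjecture~\ref{con:averaged asymptotics2} exactly in the pure case.
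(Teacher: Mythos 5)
What you have written is, by your own account, not a proof of the statement but an outline plus two special cases, and that is the honest assessment: the general (mixed $\rho$) case remains unproved in your proposal. To be fair, it remains unproved in the paper as well --- the statement is posed there as Conjecture \ref{con:averaged asymptotics2}, and the paper only establishes (i) the optimality direction, (ii) the weaker achievability bound with the exponent degraded by a factor $\half$ (Theorems \ref{thm:averaged upper bounds} and \ref{thm:averaged iid}), and (iii) the full statement when $\rho$ is pure or when all operators commute (Appendix \ref{sec:classical}). So the genuine gap in your argument --- the absence of a loss-free way to control the type-I error $\Tr\bz\rho^{\otimes n}\bigvee_i R_{i,n}\jz$ by the sum of the pairwise errors --- is precisely the open problem, and your diagnosis of it is accurate: it is the same obstruction the paper isolates at the end of Section \ref{sec:averaged}, where the counterexample with nearly parallel, nearly orthogonal-to-$\psi_1$ vectors shows that no inequality of the form \eqref{eq:wrong1}, nor its amended version \eqref{eq:wrong2}, can hold with a constant depending only on $r$. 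One caution on your route (i): the Nussbaum--Szko\l a representation is tailored to a single pair of states, and for different pairs $(\rho,\sigma_i)$ it uses different eigenbases, so it is not clear that pinching makes the $R_{i,n}$ asymptotically commuting on a common typical subspace; this is where that route would have to do real work.

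Where you do give complete arguments, they coincide in substance with the paper's. Your optimality bound is exactly the paper's, via \eqref{concavity of error} and the binary Chernoff theorem \eqref{binary Chernoff}. Your pure-state argument is essentially the paper's proof of the second half of Theorem \ref{thm:averaged iid}, but slightly more direct: since $p\rho^{\otimes n}$ is rank one you can evaluate $F\bz p\rho^{\otimes n},(1-p)\sum_i q_i\sigma_i^{\otimes n}\jz^2$ exactly as $p(1-p)\sum_i q_i\langle\psi|\sigma_i|\psi\rangle^n$ and then invoke Lemma \ref{lem:FTpure} and \eqref{fid chernoff}, whereas the paper routes the same estimate through the subadditivity of the fidelity (Lemma \ref{lemma:fidelity subadditivity}) and the Cauchy--Schwarz inequality in \eqref{pure upper bounds}; both give the rate $-\min_i C(\rho,\sigma_i)$. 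Your classical union-bound argument likewise matches the subadditivity argument \eqref{classical subadditivity} of Appendix \ref{sec:classical}. In short: your partial results are correct and reproduce what the paper proves, your proposed general construction is sensible, but the decisive step is missing --- as it is in the paper, which is why the statement is labelled a conjecture.
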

\medskip

Below we will prove Conjecture \ref{con:averaged asymptotics} in the case where
$\rho$ is a pure state, and prove a weaker version in the general case.
These will follow from the following single-shot decoupling bounds, which are the main results of this section:

\begin{thm}\label{thm:averaged upper bounds}
Let $A,B_1,\ldots,B_r\in\B(\hil)_+$. Then
\begin{align}\label{averaged single-shot bounds}
P_e^*\bz A,\sum\nolimits_j B_j\jz\le
\sum_j F(A,B_j)\le
\sum_j\sqrt{\Tr (A+B_j)}\sqrt{P_e^*\bz A,B_j\jz}.
\end{align}
If $A$ is rank one then we also have
\begin{align}\label{pure upper bounds}
P_e^*\bz A,\sum\nolimits_j B_j\jz
\le
\begin{cases}
\bz\sum\nolimits_j\Tr B_j\jz\sum\nolimits_j F\bz \frac{A}{\Tr A},\frac{B_j}{\Tr B_j}\jz^2
\le
\bz\sum\nolimits_j\Tr B_j\jz\sum\nolimits_j\frac{\Tr A+\Tr B_j}{(\Tr A)(\Tr B_j)}P_e^*(A,B_j)\\
\bz\sum\nolimits_j\sqrt{1+\frac{\Tr B_j}{\Tr A}}\sqrt{P_e^*\bz A,B_j\jz}\jz^2
\le
\bz\sum\nolimits_j\bz 1+\frac{\Tr B_j}{\Tr A}\jz\jz
\sum\nolimits_j P_e^*\bz A,B_j\jz.
\end{cases}
\end{align}
\end{thm}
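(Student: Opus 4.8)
The plan is to obtain all four displayed chains from the Fuchs--van de Graaf inequalities already collected in Lemma~\ref{lemma:FT bounds} and Lemma~\ref{lem:FTpure}, together with one additional ingredient, the subadditivity of the fidelity in its second argument:
\begin{equation}\label{plan:subadd}
F\bz A,\sum\nolimits_j B_j\jz\le\sum_j F(A,B_j).
\end{equation}
Using $F(A,B)=\Tr(A^{1/2}BA^{1/2})^{1/2}$ and putting $C_j:=A^{1/2}B_jA^{1/2}\ge 0$, inequality \eqref{plan:subadd} is precisely $\Tr\bz\sum_j C_j\jz^{1/2}\le\sum_j\Tr C_j^{1/2}$, i.e.\ the subadditivity of $X\mapsto\Tr X^{1/2}$ on $\B(\hil)_+$. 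This is a standard matrix inequality; a self-contained derivation uses the variational identity $\Tr X^{1/2}=\tfrac12\min_{Y>0}\Tr\bz XY^{-1}+Y\jz$ with the trial operator $Y:=\sum_k C_k^{1/2}$, together with the observation that $Y\ge C_j^{1/2}$ forces $Y^{-1}\le C_j^{-1/2}$ and hence $\Tr(C_jY^{-1})=\Tr(C_j^{1/2}Y^{-1}C_j^{1/2})\le\Tr C_j^{1/2}$ for each $j$ (the non-invertible case being handled by replacing $C_j$ with $C_j+\ep I$ and letting $\ep\to 0$).

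Granting \eqref{plan:subadd}, the general bounds \eqref{averaged single-shot bounds} follow immediately: the first inequality of Lemma~\ref{lemma:FT bounds} applied to the pair $(A,\sum_j B_j)$ gives $P_e^*(A,\sum_j B_j)\le F(A,\sum_j B_j)$, then \eqref{plan:subadd} gives $F(A,\sum_j B_j)\le\sum_j F(A,B_j)$, and finally the last inequality of Lemma~\ref{lemma:FT bounds}, applied term by term, gives $F(A,B_j)\le\sqrt{\Tr(A+B_j)}\sqrt{P_e^*(A,B_j)}$; concatenating these proves \eqref{averaged single-shot bounds}.

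For the rank-one case \eqref{pure upper bounds} I would first record the exact identity $F(A,\sum_j B_j)^2=\sum_j F(A,B_j)^2$, valid because $F(A,B)^2=\Tr AB$ whenever $A$ has rank one (the identity used in the proof of Lemma~\ref{lem:FTpure}); in particular the rank-one statement needs no appeal to \eqref{plan:subadd}, this identity being stronger than $F(A,\sum_j B_j)^2\le(\sum_j F(A,B_j))^2$. Together with Lemma~\ref{lem:FTpure} for the pair $(A,\sum_j B_j)$ this gives $P_e^*(A,\sum_j B_j)\le\tfrac1{\Tr A}\sum_j F(A,B_j)^2$, and the two branches of \eqref{pure upper bounds} are then routine. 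For the first branch, write $F(A,B_j)^2=(\Tr A)(\Tr B_j)\,F\bz\tfrac A{\Tr A},\tfrac{B_j}{\Tr B_j}\jz^2$, apply $\sum_j a_jx_j\le\bz\sum_j a_j\jz\bz\sum_j x_j\jz$ (valid for nonnegative $a_j,x_j$) with $a_j=\Tr B_j$, and bound each $F\bz\tfrac A{\Tr A},\tfrac{B_j}{\Tr B_j}\jz^2=\tfrac{F(A,B_j)^2}{(\Tr A)(\Tr B_j)}\le\tfrac{\Tr A+\Tr B_j}{(\Tr A)(\Tr B_j)}P_e^*(A,B_j)$ by squaring the last inequality of Lemma~\ref{lemma:FT bounds}. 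For the second branch, estimate $\sum_j F(A,B_j)^2\le\bz\sum_j F(A,B_j)\jz^2\le\bz\sum_j\sqrt{\Tr(A+B_j)}\sqrt{P_e^*(A,B_j)}\jz^2$ (again Lemma~\ref{lemma:FT bounds}), pull the factor $\tfrac1{\Tr A}$ inside the square so that $\sqrt{\Tr(A+B_j)/\Tr A}=\sqrt{1+\Tr B_j/\Tr A}$, and finish with the Cauchy--Schwarz inequality $\bz\sum_j u_jv_j\jz^2\le\bz\sum_j u_j^2\jz\bz\sum_j v_j^2\jz$.

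Everything except \eqref{plan:subadd} is pure bookkeeping around Lemma~\ref{lemma:FT bounds}, so the only point at which I expect any difficulty is establishing the subadditivity \eqref{plan:subadd} (equivalently, of $\Tr(\cdot)^{1/2}$). If one wants to sidestep this Rotfel'd-type inequality, the general bound \eqref{averaged single-shot bounds} can instead be derived by a suboptimal-measurement argument in the spirit of Section~\ref{sec:Tyson}: run the pretty good measurement on the binary pair $(A,\sum_j B_j)$, whose error equals $2\sum_j\Tr T^{-1/2}AT^{-1/2}B_j$ with $T:=A+\sum_j B_j$, and bound each summand by $\tfrac12 F(A,B_j)$ using the per-pair form of the Barnum--Knill estimate implicit in the proof of Theorem~\ref{th:BKbound} (with the operator $\sum_k A_k$ there playing the role of $T$); the rank-one refinement then goes through verbatim via $F(A,\sum_j B_j)^2=\sum_j F(A,B_j)^2$.
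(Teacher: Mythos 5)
Your proof is correct and follows essentially the same route as the paper's: both parts rest on Lemma~\ref{lemma:FT bounds} and Lemma~\ref{lem:FTpure} combined with the subadditivity $F\bz A,\sum_j B_j\jz\le\sum_j F(A,B_j)$, which the paper isolates as Lemma~\ref{lemma:fidelity subadditivity} and proves by integrating $\frac{\der}{\der t}\Tr\sqrt{X+tY}$ and using operator monotonicity of $x^{-1/2}$, whereas you prove the same $\Tr(\cdot)^{1/2}$ subadditivity via the variational bound $\Tr X^{1/2}\le\half\Tr(XY^{-1}+Y)$ with $Y=\sum_k C_k^{1/2}$ (your $\ep$-regularization correctly handles the singular case). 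Your only other deviation --- using the exact identity $F\bz A,\sum_j B_j\jz^2=\sum_j F(A,B_j)^2$ for rank-one $A$ in place of subadditivity followed by Cauchy--Schwarz --- is valid and leads to the same bounds in \eqref{pure upper bounds}.
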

\bigskip

Before proving Theorem \ref{thm:averaged upper bounds}, we first explore some of its implications.
We start with the following:

\begin{cor}\label{cor:averaged asymptotics}
For every $n\in\bN$, let $A_n,B_{1,n},\ldots,B_{r,n}\in\B(\hil_n)_+$, where $\hil_n$ is some finite-dimensional Hilbert space.
If $\limsup_n\Tr(A_n+\sum\nolimits_j B_{j,n})<+\infty$ then
\begin{align*}
\pls\bz \vecc{A},\sum\nolimits_j \vecc{B}_j\jz\le
\half\max_{1\le j\le r}\pls\bz\vecc{A}, \vecc{B}_j\jz.
\end{align*}
If $A_n$ is rank one for every large enough $n$
and $\limsup_n\Tr(A_n+\sum\nolimits_j B_{j,n})/\Tr A_n<+\infty$
then
\begin{align*}
\pls\bz \vecc{A},\sum\nolimits_j \vecc{B}_j\jz\le
\max_{1\le j\le r}\pls\bz \vecc{A},\vecc{B_j}\jz.
\end{align*}
\end{cor}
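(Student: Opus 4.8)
The plan is to derive Corollary \ref{cor:averaged asymptotics} directly from the single-shot bounds in Theorem \ref{thm:averaged upper bounds} by taking logarithms, dividing by $n$, and passing to the limit superior, using Lemma \ref{lemma:maximum rate} to handle the finite sums over $j$. First I would record that $P_e^*(\vecc{A}_n,\sum_j \vecc{B}_{j,n})$ is exactly the left-hand quantity whose rate is $\pls(\vecc{A},\sum_j\vecc{B}_j)$, and similarly each $P_e^*(A_n,B_{j,n})$ carries the rate $\pls(\vecc{A},\vecc{B}_j)$.

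For the first (general) assertion, I would start from the chain in \eqref{averaged single-shot bounds}, namely
\begin{align*}
P_e^*\bz A_n,\sum\nolimits_j B_{j,n}\jz\le
\sum_j\sqrt{\Tr (A_n+B_{j,n})}\sqrt{P_e^*\bz A_n,B_{j,n}\jz}.
\end{align*}
Since $\Tr(A_n+B_{j,n})\le \Tr(A_n+\sum_k B_{k,n})$, the boundedness hypothesis $\limsup_n\Tr(A_n+\sum_j B_{j,n})<+\infty$ gives a uniform constant $M$ with $\Tr(A_n+B_{j,n})\le M$ for all large $n$ and all $j$; hence $\frac1n\log\sqrt{\Tr(A_n+B_{j,n})}\to 0$. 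Applying $\frac1n\log(\cdot)$ to the displayed inequality and using Lemma \ref{lemma:maximum rate} on the right-hand sum,
\begin{align*}
\pls\bz \vecc{A},\sum\nolimits_j \vecc{B}_j\jz
\le \max_{1\le j\le r}\limsup_{n\to\infty}\frac1n\log\brac{\sqrt{\Tr(A_n+B_{j,n})}\sqrt{P_e^*(A_n,B_{j,n})}}
= \max_{1\le j\le r}\half\,\pls\bz\vecc{A},\vecc{B}_j\jz,
\end{align*}
which is the claim. (A harmless subtlety: if some $P_e^*(A_n,B_{j,n})$ vanishes along a subsequence the log is $-\infty$ and the inequality is trivially consistent; one can also simply note all quantities are $\ge 0$ so Lemma \ref{lemma:maximum rate} applies after discarding identically-zero summands.)

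For the rank-one assertion I would instead use the second case of \eqref{pure upper bounds}, giving
\begin{align*}
P_e^*\bz A_n,\sum\nolimits_j B_{j,n}\jz
\le \bz\sum\nolimits_j\bz 1+\tfrac{\Tr B_{j,n}}{\Tr A_n}\jz\jz\sum\nolimits_j P_e^*\bz A_n,B_{j,n}\jz.
\end{align*}
Now the hypothesis $\limsup_n \Tr(A_n+\sum_j B_{j,n})/\Tr A_n<+\infty$ bounds the prefactor $\sum_j(1+\Tr B_{j,n}/\Tr A_n)$ by a constant for large $n$, so its contribution to the rate is zero, and Lemma \ref{lemma:maximum rate} applied to $\sum_j P_e^*(A_n,B_{j,n})$ yields $\pls(\vecc{A},\sum_j\vecc{B}_j)\le \max_j\pls(\vecc{A},\vecc{B}_j)$, with no factor $\half$. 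The only real care needed anywhere is the bookkeeping of the $\limsup$ over products and sums, which is precisely what Lemma \ref{lemma:maximum rate} is designed to absorb; I do not anticipate a genuine obstacle, since Theorem \ref{thm:averaged upper bounds} does all the analytic work. The main point to get right is matching the correct case of \eqref{pure upper bounds}/\eqref{averaged single-shot bounds} to each assertion so that the $\sqrt{\cdot}$ (hence the factor $\half$) appears exactly where claimed.
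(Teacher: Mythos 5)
Your proposal is correct and is essentially the argument the paper intends: the corollary is an immediate consequence of the single-shot bounds of Theorem \ref{thm:averaged upper bounds} (the $\sqrt{\cdot}$ bound in \eqref{averaged single-shot bounds} for the general case, and the second case of \eqref{pure upper bounds} for the rank-one case) combined with Lemma \ref{lemma:maximum rate}, with the trace hypotheses exactly killing the prefactors' contribution to the rate. Your bookkeeping, including the choice of the second case of \eqref{pure upper bounds} so that the stated hypothesis bounds the prefactor and no factor $\half$ appears, matches the paper's route.
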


\begin{rem}
Note that $P_e^*\bz A,\sum\nolimits_j B_j\jz\ge P_e^*\bz A,B_j\jz$ for every $j$, and hence
\begin{align*}
\max_{1\le j\le r} P_e^*\bz A,B_j\jz\le P_e^*\bz A,\sum\nolimits_j B_j\jz.
\end{align*}
In the asymptotic setting this yields
\begin{align*}
\max_{1\le j\le r}\pli\bz \vecc{A},\vecc{B}_j\jz\le\pli\bz \vecc{A},\sum\nolimits_j \vecc{B}_j\jz,
\end{align*}
complementing the inequalities of Corollary \ref{cor:averaged asymptotics}.
\end{rem}
\medskip

Applying Corollary \ref{cor:averaged asymptotics} to the problem of i.i.d.~vs.~averaged i.i.d.~state discrimination, we finally get the following:
\begin{thm}\label{thm:averaged iid}
In the i.i.d.~vs.~averaged i.i.d.~case described at the beginning of the section,
\begin{align*}
-\min_i C(\rho,\sigma_i)
&\le
\liminf_{n\to\infty}\frac{1}{n}\log P_e^*\bz p\rho^{\otimes n},(1-p)\sum\nolimits_i q_i\sigma_i^{\otimes n}\jz\\
&\le
\limsup_{n\to\infty}\frac{1}{n}\log P_e^*\bz p\rho^{\otimes n},(1-p)\sum\nolimits_i q_i\sigma_i^{\otimes n}\jz\le
% \\
% &\le
-\half\min_i C(\rho,\sigma_i).
\end{align*}
If $\rho$ is pure then we have
\begin{align*}
\lim_{n\to\infty}\frac{1}{n}\log P_e^*\bz p\rho^{\otimes n},(1-p)\sum\nolimits_i q_i\sigma_i^{\otimes n}\jz
=
-\min_i C(\rho,\sigma_i).
\end{align*}
\end{thm}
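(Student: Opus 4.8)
The plan is to obtain the lower bound essentially for free and to read the upper bound off Corollary~\ref{cor:averaged asymptotics}. The lower bound is already recorded in the text: convexity of the trace norm gives \eqref{concavity of error}, whence $P_e^*\bz p\rho^{\otimes n},(1-p)\sum_i q_i\sigma_i^{\otimes n}\jz\ge\sum_i q_i P_e^*\bz p\rho^{\otimes n},(1-p)\sigma_i^{\otimes n}\jz$, and Lemma~\ref{lemma:maximum rate} together with the binary Chernoff theorem \eqref{binary Chernoff} applied to each pair $(\rho,\sigma_i)$ (the $n$-independent prefactors $p$ and $(1-p)q_i$ not affecting the exponential rate) yields $\pli\bz\{p\rho^{\otimes n}\}_n,\{(1-p)\sum_i q_i\sigma_i^{\otimes n}\}_n\jz\ge-\min_i\ch{\rho}{\sigma_i}$. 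Here and below I assume $q_i>0$ for every $i$, since a hypothesis of zero weight does not enter the problem. So only the upper bound remains.

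For the upper bound I would apply Corollary~\ref{cor:averaged asymptotics} with $A_n:=p\rho^{\otimes n}$ and $B_{j,n}:=(1-p)q_j\sigma_j^{\otimes n}$, so that $\sum_j B_{j,n}=(1-p)\sum_j q_j\sigma_j^{\otimes n}$ and $\Tr\bz A_n+\sum_j B_{j,n}\jz=1$ for all $n$, hence $\limsup_n\Tr\bz A_n+\sum_j B_{j,n}\jz<+\infty$. The first part of the corollary then gives
\[
\pls\bz\vecc{A},\sum\nolimits_j\vecc{B}_j\jz\le\half\max_{1\le j\le r}\pls\bz\vecc{A},\vecc{B}_j\jz .
\]
Each $\pls(\vecc{A},\vecc{B}_j)$ is the exponent of a binary i.i.d.\ discrimination problem, so by \eqref{Aud error bound} and the i.i.d.\ computation in Section~\ref{sec:chernoff} we get $\pls(\vecc{A},\vecc{B}_j)\le-\ch{\rho}{\sigma_j}$ (equality in fact holds by the optimality result of \cite{NSz}, but only ``$\le$'' is needed). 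Therefore $\pls\bz\vecc{A},\sum_j\vecc{B}_j\jz\le-\half\min_j\ch{\rho}{\sigma_j}$, which is the claimed upper bound.

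If $\rho$ is pure, write $\rho=\pr{\psi}$; then $A_n=p\,\pr{\psi^{\otimes n}}$ is rank one for every $n$ and $\Tr\bz A_n+\sum_j B_{j,n}\jz/\Tr A_n=1/p<+\infty$, so the second part of Corollary~\ref{cor:averaged asymptotics} applies and removes the factor $\half$, giving $\pls\bz\vecc{A},\sum_j\vecc{B}_j\jz\le\max_{1\le j\le r}\pls(\vecc{A},\vecc{B}_j)\le-\min_j\ch{\rho}{\sigma_j}$. Combined with the lower bound and the trivial inequality $\pli\le\pls$, all three quantities coincide, so the limit exists and equals $-\min_i\ch{\rho}{\sigma_i}$.

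I do not anticipate a genuine obstacle here: Theorem~\ref{thm:averaged iid} is a direct translation of Corollary~\ref{cor:averaged asymptotics} to the i.i.d.\ regime, with the substantive work sitting in the single-shot decoupling bounds of Theorem~\ref{thm:averaged upper bounds}. The only mild subtleties are that constant (i.e.\ $n$-independent) prefactors $p$, $1-p$, $q_j$ do not influence the exponential rates, and the harmless bookkeeping associated with possibly zero weights.
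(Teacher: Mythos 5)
Your proposal is correct and follows the paper's own route: the lower bound is exactly the convexity-of-trace-norm argument recorded in \eqref{concavity of error} combined with the binary Chernoff theorem, and the upper bound (with the factor $\half$ removed in the pure case) is precisely the application of Corollary \ref{cor:averaged asymptotics} with $A_n=p\rho^{\otimes n}$, $B_{j,n}=(1-p)q_j\sigma_j^{\otimes n}$ that the paper itself invokes. Your added remarks on $n$-independent prefactors and on assuming $q_i>0$ are harmless clarifications, not deviations.
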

\bigskip

In realistic scenarios it is more natural to assume that the hypotheses are represented by sets of states with many elements
(composite hypothesis) rather than one single state (simple hypothesis). Here we briefly consider the simplest such scenario,
where we have two hypotheses, of which one is simple, represented by some PSD operator $A$, and the other one is composite,
represented by a finite set of PSD operators $\{B_1,\ldots,B_r\}$. For a given POVM $\{E,I-E\}$,
the worst-case error probability is given by $\Tr A(I-E)+\max_{1\le i\le r}\Tr B_i E$, and we define
\begin{align*}
P_e^*\bz A,\{B_i\}_{i=1}^r\jz:=\inf\left\{\Tr A(I-E)+\max_{1\le i\le r}\Tr B_i E:\,0\le E\le I\right\}.
\end{align*}
For every $i$ and every $E$, we have
\begin{align*}
\Tr A(I-E)+ \Tr B_i E&\le \Tr A(I-E)+\max_{1\le i\le r}\Tr B_i E
\le
\Tr A(I-E)+\Tr \sum_{i=1}^rB_i E,
\end{align*}
and taking the infimum in $E$ yields
\begin{align*}
\max_{1\le i\le r}P_e^*\bz A,B_i\jz\le P_e^*\bz A,\{B_i\}_{i=1}^r\jz\le P_e^*\bz A,\sum\nolimits_i B_i\jz.
\end{align*}

Corollary \ref{cor:averaged asymptotics} then immediately yields the following:

\begin{cor}
For every $n\in\bN$, let $A_n,B_{1,n},\ldots,B_{r,n}\in\B(\hil_n)_+$, where $\hil_n$ is some finite-dimensional Hilbert space, and let
\begin{align*}
\pli\bz \vecc{A},\{\vecc{B}_i\}_{i=1}^r\jz&:=\liminf_{n\to+\infty}\frac{1}{n}\log P_e^*\bz A,\{B_i\}_{i=1}^r\jz\\
\pls\bz \vecc{A},\{\vecc{B}_i\}_{i=1}^r\jz&:=\limsup_{n\to+\infty}\frac{1}{n}\log P_e^*\bz A,\{B_i\}_{i=1}^r\jz.
\end{align*}
If $\limsup_n\Tr(A_n+\sum\nolimits_j B_{j,n})<+\infty$ then
\begin{align*}
\max_{1\le i\le r}\pli\bz\vecc{A}, \vecc{B}_i\jz\le
\pli\bz \vecc{A},\{\vecc{B}_i\}_{i=1}^r\jz\le
\pls\bz \vecc{A},\{\vecc{B}_i\}_{i=1}^r\jz\le
\half\max_{1\le i\le r}\pls\bz\vecc{A}, \vecc{B}_i\jz.
\end{align*}
If $A_n$ is rank one for every large enough $n$
and $\limsup_n\Tr(A_n+\sum\nolimits_j B_{j,n})/\Tr A_n<+\infty$
then
\begin{align*}
\max_{1\le i\le r}\pli\bz\vecc{A}, \vecc{B}_i\jz\le
\pli\bz \vecc{A},\{\vecc{B}_i\}_{i=1}^r\jz\le
\pls\bz \vecc{A},\{\vecc{B}_i\}_{i=1}^r\jz\le
\max_{1\le i\le r}\pls\bz\vecc{A}, \vecc{B}_i\jz.
\end{align*}
\end{cor}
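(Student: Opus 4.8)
The plan is to derive this corollary directly from Corollary~\ref{cor:averaged asymptotics}, combined with the elementary two-sided estimate
\[
\max_{1\le i\le r}P_e^*\bz A_n,B_{i,n}\jz\le P_e^*\bz A_n,\{B_{i,n}\}_{i=1}^r\jz\le P_e^*\bz A_n,\sum\nolimits_i B_{i,n}\jz,
\]
which was established just above the statement and which holds for every $n\in\bN$.

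First I would treat the lower bound. For each fixed $j\in\{1,\ldots,r\}$, the left inequality above gives $P_e^*\bz A_n,\{B_{i,n}\}_{i=1}^r\jz\ge P_e^*\bz A_n,B_{j,n}\jz$, and since $\log$ is increasing and $n>0$ this yields $\frac1n\log P_e^*\bz A_n,\{B_{i,n}\}_{i=1}^r\jz\ge\frac1n\log P_e^*\bz A_n,B_{j,n}\jz$. Taking $\liminf_{n\to\infty}$ on both sides (using monotonicity of $\liminf$) gives $\pli\bz\vecc{A},\{\vecc{B}_i\}_{i=1}^r\jz\ge\pli\bz\vecc{A},\vecc{B}_j\jz$, and maximizing over $j$ produces the claimed lower bound $\max_{1\le i\le r}\pli\bz\vecc{A},\vecc{B}_i\jz\le\pli\bz\vecc{A},\{\vecc{B}_i\}_{i=1}^r\jz$. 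The middle inequality $\pli\le\pls$ is immediate from $\liminf\le\limsup$.

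For the upper bound I would apply $\limsup_{n\to\infty}\frac1n\log(\cdot)$ to the right inequality of the two-sided estimate, obtaining $\pls\bz\vecc{A},\{\vecc{B}_i\}_{i=1}^r\jz\le\pls\bz\vecc{A},\sum\nolimits_i\vecc{B}_i\jz$. Under the hypothesis $\limsup_n\Tr\bz A_n+\sum\nolimits_j B_{j,n}\jz<+\infty$, Corollary~\ref{cor:averaged asymptotics} bounds the right-hand side by $\half\max_{1\le j\le r}\pls\bz\vecc{A},\vecc{B}_j\jz$, which gives the first displayed chain. Under the stronger hypotheses that $A_n$ is rank one for all large $n$ and $\limsup_n\Tr\bz A_n+\sum\nolimits_j B_{j,n}\jz/\Tr A_n<+\infty$, the same corollary removes the factor $\half$, giving the second displayed chain.

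I do not expect a genuine obstacle here; the only point requiring (minor) care is to observe that the hypotheses of the present corollary are \emph{exactly} the hypotheses needed to invoke Corollary~\ref{cor:averaged asymptotics} for the sequences $\vecc{A},\vecc{B}_1,\ldots,\vecc{B}_r$, so that no additional estimates on traces are needed. Everything else is monotonicity of $\liminf$ and $\limsup$ together with the definitions of $\pli$ and $\pls$ for the composite-hypothesis problem.
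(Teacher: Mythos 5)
Your proposal is correct and follows exactly the paper's route: the paper derives this corollary immediately from the sandwich $\max_i P_e^*(A,B_i)\le P_e^*(A,\{B_i\}_{i=1}^r)\le P_e^*(A,\sum_i B_i)$ established just before the statement, combined with Corollary \ref{cor:averaged asymptotics} applied under the same trace hypotheses. Nothing is missing; your care in checking that the hypotheses match those of Corollary \ref{cor:averaged asymptotics} is precisely the only point the paper's (implicit) proof relies on.
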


Taking now $A_n:=\rho^{\otimes n},\,B_{i,n}:=\sigma_i^{\otimes n}$, where $\rho,\sigma_1,\ldots,\sigma_r$ are 
density operators on some finite-dimensional Hilbert space, we get the following analogous statement to Theorem \ref{thm:averaged iid}:
\begin{thm}
Let $\rho,\sigma_1,\ldots,\sigma_r$ be density operators on some finite-dimensional Hilbert space. Then
\begin{align*}
-\min_i C(\rho,\sigma_i)
&\le
\liminf_{n\to\infty}\frac{1}{n}\log P_e^*\bz \rho^{\otimes n},\{\sigma_i^{\otimes n}\}_{i=1}^r\jz\\
&\le
\limsup_{n\to\infty}\frac{1}{n}\log P_e^*\bz \rho^{\otimes n},\{\sigma_i^{\otimes n}\}_{i=1}^r\jz
\le
-\half\min_i C(\rho,\sigma_i).
\end{align*}
If $\rho$ is pure then we have
\begin{align*}
\lim_{n\to\infty}\frac{1}{n}\log P_e^*\bz \rho^{\otimes n},\{\sigma_i^{\otimes n}\}_{i=1}^r\jz
=
-\min_i C(\rho,\sigma_i).
\end{align*}
\end{thm}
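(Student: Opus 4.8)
The plan is to obtain this theorem as an immediate specialization of the Corollary stated just above it, combined with the binary quantum Chernoff bound theorem \eqref{binary Chernoff} to evaluate the pairwise exponents. First I would set $A_n:=\rho^{\otimes n}$ and $B_{i,n}:=\sigma_i^{\otimes n}$ for $i=1,\ldots,r$. Since $\rho$ and the $\sigma_i$ are states, $\Tr(A_n+\sum_j B_{j,n})=r+1$ for every $n$, so the hypothesis $\limsup_n\Tr(A_n+\sum_j B_{j,n})<+\infty$ of the first part of that Corollary holds trivially; if moreover $\rho$ is pure, then $\Tr A_n=1$ and $\Tr(A_n+\sum_j B_{j,n})/\Tr A_n=r+1$ is likewise bounded, so the rank-one part of the Corollary applies as well.

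Next I would identify the pairwise exponents. For each fixed $i$, the pair $(\vecc{A},\vecc{B}_i)=(\{\rho^{\otimes n}\}_n,\{\sigma_i^{\otimes n}\}_n)$ is an i.i.d.\ binary testing problem with unit ``priors'', so by \eqref{binary Chernoff} -- using both its achievability and its optimality halves -- one has $\pli(\vecc{A},\vecc{B}_i)=\pls(\vecc{A},\vecc{B}_i)=-\ch{\rho}{\sigma_i}$. Hence $\max_{1\le i\le r}\pli(\vecc{A},\vecc{B}_i)=\max_{1\le i\le r}\pls(\vecc{A},\vecc{B}_i)=-\min_i\ch{\rho}{\sigma_i}$. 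Substituting these values into the two chains of inequalities in the preceding Corollary yields at once: the lower bound $-\min_i C(\rho,\sigma_i)\le\liminf_n\frac1n\log P_e^*(\rho^{\otimes n},\{\sigma_i^{\otimes n}\})$ from the left-hand inequality, and $\limsup_n\frac1n\log P_e^*(\rho^{\otimes n},\{\sigma_i^{\otimes n}\})\le-\half\min_i C(\rho,\sigma_i)$ from the right-hand one in the general (mixed $\rho$) case. When $\rho$ is pure, the rank-one part of the Corollary upgrades the upper bound to $-\min_i C(\rho,\sigma_i)$, which together with the matching lower bound forces $\liminf_n=\limsup_n=-\min_i C(\rho,\sigma_i)$, i.e.\ the limit exists and takes that value.

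There is no real obstacle in the theorem itself -- it is pure bookkeeping once the Corollary is granted. The substance has already been carried out upstream, in Theorem \ref{thm:averaged upper bounds} and the corollaries derived from it: the single-shot decoupling bound $P_e^*(A,\sum_j B_j)\le\sum_j F(A,B_j)\le\sum_j\sqrt{\Tr(A+B_j)}\sqrt{P_e^*(A,B_j)}$ of \eqref{averaged single-shot bounds} and its rank-one refinement \eqref{pure upper bounds}, together with the sandwich $\max_i P_e^*(A_n,B_{i,n})\le P_e^*(A_n,\{B_{i,n}\}_{i=1}^r)\le P_e^*(A_n,\sum_i B_{i,n})$ for the composite-hypothesis error. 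If I were to flag any delicate point, it is simply that the loss of the factor $\half$ in the mixed case traces precisely to the square root on $P_e^*(A,B_j)$ in \eqref{averaged single-shot bounds}, which cannot be removed in general but disappears when $A$ is rank one via \eqref{pure upper bounds}; this is exactly why the ``$\rho$ pure'' case yields the sharp limit while the general case only gives the factor-$\half$ window.
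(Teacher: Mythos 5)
Your proposal is correct and follows exactly the paper's route: the theorem is obtained by specializing the composite-hypothesis corollary (itself a consequence of Corollary \ref{cor:averaged asymptotics} and the sandwich $\max_i P_e^*(A,B_i)\le P_e^*(A,\{B_i\})\le P_e^*(A,\sum_i B_i)$) to $A_n=\rho^{\otimes n}$, $B_{i,n}=\sigma_i^{\otimes n}$, with the pairwise exponents evaluated via the binary quantum Chernoff bound theorem. Your verification of the trace conditions and your remark on where the factor $\half$ comes from (the square root in \eqref{averaged single-shot bounds}, removed in the rank-one case by \eqref{pure upper bounds}) match the paper's reasoning.
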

\bigskip

Now we turn to the proof of  Theorem \ref{thm:averaged upper bounds}. For this we will need
the following subadditivity property of the fidelity:

\begin{lemma}\label{lemma:fidelity subadditivity}
Let $A,B_1,\ldots,B_r\in\B(\hil)_+$. Then
\begin{equation}\label{fidelity subadditivity}
F\bz A,\sum\nolimits_i B_i\jz\le
\sum\nolimits_i F(A,B_i).
\end{equation}
\end{lemma}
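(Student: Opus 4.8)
\emph{Proof sketch.} The plan is to deduce \eqref{fidelity subadditivity} from the triangle inequality for the trace norm, after rewriting $A^{1/2}B^{1/2}$, where $B:=\sum\nolimits_i B_i$, as a contraction applied to an operator assembled from the pieces $A^{1/2}B_i^{1/2}$. The bridge is a small polar–decomposition (equivalently, purification) construction.

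First I would observe that, since $0\le B_i\le B$, for every $x\in\hil$
\begin{align*}
\sum\nolimits_i\norm{B_i^{1/2}x}^2=\sum\nolimits_i\inner{x}{B_ix}=\inner{x}{Bx}=\norm{B^{1/2}x}^2 .
\end{align*}
Consequently the assignment $B^{1/2}x\mapsto\sum_i\bz B_i^{1/2}x\jz\otimes\ket i$ is well defined on $\ran B^{1/2}$ (it is well defined because $\ker B^{1/2}=\ker B\subseteq\ker B_i^{1/2}$ for every $i$) and isometric, so together with $0$ on $(\ran B^{1/2})^{\perp}$ it extends to an operator $W\colon\hil\to\hil\otimes\bC^r$ with $\norm W_\infty\le1$ and with $W^*W$ equal to the orthogonal projection onto $\supp B$. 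By construction $WB^{1/2}=\sum_i\bz I\otimes\ket i\jz B_i^{1/2}=:Z$, and since $B^{1/2}W^*W=B^{1/2}$ we get $A^{1/2}B^{1/2}=\bz A^{1/2}B^{1/2}W^*\jz W$. (Conceptually $W$ is simply the partial isometry in the polar decomposition of $Z$: the displayed identity says $Z^*Z=\sum\nolimits_i B_i=B$, which forces $\abs Z=B^{1/2}$.)

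Then I would just chain trace-norm estimates, using $\norm{XY}_1\le\norm X_1\norm Y_\infty$ together with the fact that $W$ and each $I\otimes\langle i|$ are contractions:
\begin{align*}
F\bz A,\sum\nolimits_i B_i\jz=\norm{A^{1/2}B^{1/2}}_1
&\le\norm{A^{1/2}B^{1/2}W^*}_1
=\norm{A^{1/2}Z^*}_1\\
&=\norm{\sum\nolimits_i A^{1/2}B_i^{1/2}\bz I\otimes\langle i|\jz}_1
\le\sum\nolimits_i\norm{A^{1/2}B_i^{1/2}}_1=\sum\nolimits_i F(A,B_i),
\end{align*}
which is \eqref{fidelity subadditivity}. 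The only step that really needs care is the middle paragraph — verifying that the map above is well defined and isometric on $\ran B^{1/2}$, and that $B^{1/2}W^*W=B^{1/2}$ (this is where the support bookkeeping for $B$ versus the $B_i$ enters); once $W$ is in hand the remaining manipulations are routine. A naive attempt via the semidefinite characterization $F(A,B)=\max\{\re\Tr K:\ \bigl(\begin{smallmatrix}A&K\\K^*&B\end{smallmatrix}\bigr)\ge0\}$ fails here, because adding the block witnesses for the pairs $(A,B_i)$ produces $rA$ in the top-left corner rather than $A$, so the polar-decomposition route above seems to be the efficient one.
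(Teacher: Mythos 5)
Your argument is correct, and it is genuinely different from the one in the paper. The paper deduces \eqref{fidelity subadditivity} from the subadditivity of the trace square root on PSD operators, $\Tr\sqrt{X+Y}\le\Tr\sqrt{X}+\Tr\sqrt{Y}$, which it proves via the integral representation $\Tr\sqrt{X+Y}-\Tr\sqrt{X}=\int_0^1\Tr Y\half(X+tY)^{-1/2}\,\der t$ together with the operator monotone decreasingness of $x\mapsto x^{-1/2}$; the lemma then follows by applying this to $X_i=A^{1/2}B_iA^{1/2}$, using $F(A,B)=\Tr(A^{1/2}BA^{1/2})^{1/2}$. You instead work with $F(A,B)=\norm{A^{1/2}B^{1/2}}_1$ and build the partial isometry $W$ in the polar decomposition of $Z=\sum_i\bz I\otimes\ket i\jz B_i^{1/2}$ (your verification that $Z^*Z=B$, that the map is well defined and isometric on $\ran B^{1/2}$, and that $B^{1/2}W^*W=B^{1/2}$ is sound, since $W^*W$ is the projection onto $\supp B$, which absorbs $B^{1/2}$), after which only the triangle inequality and H\"older-type contractivity of the trace norm are needed. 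The paper's route has the advantage of isolating a trace inequality of independent interest (a Rotfel'd/concavity-type statement that extends beyond the square root), at the cost of a calculus argument and operator monotonicity; your route is more elementary and self-contained, and its purification-style structure makes transparent why the bound is really a triangle inequality in disguise. Your closing remark about why the naive block-matrix (SDP) characterization of the fidelity does not directly yield the claim is also apt.
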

\begin{proof}
The function $X\mapsto\Tr\sqrt{X}$ is subadditive on PSD operators, i.e.,
if $X,Y\in\B(\hil)_+$ then $\Tr\sqrt{X+Y}\le\Tr\sqrt{X}+\Tr\sqrt{Y}$.
Indeed, assume first that $X,Y>0$. Then
\begin{align*}
\Tr\sqrt{X+Y}-\Tr\sqrt{X}&=
\int_{0}^1\frac{d}{dt}\Tr\sqrt{X+tY}\,dt
=
\int_{0}^1\Tr Y\half(X+tY)^{-1/2}\,dt\\
&\le
\Tr \sqrt{Y}\int_{0}^1\frac{t^{-1/2}}{2}\,dt
=
\Tr\sqrt{Y},
\end{align*}
where we used the identity $\frac{d}{dt}\Tr f(X+tY)=\Tr Yf'(X+tY)$, and
that the function $x\mapsto x^{-1/2}$ is operator monotone decreasing. The assertion for general PSD $X$ and $Y$ then follows by continuity.
Thus,
\begin{equation*}
F\bz A,\sum\nolimits_i B_i\jz=\Tr\sqrt{\sum\nolimits_i A^{1/2}B_iA^{1/2}}\le
\sum\nolimits_i \Tr\sqrt{A^{1/2}B_iA^{1/2}}=
\sum\nolimits_i F(A,B_i).\qedhere
\end{equation*}
\end{proof}
\medskip

After this preparation, we are ready to prove Theorem \ref{thm:averaged upper bounds}.
\smallskip

\noindent\textit{Proof of Theorem \ref{thm:averaged upper bounds}:}
\begin{align*}
P_e^*\bz A,\sum\nolimits_j B_j\jz
&\le
F\bz A,\sum\nolimits_j B_j\jz
\le
\sum_j  F\bz A,B_j\jz
\le
\sum_j \sqrt{\Tr (A+B_j)}\sqrt{P_e^*(A,B_j)},
\end{align*}
where we used Lemma \ref{lemma:FT bounds} in the first inequality,
the second inequality is due to Lemma \ref{lemma:fidelity subadditivity},
and the third inequality is again due to Lemma \ref{lemma:FT bounds}.
This proves \eqref{averaged single-shot bounds}.

Assume now that $A$ is rank one. Then
\begin{align*}
P_e^*\bz A,\sum\nolimits_j B_j\jz
&\le
\frac{1}{\Tr A}F\bz A,\sum\nolimits_j B_j\jz^2
\le
\frac{1}{\Tr A}\bz\sum\nolimits_j F(A,B_j)\jz^2\\
&=
\frac{1}{\Tr A}\bz\sum\nolimits_j (\Tr A)^{\half}(\Tr B_j)^{\half}F\bz\frac{A}{\Tr A},\frac{B_j}{\Tr B_j}\jz\jz^2\\
&\le
\frac{1}{\Tr A}\bz \sum\nolimits_j (\Tr A)(\Tr B_j)\jz\bz \sum\nolimits_j F\bz\frac{A}{\Tr A},\frac{B_j}{\Tr B_j}\jz^2\jz\\
&=
\bz\sum\nolimits_j \Tr B_j\jz\sum\nolimits_j\frac{1}{(\Tr A)(\Tr B_j)}F(A,B_j)^2\\
&\le
\bz\sum\nolimits_j \Tr B_j\jz\sum\nolimits_j\frac{\Tr A+\Tr B_j}{(\Tr A)(\Tr B_j)}P_e^*(A,B_j),
\end{align*}
where the first inequality is due to Lemma \ref{lem:FTpure},
the second inequality is due to Lemma \ref{lemma:fidelity subadditivity},
in the third inequality we used the Cauchy-Schwarz inequality, and the last inequality
follows from Lemma \ref{lemma:FT bounds}. This proves
the first bound in \eqref{pure upper bounds}.
Alternatively, we may proceed as
\begin{align*}
P_e^*\bz A,\sum\nolimits_j B_j\jz
&\le
\frac{1}{\Tr A}F\bz A,\sum\nolimits_j B_j\jz^2
\le
\frac{1}{\Tr A}\bz\sum\nolimits_j F(A,B_j\jz^2\\
&\le
\frac{1}{\Tr A}\bz\sum\nolimits_j \sqrt{\Tr (A+B_j)}\sqrt{P_e^*(A,B_j)}\jz^2\\
&\le
\frac{1}{\Tr A}\bz\sum\nolimits_j\Tr(A+B_j)\jz\sum\nolimits_jP_e^*(A,B_j),
\end{align*}
where the third inequality is due to Lemma \ref{lem:FTpure}, and
in the last line we used the Cauchy-Schwarz inequality. This proves the second bound  in \eqref{pure upper bounds}.
\qed
\bigskip

We close this section with some discussion of the above results.

Let $\rho,\sigma_1,\ldots,\sigma_r$ be states and $q_1,\ldots,q_r$ be a probability distribution.
Then we have
\begin{align*}
\sum_i q_i F(\rho,\sigma_i)\le F\bz\rho,\sum\nolimits_i q_i\sigma_i\jz\le
\sum_i \sqrt{q_i} F(\rho,\sigma_i),
\end{align*}
where the first inequality is a special case of the joint concavity of the fidelity
\cite[Theorem 9.7]{NC}, and the second inequality is due to Lemma \ref{lemma:fidelity subadditivity}
with the choice $A=\rho$ and $B_i=q_i\sigma_i$.
Hence, Lemma \ref{lemma:fidelity subadditivity} yields a complement to the concavity inequality
$\sum_i q_i F(\rho,\sigma_i)\le F(\rho,\sum_i q_i\sigma_i)$. It is natural to ask whether the joint concavity inequality 
$\sum_i q_i F(\rho_i,\sigma_i)\le F(\sum_iq_i\rho_i,\sum_i q_i\sigma_i)$
can be complemented in the same way, but it is easy to see that the answer is no. Indeed,
let $\rho_1=\sigma_2=\pr{x}$ and $\rho_2=\sigma_1=\pr{y}$ with $x,y$ being orthogonal unit vectors in $\bC^2$, and let $q_1=q_2=1/2$. 
Then $\sum_i q_i\rho_i=\half I=\sum_iq_i\sigma_i$, and hence $F(\sum_i q_i\rho_i,\sum_i q_i\sigma_i)=1$, while
$F(\rho_1,\sigma_1)=F(\rho_2,\sigma_2)=0$, and hence no inequality of the form
$F(\sum_iq_i\rho_i,\sum_i q_i\sigma_i)\le c\sum_i F(\rho_i,\sigma_i)$ can hold with some $c>0$.

One can ask the same questions about the quantity $P_e^*(.\, ,.)=\Tr\glb(.\, ,.)$, which has very similar properties to the fidelity. 
Indeed, convexity of the trace-norm yields joint concavity of this quantity, i.e.,
$\Tr\glb(\sum_i q_i\rho_i,\sum_i q_i\sigma_i)=
\half\bz 1-\norm{\sum_i q_i\rho_i-\sum_i q_i\sigma_i}_1\jz
\ge
\sum_i q_i\half\bz 1-\norm{\rho_i-\sigma_i}_1\jz
=
\sum_i q_i\Tr\glb(\rho_i,\sigma_i)$,
and the same example as above shows that this inequality cannot be complemented in general. On the other hand, 
one may hope that the weaker concavity inequality, where the first argument is a fixed $\rho$, can be complemented the 
same way as for the fidelity, i.e., that there exists a constant $c>0$, depending at most on $r$, such that
\begin{equation*}
\Tr\glb\bz\rho,\sum\nolimits_i q_i\sigma_i\jz=
\half\bz 1-\norm{\rho-\sum\nolimits_i q_i\sigma_i}_1\jz
\le
\frac{c}{2}\sum_i \bz 1-\norm{\rho-\sigma_i}_1\jz
=
c\sum_i \Tr\glb(\rho,\sigma_i).
\end{equation*}
%Alternatively,
More generally, one could ask whether an analogy of the subadditivity inequality \eqref{fidelity subadditivity} holds for 
$\Tr\glb(.\, ,.)$, i.e., if there exists a $c>0$, depending at most on $r$, such that
\begin{align}\label{eq:wrong1}
P_e^*\bz A,\sum\nolimits_j B_j\jz=\Tr\glb\bz A,\sum\nolimits_j B_j\jz\le c\sum_j\Tr\glb(A,B_j)=c\sum_j P_e^*(A,B_j)
\end{align}
holds for any PSD $A,B_1,\ldots,B_r$, where $c>0$ depends only on $r$.
This would give an improvement over Theorem \ref{thm:averaged upper bounds}, and prove Conjecture \ref{con:averaged asymptotics}.
Note that \eqref{eq:wrong1} is true when $A$ is of rank one, according to Theorem \ref{thm:averaged upper bounds}, and also when 
all the operators are commuting, as we show in Appendix \ref{sec:classical}.
However, as it turns out, no such $c$ exists in the general case.

Counterexamples are as follows: for $r=2$,
take $A=\ep|\psi_1\rangle\langle\psi_1|$, $B_1=|\psi_2\rangle\langle\psi_2|$ and $B_2=|\psi_3\rangle\langle\psi_3|$
with $\ep$ small and $\psi_2$ and $\psi_3$ very close and almost orthogonal to $\psi_1$.
For example, consider
$$
\psi_1 = \twovec{1}{0},\ds\ds
\psi_2 = \twovec{\sin\alpha}{\cos\alpha}, \ds\ds
\psi_3 = \twovec{-\sin\alpha}{\cos\alpha},
$$
with $\sin^2\alpha=\ep/2$.
Then $B_1+B_2=\diag(\ep,2-\ep)$ and
$\trace\GLB(A,B_1+B_2)=\half(\ep+2-|2-\ep|)=\ep$.
However, one can check that
$\trace\GLB(A,B_1)=\trace\GLB(A,B_2) \approx \ep^2/2$ for very small $\ep$.
Thus, the LHS of (\ref{eq:wrong1}) is linear in $\ep$, whereas its RHS is quadratic,
meaning that the RHS can be arbitrarily smaller than the LHS in the sense that
the RHS/LHS ratio can be arbitrarily small.

One might get the impression that this failure is due to the fact that $A_1$ has very small trace.
Thus one could try to amend inequality (\ref{eq:wrong1}) by dividing the RHS by that trace (making both sides linear in $\ep$):
\be
\trace\glb\bz A,\sum\nolimits_j B_j\jz \le
\frac{1}{\trace A}\; \sum_{j} \trace\glb(A,B_j).\label{eq:wrong2}
\ee
This is a sensible amendment as it resonates with the appearance of the factor
$\frac{1}{\Tr A}$ in \eqref{F^2 bound}
% $1/p_1$
in our treatment of the pure state case, and furthermore, initial numerical simulations seemed to bolster the claim.
However, this inequality is false too.
We can use the direct sum trick based on Lemma \ref{lub direct sum},
and replace $A$ by $A\oplus(1-\Tr A)\pr{x}$
and $B_i$ by $B_i\oplus 0$ in the counterexample of the previous paragraph, where $x$ is a unit vector in some auxiliary Hilbert space.
This does not change the $\trace\GLB$ terms
but changes $\trace A$ to 1, thereby eliminating its supposedly compensating effect. Thus inequality (\ref{eq:wrong2})
is violated to arbitrarily high extent. Moreover, the same argument excludes the possibility to fix
inequality (\ref{eq:wrong2}) by replacing $1/\Tr A$ with
$f(\Tr A)$ where $f:\,\bR_+\to\bR_+$ is such that $\lim_{x\to 0^+}f(x)=+\infty$.

The problems presented by the above example could be eliminated if we allowed the cross-term
$\Tr\glb(B_1,B_2)$ to appear (with some positive constant) on the RHS of
\eqref{eq:wrong1}, since the term $\trace\GLB(B_1,B_2)$ is
close to 1 and swamps the distinction between $\ep$ and $\ep^2$.
Although such a bound is too weak for proving Conjecture \ref{con:averaged asymptotics}, it would be just the right tool to prove
Conjecture \ref{claim:1}, as we will see in the next section.

\section{Dichotomic discrimination}
\label{sec:dichotomic}

Consider the generalized state discrimination problem with hypotheses
$A_1,\ldots,A_r$.
In this section we show an intermediate step towards proving Conjecture \ref{claim:1} in the form of a partial decoupling bound. Namely, we prove
(in Theorem \ref{th:dich}) that
the multiple state discrimination error is bounded from above by the \ki{dichotomic error}, which is the sum of the error probabilities 
of discriminating one $A_i$ from the rest of the hypotheses.
Using then the bounds obtained in Section \ref{sec:averaged}, we get full decoupling bounds (Theorem \ref{thm:dichotomic decoupling}).

Define the complementary operators as
$\comp{A}_i:=\sum_{j\neq i}A_j=A_0-A_i$, where $A_0:=\sum_i A_i$.
If we only want to decide whether
the true hypothesis is $A_i$ or not, i.e., we want to discriminate between $A_i$ and
$\comp{A}_i$, then the corresponding optimal error is given by
\bea
P^*_{e,dich,i} := P_e^*(A_i,\comp{A}_i) = \trace\glb(A_i,\comp{A}_i)
=
\half(\Tr A_0-\norm{A_i-\comp{A}_i}_1).
\eea
We will call this a \textit{dichotomic} discrimination, and
$P^*_{e,dich,i}$ the $i$-th optimal \ki{dichotomic error}.
Let us define $P_{e,dich}^*$ as the sum of the $r$ optimal dichotomic errors corresponding to each of the $A_i$:
\be
P^*_{e,dich}(A_1,\ldots,A_r) := P^*_{e,dich}:= \sum_{i=1}^r P^*_{e,dich,i}= \sum_{i=1}^r P_e^*(A_i,\comp{A}_i).
\ee

We show in Theorem \ref{th:dich} that the optimal multi-hypothesis discrimination error $P_e^*$ is well-approximated by
 $P^*_{e,dich}$; more precisely,
\begin{equation}\label{dichotomic bounds}
\half P^*_{e,dich} \le  P^*_e \le P^*_{e,dich}.
\end{equation}
In particular, these bounds, together with the fact that $P^*_{e,dich}$ is a number between 0 and $\Tr A_0$, show that there exists a POVM
$\{E_k\}_{k=1}^r$ for which $P_e\bz\{E_k\}_{k=1}^r\jz=P^*_{e,dich}$.
Therefore, we can rightly call $P^*_{e,dich}$ the \textit{dichotomic error}. Moreover, these inequalities show that $P_e^*$ and $P^*_{e,dich}$ have the same exponential
behavior in the limit of many i.i.d.~copies of the hypotheses.

\medskip

We need some preparation to prove the bounds in \eqref{dichotomic bounds}.
First, we give a number of useful expressions for $P^*_{e,dich}$.
Since $A_i-\comp{A}_i=2 A_i-A_0$,
we have $||A_i-\comp{A}_i||_1 = 2\trace(2 A_i-A_0)_+ + \Tr A_0-2\Tr A_i$. Then
\begin{align}
P^*_{e,dich} &=
 \half \sum_{i=1}^r (\Tr A_0-||A_i-\comp{A}_i||_1) \nonumber\\
&= \frac{1}{2} \sum_{i=1}^r 2\Tr A_i-2\trace(2 A_i-A_0)_+ \nonumber\\
&= \Tr A_0-\sum_{i=1}^r \trace(2 A_i-A_0)_+ \nonumber\\
&= \Tr A_0-\sum_{i=1}^r \trace(A_i-\comp{A}_i)_+ .\label{dich expressions}
\end{align}
These expressions show that the quantity $P^*_{e,dich}$ is a number between 0 and $\Tr A_0$.

Next, we prove Lemma \ref{lem:proj} below, which we will use for the proof of the upper bound in \eqref{dichotomic bounds}.
Note that the map $f:\,X\mapsto X^*X$ is operator convex on $\B(\hil)$, as it was pointed out in \cite[Lemma 5]{OH}.
Indeed, for any $X_1,X_2\in\B(\hil)$ and any $t\in[0,1]$, we have
\begin{align*}
tf(X_1)+(1-t)f(X_2)-f(tX_1+(1-t)X_2)=
t(1-t)(X_1-X_2)^*(X_1-X_2)\ge 0.
% tX_1^*X_1+(1-t)X_2^*X_2-\bz tX_1\jz
\end{align*}
In particular, for $X_1,\ldots,X_r\in\B(\hil)$, we have
$\bz\sum_i X_i\jz^*\bz\sum_i X_i\jz\le r\sum_i X_i^*X_i$, and operator monotony of the square root yields
\begin{align}\label{convexity consequence}
\left|\sum\nolimits_i X_i\right|\le \sqrt{r} \bz\sum\nolimits_i |X_i|^2\jz^{1/2}.
\end{align}

\begin{lemma}\label{lem:proj}
Let $\{P_i\}_{i=1}^r$ be a set of $r$ projectors. Define $P_0=\sum_{i=1}^r P_i$.
Then
$$
0\le\sum_i (2P_i-P_0)_+ \le\id,
$$
i.e.\ the set of operators $\{(2P_i-P_0)_+\}_{i=1}^r$ forms an (incomplete) POVM.
\end{lemma}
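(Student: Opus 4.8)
The plan is the following. The lower bound is immediate, since each $(2P_i-P_0)_+$ is positive semidefinite and hence so is their sum. For the upper bound I would first reduce to an inequality involving the operator absolute values $|2P_i-P_0|$. Using the Jordan identity $X_+=\half(X+|X|)$ together with $\sum_{i=1}^r(2P_i-P_0)=2P_0-rP_0=(2-r)P_0$, one rewrites
\begin{equation*}
\sum_{i=1}^r(2P_i-P_0)_+=\frac{2-r}{2}\,P_0+\half\sum_{i=1}^r|2P_i-P_0|,
\end{equation*}
so that the claim $\sum_i(2P_i-P_0)_+\le I$ is equivalent to
\begin{equation*}
S:=\sum_{i=1}^r|2P_i-P_0|\le 2I+(r-2)P_0 .
\end{equation*}

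Next I would control $S$ by a sum-of-squares estimate. From the elementary operator identity $r\sum_i Y_i^2-\bigl(\sum_i Y_i\bigr)^2=\half\sum_{i,j}(Y_i-Y_j)^2\ge 0$, valid for any self-adjoint $Y_i$, applied with $Y_i:=|2P_i-P_0|$, one obtains $S^2\le r\sum_{i=1}^r|2P_i-P_0|^2=r\sum_{i=1}^r(2P_i-P_0)^2$. Now the right-hand side is computed explicitly: expanding $(2P_i-P_0)^2=4P_i-2P_iP_0-2P_0P_i+P_0^2$ (using $P_i^2=P_i$) and summing over $i$ with $\sum_iP_i=P_0$ gives $\sum_{i=1}^r(2P_i-P_0)^2=4P_0+(r-4)P_0^2$, hence $S^2\le r\bigl(4P_0+(r-4)P_0^2\bigr)$.

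Finally I would compare this bound with $\bigl(2I+(r-2)P_0\bigr)^2$. Since $I$ and $P_0$ commute, this is a routine polynomial computation, and the decisive point is that the difference is a perfect square:
\begin{equation*}
\bigl(2I+(r-2)P_0\bigr)^2-r\bigl(4P_0+(r-4)P_0^2\bigr)=4I-8P_0+4P_0^2=4(I-P_0)^2\ge 0 .
\end{equation*}
Therefore $S^2\le\bigl(2I+(r-2)P_0\bigr)^2$; since $S\ge 0$ and $2I+(r-2)P_0\ge 0$ (for $r=1$ because $P_0=P_1\le I$, for $r\ge 2$ because the coefficient of $P_0$ is nonnegative), operator monotonicity of $t\mapsto\sqrt{t}$ yields $S=\sqrt{S^2}\le\sqrt{\bigl(2I+(r-2)P_0\bigr)^2}=2I+(r-2)P_0$, which is exactly the reduced inequality. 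Nothing in the argument restricts $r$, so it holds for every $r\ge 1$.

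I do not expect a serious obstacle: the computations are mechanical, and the only step requiring a little insight is realizing that the gap $\bigl(2I+(r-2)P_0\bigr)^2-r\sum_i(2P_i-P_0)^2$ collapses to the nonnegative operator $4(I-P_0)^2$. This identity is precisely what makes the final square-root comparison legitimate; a more naive "layer-cake'' approach that tries to sum the spectral projections $\{2P_i-P_0>t\}$ over $t$ fails, because these projections need not be pairwise orthogonal and their sum is not $\le I$ in general.
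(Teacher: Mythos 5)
Your proof is correct and follows essentially the same route as the paper's: the reduction via $X_+=\half(X+|X|)$ and $\sum_i(2P_i-P_0)=(2-r)P_0$, the bound $\bigl(\sum_i|2P_i-P_0|\bigr)^2\le r\sum_i(2P_i-P_0)^2$ (the paper derives it from operator convexity of $X\mapsto X^*X$, which is the same square identity you use), the computation $\sum_i(2P_i-P_0)^2=4P_0+(r-4)P_0^2$, and the perfect-square comparison with $(2I+(r-2)P_0)^2$ followed by operator monotonicity of the square root. No gaps; the argument matches the paper's proof up to cosmetic reorganization.
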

\begin{proof}
Let $X_i:=|2P_i-P_0|$,
with $P_i$ and $P_0$ as defined in the statement of the lemma.
% Thus,
By \eqref{convexity consequence},
$$
\sum\nolimits_i |2P_i-P_0| \le \sqrt{r} \bz\sum\nolimits_i (2P_i-P_0)^2\jz^{1/2}.
$$
Considering the facts that the $P_i$ are projectors, i.e.\ $P_i^2=P_i$, and that $P_0$ is equal to their sum,
the expression $\sum_i (2P_i-P_0)^2$ simplifies to
$$
\sum_i (2P_i-P_0)^2 = \sum_i(4P_i+P_0^2-2P_iP_0-2P_0P_i) = 4P_0+rP_0^2-4P_0^2 = 4P_0+(r-4)P_0^2.
$$
Now note the following:
\beas
r(4P_0+(r-4)P_0^2) &\le& r(4P_0+(r-4)P_0^2) + 4(I-P_0)^2 \\
&=& 4rP_0+(r^2-4r)P_0^2+4I-8P_0+4P_0^2 \\
&=& 4I+4(r-2)P_0+(r-2)^2P_0^2 \\
&=& (2I+(r-2)P_0)^2.
\eeas
Thus, we get
$$
\sum_i |2P_i-P_0| \le 2I+(r-2)P_0.
$$
To rewrite this in terms of the positive parts, we use the relation $|X| = 2X_+ - X$. This gives
$$
\sum_i |2P_i-P_0|
= 2\sum_i (2P_i-P_0)_+ - \sum_i (2P_i-P_0)
= 2\sum_i (2P_i-P_0)_+ - (2-r)P_0.
$$
Hence, we finally obtain
$$
\sum_i (2P_i-P_0)_+ = \half\left(\sum_i |2P_i-P_0| + (2-r)P_0\right) \le I,
$$
as we set out to prove.
\end{proof}
\medskip

Now we are ready to prove \eqref{dichotomic bounds}.

\begin{theorem}\label{th:dich}
For any $A_1,\ldots,A_r\in\B(\hil)_+$,
\begin{equation}\label{dich bounds2}
P_{e,2}^* \le \half P^*_{e,dich} \le  P^*_e \le P^*_{e,dich}.
\end{equation}
\end{theorem}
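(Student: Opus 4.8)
The plan is to prove the chain from the outside in, isolating the three new inequalities $P_{e,2}^*\le\half P^*_{e,dich}$, $\half P^*_{e,dich}\le P_e^*$, and $P_e^*\le P^*_{e,dich}$. Everything runs through the positive-part functional $X\mapsto\Tr X_+$, and I will freely use that $X_+\ge X$, that $X\mapsto\Tr X_+$ is positively homogeneous, monotone ($X\le Y\Rightarrow\Tr X_+\le\Tr Y_+$) and subadditive (all immediate from $\Tr X_+=\max_{0\le E\le I}\Tr XE$), together with the dual formula \eqref{dual formulation}, $P_s^*=\min\{\Tr Y:Y\ge A_k,\ k=1,\dots,r\}$, the identity $P_e^*=\Tr A_0-P_s^*$, and the expression $P^*_{e,dich}=\Tr A_0-\sum_i\Tr(2A_i-A_0)_+$ from \eqref{dich expressions}. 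Write $Q_i:=\{2A_i-A_0>0\}$ for the Holevo-Helstr\"om projector of the $i$-th dichotomic problem, so that $\Tr(2A_i-A_0)_+=\Tr(2A_i-A_0)Q_i$ and $2A_i-A_0=A_i-\comp{A}_i$.

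For the upper bound $P_e^*\le P^*_{e,dich}$ I would build an explicit incomplete POVM out of these projectors. Set $Q_0:=\sum_{i=1}^rQ_i$; since the $Q_i$ are projections, Lemma~\ref{lem:proj} gives $0\le\sum_i(2Q_i-Q_0)_+\le I$, so $E_i:=(2Q_i-Q_0)_+$ form an incomplete POVM. Then $P_e^*\le\sum_i\Tr A_i(I-E_i)=\Tr A_0-\sum_i\Tr A_iE_i$, and since $A_i\ge0$ and $(2Q_i-Q_0)_+\ge 2Q_i-Q_0$ we get $\sum_i\Tr A_iE_i\ge\sum_i\Tr A_i(2Q_i-Q_0)$. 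The crucial cancellation is $\sum_i\Tr A_iQ_0=\Tr A_0Q_0=\sum_i\Tr A_0Q_i$, which collapses the last sum to $\sum_i\Tr(2A_i-A_0)Q_i=\sum_i\Tr(2A_i-A_0)_+$, giving $P_e^*\le\Tr A_0-\sum_i\Tr(2A_i-A_0)_+=P^*_{e,dich}$. I expect this to be the only real obstacle: measuring directly with $\{Q_i\}$ overcounts the overlaps of the regions $\{A_i>\comp{A}_i\}$, and it is precisely Lemma~\ref{lem:proj} that rescues matters by certifying $\{(2Q_i-Q_0)_+\}$ to still be a sub-POVM, so the genuine operator-analytic work has been front-loaded there.

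For the lower bound $\half P^*_{e,dich}\le P_e^*$ I would instead exhibit a feasible point for the dual program of $P_s^*$: take $Y:=\half A_0+\half\sum_{i=1}^r(2A_i-A_0)_+$. For each $k$, $2(Y-A_k)=\sum_i(2A_i-A_0)_+-(2A_k-A_0)\ge\sum_{i\neq k}(2A_i-A_0)_+\ge0$, using again $(2A_k-A_0)_+\ge 2A_k-A_0$; hence $Y\ge A_k$ for all $k$, so by \eqref{dual formulation} $P_s^*\le\Tr Y=\half\Tr A_0+\half\sum_i\Tr(2A_i-A_0)_+$, and therefore $P_e^*=\Tr A_0-P_s^*\ge\half\left(\Tr A_0-\sum_i\Tr(2A_i-A_0)_+\right)=\half P^*_{e,dich}$.

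For $P_{e,2}^*\le\half P^*_{e,dich}$ I would rewrite both sides via $\Tr(\cdot)_+$. From $P_e^*(A_k,A_l)=\Tr A_k-\Tr(A_k-A_l)_+$ (a restatement of \eqref{minimum error}), symmetrizing over each pair and summing gives $P_{e,2}^*=\half\Tr A_0-\frac{1}{2(r-1)}\sum_{k\neq l}\Tr(A_k-A_l)_+$, so the inequality reduces to $\sum_i\Tr(A_i-\comp{A}_i)_+\le\frac{1}{r-1}\sum_{k\neq l}\Tr(A_k-A_l)_+$. This I would obtain by writing $A_i-\comp{A}_i=\sum_{j\neq i}\left(\frac{1}{r-1}A_i-A_j\right)$, applying subadditivity, and then bounding each term using $\frac{1}{r-1}A_i-A_j\le\frac{1}{r-1}(A_i-A_j)$ together with monotonicity and homogeneity, so that $\Tr\left(\frac{1}{r-1}A_i-A_j\right)_+\le\frac{1}{r-1}\Tr(A_i-A_j)_+$; summing over $j\neq i$ and over $i$ finishes it. Chaining the first two inequalities re-proves, and the middle one strengthens, Theorem~\ref{th:41}.
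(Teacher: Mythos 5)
Your proof is correct, and its overall skeleton matches the paper's: the chain is split into the same three inequalities, and the genuinely hard step, $P_e^*\le P^*_{e,dich}$, rests on exactly the same device, namely Lemma~\ref{lem:proj} applied to the Helstrom projectors $Q_i=\{A_i-\comp{A}_i>0\}$ of the dichotomic problems; your version builds the sub-POVM $\{(2Q_i-Q_0)_+\}$ and bounds $P_e$ directly, while the paper runs the identical cancellation in the other direction, bounding $\sum_i\Tr(A_i-\comp{A}_i)_+$ above by $P_s^*$ — a repackaging rather than a new idea. The two outer inequalities, however, you handle by routes different from the paper's. For $\half P^*_{e,dich}\le P_e^*$ the paper stays on the primal side, testing $(2A_i-A_0)_+$ against the optimal POVM ($\Tr(2A_i-A_0)_+\ge\Tr(2A_i-A_0)E_i$, summed), whereas you exhibit the dual-feasible certificate $Y=\half A_0+\half\sum_i(2A_i-A_0)_+$ and invoke (weak) duality from \eqref{dual formulation}; both are equally short, and your certificate makes the factor $\half$ visibly come from the operator $Y$ itself. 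For $P_{e,2}^*\le\half P^*_{e,dich}$ the paper simply uses $A_l\le\comp{A}_k$ and monotonicity of the binary error, which is arguably more direct than your decomposition $A_i-\comp{A}_i=\sum_{j\ne i}\bigl(\tfrac{1}{r-1}A_i-A_j\bigr)$ combined with subadditivity and monotonicity of $X\mapsto\Tr X_+$; still, your computation is valid (the key comparison $\tfrac{1}{r-1}A_i-A_j\le\tfrac{1}{r-1}(A_i-A_j)$ holds because $\tfrac{r-2}{r-1}A_j\ge0$). In short: same architecture and same key lemma, with a dual rather than primal argument for the middle inequality and a more computational but sound argument for the first.
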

\begin{proof}
The first inequality follows by a straightforward computation:
\begin{align*}
P_{e,2}^*
&=
\frac{1}{r-1} \sum_{(k,l):\,k< l} P_e^*(A_k,A_l)
=
\frac{1}{2(r-1)} \sum_{(k,l):\,k\neq l}  P_e^*(A_k,A_l)\\
&=
\half \sum_k \frac{1}{r-1}\sum_{l:\, l\neq k}  P_e^*(A_k,A_l)
\le
\half \sum_k \frac{1}{r-1}\sum_{l: l\neq k}  P_e^*(A_k,\bar A_k)
=
\half \sum_k P_e^*(A_k,\bar A_k)
=
\half P^*_{e,dich}.
\end{align*}
The inequality is due to the fact that $A_l\le \bar A_k$ for $l\neq k$, and hence
$P_e^*(A_k,A_l)\le P_e^*(A_k,\bar A_k) $.

Next we prove the second inequality.
Let $\{E_i\}_{i=1}^r$ be the optimal POVM for $P_e^*$. Clearly,
% then,
%$E_i$ will in general be suboptimal
%in the dichotomic discrimination between $A_i$ and $\comp{A}_i$:
$$
\trace(2 A_i-A_0)_+ \ge \trace(2 A_i-A_0)E_i = 2\trace A_i E_i-\trace A_0 E_i.
$$
Summing over $i$ yields
\begin{align*}
\sum_{i=1}^r \trace(2 A_i-A_0)_+
\ge  2\sum_{i=1}^r \trace A_iE_i- \trace A_0 \sum_{i=1}^r E_i
\ge 2\sum_{i=1}^r \trace A_iE_i-\Tr A_0
= 2P_s^*-\Tr A_0.
\end{align*}
Hence, by \eqref{dich expressions},
$$
P^*_{e,dich}=\Tr A_0-\sum_{i=1}^r \trace(2 A_i-A_0)_+ \le \Tr A_0-(2P_s^*-\Tr A_0) = 2P_e^*.
$$

We will use Lemma \ref{lem:proj} to prove the last inequality in \eqref{dich bounds2}.
The trace of the positive part $X_+$ of a Hermitian operator $X$ can be expressed as
$\trace XP$ with $P$ the projector on the support of $X_+$.
In particular, if $P_i$ is the projector on the support of $(A_i-\comp{A}_i)_+$, we have
$$
\sum_{i=1}^r\trace(A_i-\comp{A}_i)_+ = \sum_i\trace(A_i-\comp{A}_i)P_i.
$$
Defining $P_0:=\sum_i P_i$,
the summation on the right-hand side can be rewritten in the following way:
\begin{align*}
\sum_i\trace(A_i-\comp{A}_i)P_i
&=
\sum_i\trace(2A_i-A_0)P_i
= 2\sum_i\trace A_iP_i - \trace A_0 P_0
=\sum_i\trace(2P_i-P_0)A_i\\
&\le\sum_i \trace(2P_i-P_0)_+ A_i
\le
\max_{\{E_i\}\text{ POVM}}\sum_i\Tr E_i A_i=P_s^*,
\end{align*}
where the last inequality follows from the fact that
the set of operators $\{(2P_i-P_0)_+\}_{i=1}^r$ forms an (incomplete) POVM
by Lemma \ref{lem:proj}.
Hence
\begin{equation*}
P^*_{e,dich}=\Tr A_0-\sum_i\trace(A_i-\comp{A}_i)P_i \ge\Tr A_0-P_s^*= P_e^*.
\qedhere
\end{equation*}
\end{proof}
\smallskip

\begin{rem}
Validity of the last inequality in \eqref{dich bounds2}
in the classical case is a simple consequence of the fact that in a
list of positive numbers only the largest one can be bigger than half their sum. Hence, for diagonal states
$$
\sum_i (2A_i-A_0)_+ = \lub\left( \left\{(2A_i-A_0)_+\right\}\right) = (2\lub(\{A_i\})-A_0)_+
\le \frac{r\lub(\{A_i\})-A_0}{r-1}.
$$
Taking the trace then yields
$$
\Tr A_0-P^*_{e,dich} = \sum_i \trace(2A_i-A_0)_+ \le \frac{r\trace\lub(\{A_i\})-\Tr A_0}{r-1}
= \frac{rP_s^*-\Tr A_0}{r-1} = \Tr A_0-\frac{r}{r-1}P_e^*,
$$
which is slightly stronger than what we needed to prove.
\end{rem}

Note that the proof presented above for the first two inequalities in \eqref{dich bounds2} 
gives an alternative proof of the inequality $P_{e,2}^* \le  P^*_e$ from Theorem \ref{th:41}.
Moreover, we have obtained a strengthening of this inequality, by including $\half P^*_{e,dich}$ in between $P_{e,2}^*$ and $P^*_e$.

Theorem \ref{th:dich} shows that the pairwise error does not exceed one half of the dichotomic error, and we conjecture 
that it can not be less than the dichotomic error up to another constant factor (depending only on the
number of hypotheses). More precisely, we have the following:
\begin{conjecture}\label{claim:2}
There exists a constant $c$, at most depending on the number of hypotheses $r$, such that for all $A_1,\ldots,A_r\in\B(\hil)_+$,
$$
P^*_{e,dich}(A_1,\ldots,A_r) \le c P^*_{e,2}(A_1,\ldots,A_r).
$$
Explicitly,
\begin{align}
P^*_{e,dich}=\sum_{i=1}^r \trace\glb(A_i,\bar A_i)
&\le c(r)\frac{1}{r-1} \sum_{(i,j):\,i\ne j}\trace\glb(A_i,A_j)\nonumber\\
&=
\tilde c(r)\sum_{(i,j):\,i\ne j}P_e^*(A_i,A_j).
\label{eq:claim2}
\end{align}
\end{conjecture}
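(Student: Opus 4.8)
The plan is to reduce Conjecture \ref{claim:2} to a single-shot binary decoupling inequality and then to its simplest instance. Since $P^*_{e,dich}=\sum_{i=1}^r P_e^*(A_i,\bar A_i)$ with $\bar A_i=\sum_{j\ne i}A_j$, while $\sum_{i<j}P_e^*(A_i,A_j)=(r-1)P^*_{e,2}$, it suffices to establish a bound of the form
\begin{equation}\label{plan:star}
P_e^*\bz A,\sum\nolimits_{j=1}^m B_j\jz\le c(m)\left(\sum_{j=1}^m P_e^*(A,B_j)+\sum_{1\le j<k\le m}P_e^*(B_j,B_k)\right)
\end{equation}
for all PSD $A,B_1,\ldots,B_m$, with $c(m)$ depending only on $m$: applying \eqref{plan:star} with $A=A_i$, $\{B_j\}=\{A_j\}_{j\ne i}$ and $m=r-1$, and summing over $i=1,\ldots,r$, the left-hand side is exactly $P^*_{e,dich}$, while on the right each $P_e^*(A_i,A_j)$ is produced at most $r$ times (twice as an ``$A$-pair'' term, $r-2$ times as a ``$B$-pair'' term), so the right-hand side is at most $r(r-1)\,c(r-1)\,P^*_{e,2}$. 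The whole content is thus in \eqref{plan:star}, which is the cross-term strengthening of the (false) inequality \eqref{eq:wrong1} anticipated in the discussion at the end of Section \ref{sec:averaged}.

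Next I would reduce \eqref{plan:star} to the case $m=2$,
\begin{equation}\label{plan:starstar}
P_e^*(A,B+C)\le c_2\bz P_e^*(A,B)+P_e^*(A,C)+P_e^*(B,C)\jz,
\end{equation}
by strong induction on $m$. Writing $\sum_{j\le m}B_j=\bz\sum_{j\le m-1}B_j\jz+B_m$ and applying \eqref{plan:starstar} to this two-term split, the term $P_e^*\bz A,\sum_{j\le m-1}B_j\jz$ is a smaller instance of \eqref{plan:star}, and $P_e^*\bz\sum_{j\le m-1}B_j,B_m\jz$ is an instance of \eqref{plan:star} with $B_m$ in the role of ``$A$'' and the list $B_1,\ldots,B_{m-1}$; collecting terms (each pairwise quantity arising a bounded number of times) gives the recursion $c(m)\le 2c_2\,c(m-1)$, so $c(m)$ is bounded by a function of $m$ alone, of order $(2c_2)^m$. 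This is enough for Conjecture \ref{claim:2}, which only asks for existence of a constant; the numerically supported value $\tilde c(r)=4(r-1)$ would of course require a much more careful accounting.

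It remains to prove \eqref{plan:starstar}, and this is where I expect the genuine difficulty to lie --- which is presumably why the statement is only conjectured. The naive route through the fidelity is too lossy: by Lemma \ref{lemma:FT bounds} and Lemma \ref{lemma:fidelity subadditivity} one has $P_e^*(A,B+C)\le F(A,B+C)\le F(A,B)+F(A,C)$, but $F(A,B)$ cannot in general be bounded by a constant times $P_e^*(A,B)$ --- take $A=\ep\pr{\psi}$, $B=\pr{\psi}$ with $\psi$ a unit vector, where $F(A,B)=\sqrt\ep$ while $P_e^*(A,B)=\ep$ --- so the passage from fidelities to error probabilities destroys the linear scaling, even though \eqref{plan:starstar} itself survives this example because there $P_e^*(A,B+C)$ is also only of order $\ep$, and so does the $\ep$ versus $\ep^2$ counterexample to \eqref{eq:wrong1}, for which now the cross term $P_e^*(B,C)$ is of order $1$. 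One must therefore argue more directly. A promising line is a dichotomy according to whether $B$ and $C$ are ``close'' or ``far apart'' in trace norm: in the close regime $B+C$ is comparable to $2B$ and one tries to charge $P_e^*(A,B+C)$ to $P_e^*(B,C)$ alone; in the far-apart regime $P_e^*(B,C)$ is small, but the near-orthogonality of $B$ and $C$ should allow the Holevo--Helstr\"om measurements for the pairs $(A,B)$ and $(A,C)$ to be pasted together into a near-optimal measurement for $A$ versus $B+C$, controlled by $P_e^*(A,B)+P_e^*(A,C)$. Making the far-apart case quantitative --- i.e.\ showing that the part of the overlap of $B$ and $C$ not already charged to $P_e^*(B,C)$ can be charged to the $A$-pair terms --- is the main obstacle; partial progress is available only for commuting operators (Appendix \ref{sec:classical}) and, with constants that depend on the trace ratios $\Tr B_j/\Tr A$ rather than on $m$ alone, for rank-one $A$ (Theorem \ref{thm:averaged upper bounds}). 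An alternative attack combining the recursive decoupling of Section \ref{sec:nussbaum} with the averaged-i.i.d.\ bounds of Section \ref{sec:averaged} runs into exactly the same issue, since that decoupling naturally produces the $A$-pair terms with square roots attached, and it is the removal of those square roots that keeps \eqref{plan:starstar} --- and hence Conjecture \ref{claim:2} --- open.
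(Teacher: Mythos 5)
You have not proved the statement, and indeed no proof is possible along the lines you sketch as they stand: the entire argument funnels into the two-term inequality $P_e^*(A,B+C)\le c_2\bigl(P_e^*(A,B)+P_e^*(A,C)+P_e^*(B,C)\bigr)$, which you explicitly leave open. Be aware that the statement you were given is stated in the paper only as Conjecture \ref{claim:2}: the paper itself proves it only for pure states and for commuting operators (Appendix \ref{sec:classical}), and in the general mixed case obtains only the weaker square-root bound \eqref{mixed dich upper} of Theorem \ref{thm:dichotomic decoupling}. Moreover, your reduction target --- the cross-term strengthening of \eqref{eq:wrong1}, i.e.\ the bound on $P_e^*\bigl(A,\sum_j B_j\bigr)$ by the $A$-pair errors plus the $B$-pair errors --- is, up to relabelling and constants, exactly the paper's Conjecture \ref{claim:3}, and the paper already records that Conjecture \ref{claim:3} implies Conjecture \ref{claim:2}. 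So your proposal relocates the difficulty rather than removing it; the genuinely new content relative to the paper is only the routine induction reducing the $m$-term cross-term inequality to the three-operator case, and the accurate diagnosis that the fidelity route (Lemmas \ref{lemma:FT bounds} and \ref{lemma:fidelity subadditivity}) cannot yield a bound linear in the pairwise errors --- which is precisely why the paper's unconditional results carry square roots.

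Within its own terms the bookkeeping of your reduction is sound: each pair $(A_i,A_j)$ is indeed charged at most $r$ times when you sum over the distinguished index, and the recursion $c(m)\le 2c_2\,c(m-1)$ is correct, so a proof of the three-operator inequality with a universal $c_2$ would settle the conjecture (with constants far worse than the conjectured $4(r-1)$, as you note). But the ``close/far dichotomy'' you propose for that inequality is only a heuristic: in the far-apart regime you would need to show quantitatively that the overlap of $B$ and $C$ not already accounted for by $P_e^*(B,C)$ can be charged to $P_e^*(A,B)+P_e^*(A,C)$ when pasting the two Holevo--Helstr\"om tests together, and no such pasting argument is given; this is exactly the step at which the paper, too, stops (offering only the SDP reformulation in Conjecture \ref{claim:3b} and the remark that operator intervals are poorly understood). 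The gap, concretely, is the unproven three-operator inequality; everything else in your note is a correct but elementary reduction.
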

Numerical simulations suggest that $c(r)=4(r-1)$ is best possible.
Clearly, validity of this conjecture would prove validity of Conjecture \ref{claim:1}.
We can prove this conjecture for pure states, and for commuting states (see Appendix \ref{sec:classical}), 
whereas for mixed states we are able to prove a weaker inequality:

\begin{thm}\label{thm:dichotomic decoupling}
Let $A_1,\ldots,A_r\in\B(\hil)_+$ and $p_i:=\Tr A_i$. Then
\begin{align}\label{mixed dich upper}
P^*_{e}(A_1,\ldots,A_r)\le P^*_{e,dich}(A_1,\ldots,A_r)\le
\sum_{(i,j):\,i\ne j}F(A_i,A_j)
\le
\sum_{(i,j):\,i\ne j}\sqrt{p_i+p_j}\sqrt{P_e^*(A_i,A_j)}.
\end{align}
If $A_i$ is rank one for all $i$ then
\begin{align}\label{pure dich upper}
P^*_{e}(A_1,\ldots,A_r)&\le P^*_{e,dich}(A_1,\ldots,A_r)\nonumber\\
&\le
\begin{cases}
(\Tr A_0)\sum_{(i,j):\,i\ne j}\frac{1}{p_ip_j}F(A_i,A_j)^2\le
(\Tr A_0)\sum_{(i,j):\,i\ne j}\frac{p_i+p_j}{p_ip_j}P_e^*(A_i,A_j),\\
\frac{\Tr A_0}{\min_i\Tr A_i}\sum_{(i,j):\,i\ne j}P_e^*\bz A_i,A_j\jz.
\end{cases}
\end{align}
\end{thm}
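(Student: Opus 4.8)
The first inequality $P^*_e\le P^*_{e,dich}$ in both \eqref{mixed dich upper} and \eqref{pure dich upper} is simply the rightmost inequality of \eqref{dich bounds2} (Theorem~\ref{th:dich}), so all that remains is to decouple the dichotomic error itself. Writing $P^*_{e,dich}=\sum_{i=1}^r P_e^*(A_i,\comp{A}_i)$ with $\comp{A}_i=\sum_{j\ne i}A_j$, each summand is a binary discrimination of $A_i$ against a \emph{sum} of the remaining $r-1$ operators, which is exactly the situation treated in Theorem~\ref{thm:averaged upper bounds} with the choice $A:=A_i$ and $\{B_j\}:=\{A_j\}_{j\ne i}$ (the number of operators in that theorem is irrelevant). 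So the plan is to apply Theorem~\ref{thm:averaged upper bounds} to each of these $r$ sub-problems and add up the resulting inequalities, using that $\sum_{i=1}^r\sum_{j\ne i}$ is precisely the ordered-pair sum $\sum_{(i,j):\,i\ne j}$, and that $F$ and $P_e^*(\cdot,\cdot)$ are symmetric in their arguments.

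\textbf{The general bound.} For \eqref{mixed dich upper} I would invoke \eqref{averaged single-shot bounds} for each $i$, giving $P_e^*(A_i,\comp{A}_i)\le\sum_{j\ne i}F(A_i,A_j)\le\sum_{j\ne i}\sqrt{p_i+p_j}\,\sqrt{P_e^*(A_i,A_j)}$, and then sum over $i$. That is the whole argument for the mixed case.

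\textbf{The rank-one bound.} When every $A_i$ has rank one, the first case of \eqref{pure dich upper} comes out the same way from the first branch of \eqref{pure upper bounds}: with $A=A_i$, $\{B_j\}=\{A_j\}_{j\ne i}$, using $\sum_{j\ne i}\Tr A_j=\Tr A_0-p_i\le\Tr A_0$ and the homogeneity $F(A_i/p_i,A_j/p_j)^2=F(A_i,A_j)^2/(p_ip_j)$, one obtains $P_e^*(A_i,\comp{A}_i)\le(\Tr A_0)\sum_{j\ne i}F(A_i,A_j)^2/(p_ip_j)\le(\Tr A_0)\sum_{j\ne i}\frac{p_i+p_j}{p_ip_j}P_e^*(A_i,A_j)$, and summing over $i$ gives it. For the second case I would instead use that, since $A_i$ has rank one, the map $X\mapsto F(A_i,X)^2=\Tr(A_iX)$ is linear (as noted in the proof of Lemma~\ref{lem:FTpure}), so Lemma~\ref{lem:FTpure} applied to the pair $(A_i,\comp{A}_i)$ yields
$$
P_e^*(A_i,\comp{A}_i)\le\frac{1}{p_i}F(A_i,\comp{A}_i)^2=\frac{1}{p_i}\sum_{j\ne i}F(A_i,A_j)^2
$$
(equivalently, test the binary problem $(A_i,\comp{A}_i)$ with the rank-one projector supporting $A_i$). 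Then bound $F(A_i,A_j)^2\le(p_i+p_j)P_e^*(A_i,A_j)$ via Lemma~\ref{lemma:FT bounds}, and use $p_i+p_j\le\Tr A_0$ together with $p_i\ge\min_k p_k$ to get $P_e^*(A_i,\comp{A}_i)\le\frac{\Tr A_0}{\min_k p_k}\sum_{j\ne i}P_e^*(A_i,A_j)$; summing over $i$ finishes the second case.

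\textbf{Where to be careful.} There is no genuine obstacle — all the content is already packaged in Theorems~\ref{th:dich} and~\ref{thm:averaged upper bounds} — but the sharp constant $\Tr A_0/\min_i p_i$ in the second case of \eqref{pure dich upper} is more delicate than it looks. Feeding $\{A_j\}_{j\ne i}$ directly into the second branch of \eqref{pure upper bounds} produces the prefactor $\sum_{j\ne i}(1+p_j/p_i)=(r-1)+(\Tr A_0-p_i)/p_i$, which for $r\ge3$ overshoots $\Tr A_0/\min_i p_i$ by roughly a factor of two; one really does have to keep the unprocessed estimate $P_e^*(A_i,\comp{A}_i)\le p_i^{-1}\sum_{j\ne i}F(A_i,A_j)^2$ above. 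The only remaining bookkeeping is the reindexing $\sum_i\sum_{j\ne i}=\sum_{(i,j):\,i\ne j}$, which is immediate.
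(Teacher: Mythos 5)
Your proposal is correct and follows the paper's own route: the first inequality is taken from Theorem \ref{th:dich}, and the remaining bounds are obtained by applying the averaged-state bounds of Theorem \ref{thm:averaged upper bounds} (respectively their ingredients) to each dichotomic term $P_e^*(A_i,\comp{A}_i)$ with $A:=A_i$, $\{B_j\}:=\{A_j\}_{j\ne i}$, and summing over $i$. The one place where you deviate is exactly the place where extra care is in fact needed: the paper's proof simply declares the result ``immediate from the definition of $P^*_{e,dich}$ and Theorem \ref{thm:averaged upper bounds}'', but, as you observe, feeding $\{A_j\}_{j\ne i}$ into the second branch of \eqref{pure upper bounds} only yields the prefactor $\sum_{j\ne i}\bz 1+p_j/p_i\jz=(r-1)+(\Tr A_0-p_i)/p_i$, which exceeds $\Tr A_0/\min_k p_k$ already for uniform weights and $r\ge 3$; the same happens if one tries to reach the second case from the first branch. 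Your fix --- going back to Lemma \ref{lem:FTpure}, i.e.\ $P_e^*(A_i,\comp{A}_i)\le p_i^{-1}F(A_i,\comp{A}_i)^2$, and using that $F(A_i,\cdot)^2=\Tr(A_i\,\cdot)$ is linear for rank-one $A_i$ so that $F(A_i,\comp{A}_i)^2=\sum_{j\ne i}F(A_i,A_j)^2\le\sum_{j\ne i}(p_i+p_j)P_e^*(A_i,A_j)$ --- is precisely what is required to obtain the stated constant $\Tr A_0/\min_i\Tr A_i$, so your writeup is, on this point, more complete than the paper's own two-line proof. All other steps (the mixed-state case via \eqref{averaged single-shot bounds}, the first rank-one case via the first branch of \eqref{pure upper bounds} with $\sum_{j\ne i}\Tr A_j\le\Tr A_0$ and the homogeneity of the fidelity, and the reindexing of the double sum) are exactly as in the paper and are carried out correctly.
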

\begin{proof}
The inequality $P^*_{e}(A_1,\ldots,A_r)\le P^*_{e,dich}(A_1,\ldots,A_r)$ is due to Theorem
\ref{th:dich}, and the rest
is immediate from the definition of $P^*_{e,dich}$ and Theorem \ref{thm:averaged upper bounds}.
\end{proof}

\begin{rem}
Note that the bound $P^*_{e}(A_1,\ldots,A_r)\le\sum_{(i,j):\,i\ne j}F(A_i,A_j)$ in \eqref{mixed dich upper}
is the same as in \eqref{eq:mixedboundT}, but weaker than the bound in \eqref{BK bound}, due to the $1/2$ prefactor in the latter.

To compare the bounds in \eqref{pure dich upper} to the other bounds obtained previously, we consider the most relevant case where
$\Tr A_0=p_1+\ldots +p_r=1$. Then \eqref{pure dich upper} tells that
$P^*_{e}(A_1,\ldots,A_r)\le\sum_{(i,j):\,i\ne j}\frac{1}{p_ip_j}F(A_i,A_j)^2$.  Since
$\frac{1}{p_ip_j}\le \half\frac{p_i^2+p_j^2}{p_i^2p_j^2}$, this bound is better than the one in
\eqref{HLS bound}, and the two coincide if and only if $p_1=\ldots=p_r$. On the other hand,
$\frac{1}{p_ip_j}>\frac{1}{\sqrt{p_ip_j}}$ (we assume that all $p_i>0$), and hence
the fidelity bound in \eqref{pure dich upper} is strictly worse than the one in
\eqref{eq:pureboundT2}.
\end{rem}
\bigskip

To close this section, we formulate two further conjectures that would imply Conjecture \ref{claim:2}.
We have seen in the previous section that no bound of the form
$\trace\glb(A_1,\overline{A_1}) \le c \sum_{l=2}^r\trace\glb(A_1,A_l)$ may hold in general, 
but amending the RHS with cross terms, i.e., error probabilities between
$A_k,A_l,\,k,l\ne 1$ may yield a valid upper bound. Although such a bound would not have been 
useful for the purposes of Section \ref{sec:averaged}, it would be sufficient for
Conjecture \ref{claim:2}, and numerical simulations suggest that it is indeed true.
Hence, we have the following
\begin{conjecture}\label{claim:3}
There exist constants $c_1$ and $c_2$, at most depending on the number of hypotheses $r$, such that for all $A_i\ge0$,
\be
\trace\glb(A_1,\bar A_1) \le c_1 \sum_{l=2}^r\trace\glb(A_1,A_l) + c_2 \sum_{k,l=2:k\neq l}^r\trace\glb(A_k,A_l).\label{eq:claim3}
\ee
\end{conjecture}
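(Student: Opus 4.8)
The plan is to follow the elementary classical argument as far as possible and to bring in the cross terms of \eqref{eq:claim3} only where non-commutativity genuinely obstructs it. By homogeneity of $\trace\glb$ we may assume $\Tr A_0=1$. Write $\bar A_1=\sum_{l\ge2}A_l$ and let $P=\{A_1-\bar A_1>0\}$, so that $\trace\glb(A_1,\bar A_1)=P_e^*(A_1,\bar A_1)=\Tr A_1(I-P)+\sum_{l\ge2}\Tr A_l\,P$. Classically the two pieces are controlled pointwise: on $\{A_1\le\bar A_1\}$ one has $A_1\le(r-1)\max_l A_l$, while on $\{A_1>\bar A_1\}$ one has $A_l<A_1$ for every $l$; a one-line case analysis then gives $\trace\glb(A_1,\bar A_1)\le(r-1)\sum_{l\ge2}\trace\glb(A_1,A_l)$ with no cross terms at all. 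The obstruction to repeating this quantumly is that the spectral projection $P$ of $A_1-\bar A_1$ is unrelated to the Holevo--Helström projections $\{A_1-A_l>0\}$ of the binary subproblems, and the example in Section~\ref{sec:averaged} shows this is not a mere technicality: the cross terms are genuinely needed.

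First I would dispose of a \emph{clustered regime}. Fix a threshold $\theta=\theta(r)\in(0,1)$; if there is a pair $k\ne l$ with $k,l\ge2$ and $P_e^*(A_k,A_l)\ge\theta$, then since $P_e^*(A_1,\bar A_1)\le\min(\Tr A_1,\Tr\bar A_1)\le\half$, inequality \eqref{eq:claim3} already holds with $c_1=0$, $c_2=1/(2\theta)$. A short trace-norm continuity step first lets one discard from the tail any $A_l$ with $\Tr A_l$ below a second threshold, charging the $O(\Tr A_l)$ perturbation of $\bar A_1$ against the surviving cross terms, so that the dichotomy becomes the clean one between ``some tail pair is non-negligibly distinguishable'' and ``the tail is uniformly near-orthogonal''.

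The remaining \emph{well-separated regime} is where $P_e^*(A_k,A_l)<\theta$ for all $k,l\ge2$, i.e.\ the tail operators are pairwise nearly orthogonal. Then $\bar A_1=\sum_{l\ge2}A_l$ is a small trace-norm perturbation of the genuine orthogonal sum $\bigoplus_{l\ge2}A_l$, for which Lemma~\ref{lub direct sum} gives the exact additivity $\trace\glb\bz A_1,\bigoplus_{l}A_l\jz=\sum_{l\ge2}\trace\glb(A_1,A_l)$ (after an auxiliary diagonal embedding of $A_1$), and the task is to bound the change incurred by collapsing $\bigoplus_l A_l$ back to $\sum_l A_l$ by $c_2\sum_{k\ne l\ge2}P_e^*(A_k,A_l)$. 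For this I would use the fidelity route of Theorem~\ref{thm:averaged upper bounds}: $P_e^*(A_1,\bar A_1)\le F(A_1,\bar A_1)\le\sum_{l\ge2}F(A_1,A_l)$ via Lemmas~\ref{lemma:FT bounds} and \ref{lemma:fidelity subadditivity}, together with the reverse estimate $F(A_1,A_l)^2\le(\Tr A_1+\Tr A_l)P_e^*(A_1,A_l)$ to remove the square root whenever $P_e^*(A_1,A_l)$ is not too small; the terms with $P_e^*(A_1,A_l)$ below threshold are precisely those in which $A_1$ is itself near-orthogonal to $A_l$, and in the well-separated regime their total contribution to $F(A_1,\bar A_1)$ is small enough to absorb. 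When $A_1$ has rank one this step is clean, since $P_e^*(A_1,\bar A_1)\le F(A_1,\bar A_1)^2/\Tr A_1\le\bz\sum_l F(A_1,A_l)\jz^2/\Tr A_1$ by Lemma~\ref{lem:FTpure}, which is essentially the computation already used to settle Conjecture~\ref{claim:2} for pure states.

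I expect the main obstacle to be exactly the well-separated regime when no $A_l$ is rank one. There is no good Gram-matrix perturbation theory for families of PSD operators, so quantifying ``$\sum_l A_l$ is close to $\bigoplus_l A_l$'' and propagating this through $\trace\glb(A_1,\cdot)$ with constants depending only on $r$ --- and, crucially, \emph{not} on the individual traces $\Tr A_l$, since the crude bounds degenerate when $\Tr A_1$ is tiny --- is the open point. A plausible fallback is to argue recursively in the spirit of Section~\ref{sec:nussbaum}: merge the two closest tail operators into $A_k+A_l$, thereby lowering $r$, pay for the merge with $P_e^*(A_k,A_l)$ using Theorem~\ref{th:dich}, and iterate; controlling the constant accumulated over the recursion, and keeping it prior-independent, would then be what remains.
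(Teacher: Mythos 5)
You are attempting to prove something the paper itself does not prove: Conjecture \ref{claim:3} is stated there as an open conjecture supported only by numerical evidence, and the only argument the paper gives in its vicinity is the equivalence of Conjecture \ref{claim:3} with Conjecture \ref{claim:3b} via SDP duality and a minimax interchange. So there is no paper proof to compare against, and your proposal does not supply one either: you explicitly leave the ``well-separated regime'' open, and that regime is not a residual case but the entire substance of the conjecture. The clustered regime you do settle is the trivial part (after normalizing $\Tr A_0=1$ the left-hand side is at most $\half$, so any single cross term of size at least $\theta$ pays for it with $c_2=1/(2\theta)$), and your classical and rank-one observations reproduce what the paper already has in Appendix \ref{sec:classical}, Theorem \ref{thm:averaged upper bounds} and Appendix \ref{sec:pure}.

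In the remaining regime the steps you sketch would not go through as stated. First, the fidelity route cannot produce \eqref{eq:claim3}: Lemma \ref{lemma:FT bounds} only gives $F(A_1,A_l)\le\sqrt{\Tr(A_1+A_l)}\,\sqrt{P_e^*(A_1,A_l)}$, so when all pairwise errors are small the chain $P_e^*(A_1,\bar A_1)\le\sum_{l\ge 2}F(A_1,A_l)$ is of order $\sum_l\sqrt{P_e^*(A_1,A_l)}$, which can exceed the right-hand side of \eqref{eq:claim3} by an arbitrarily large factor; your assertion that the below-threshold terms contribute to $F(A_1,\bar A_1)$ ``little enough to absorb'' is exactly the missing quantitative ingredient, not a routine estimate. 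Second, near-orthogonality of the tail in the sense of small $P_e^*(A_k,A_l)$ does not yield a perturbation of $\bigoplus_{l\ge 2}A_l$ that Lemma \ref{lub direct sum} can exploit with constants depending only on $r$: $\trace\glb(A_1,\cdot)$ is controlled only in trace norm, and converting small pairwise errors into a trace-norm estimate on $\sum_l A_l-\bigoplus_l A_l$ again costs a square root, which is the same obstruction. Third, even your ``clean'' rank-one step rests on Lemma \ref{lem:FTpure} and hence carries a factor $1/\Tr A_1$, which is not a constant depending only on $r$; the direct-sum trick at the end of Section \ref{sec:averaged} shows such trace-dependent factors cannot simply be traded away, so one would still have to show that the cross terms repair this. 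Finally, the fallback recursion has no supporting inequality in the paper: Theorem \ref{th:dich} compares $P_e^*$ with the dichotomic error, but nothing bounds the change of $\trace\glb(A_1,\bar A_1)$ under merging two tail hypotheses by $P_e^*(A_k,A_l)$, and such a merging estimate is of the same order of difficulty as \eqref{eq:claim3} itself. In short, you have correctly isolated the hard case, but not resolved it; the statement remains open.
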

An equivalent conjecture in terms of POVM elements (using the primal SDP characterization of error probabilities) is:
\begin{conjecture}\label{claim:3b}
There exist constants $c_1$ and $c_2$, at most depending on the number of hypotheses $r$, such that for any
$0\le F_i\le\id$ (for $2\le i\le r$) and $0\le G_{j,k}\le\id$ (for $2\le j<k\le r$)
there exists an $E$ in the intersection of operator intervals
\be
\left.
\begin{array}{r}
0\\
\id-c_1\sum_{i=2}^r(\id-F_i)
\end{array}
\right\}
\le E\le
\left\{
\begin{array}{l}
\id \\
c_1 F_j + c_2\left(\sum_{k=j+1}^r G_{j,k} + \sum_{k=2}^{j-1}(\id-G_{k,j})\right),\quad j=2,\ldots,r.
\end{array}
\right.
\label{eq:claim3b}
\ee
\end{conjecture}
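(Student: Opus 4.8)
The plan is to prove Conjecture~\ref{claim:3b} by exhibiting the common point $E$ directly, rather than re-deriving its existence from Conjecture~\ref{claim:3}. Write $\bar F_i:=\id-F_i$, and abbreviate the two lower bounds as $0$ and $L:=\id-c_1\sum_{i=2}^r\bar F_i$, and the upper bounds as $\id$ and $U_j:=c_1F_j+c_2R_j$ $(j=2,\ldots,r)$, where $R_j:=\sum_{k=j+1}^rG_{j,k}+\sum_{k=2}^{j-1}(\id-G_{k,j})\ge0$. Thus I must show the spectrahedron $\{E:\,L\le E\le\id,\ 0\le E\le U_j\ \forall j\}$ is non-empty. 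A structural identity that will carry weight later, obtained by pairing $G_{k,l}$ (occurring in $R_k$) with $\id-G_{k,l}$ (occurring in $R_l$) for each $2\le k<l\le r$, is
\be
\sum_{j=2}^r R_j=\binom{r-1}{2}\id. \nonumber
\ee

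First I would fix $c_1\ge1$, $c_2\ge1$ and check that no constraint is vacuous and that every (lower, upper) pair is consistent. The containments $L\le\id$ and $0\le U_j$ are immediate. For pairwise consistency, write $L=(1-c_1(r-1))\id+c_1\sum_{i=2}^rF_i$ and use $\sum_{i\ne j}F_i\le(r-2)\id$ to get
\be
U_j-L=c_2R_j+(c_1(r-1)-1)\id-c_1\sum_{i\ne j}F_i\ge c_2R_j+(c_1-1)\id\ge0. \nonumber
\ee
Hence each interval $[L,U_j]$ and $[0,\id]$ is non-empty; all the content is in producing a single $E$ lying in all of them at once.

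The natural first candidate is $E_0:=L_+=(\id-c_1\sum_i\bar F_i)_+$. Since $E_0\ge L$, $E_0\ge0$, and (because $L\le\id$ forces every eigenvalue of $L$ to be $\le1$) $E_0\le\id$, this settles every constraint except $E_0\le U_j$. Writing $E_0=L+L_-$ with $L_-=(-L)_+$, the remaining inequalities read $L_-\le U_j-L$. This is precisely where the main obstacle lies: although $U_j-L\ge0$ by pairwise consistency and $L_-\ge0$, the implication fails because the positive-part map is not order-monotone in the required sense — one can have $X\le Y$ with $Y\ge0$ yet $X_+\not\le Y$, already for $2\times2$ matrices. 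In the commuting case, and in the rank-one situation of Theorem~\ref{thm:averaged upper bounds}, $E$ may be selected eigenvalue-by-eigenvalue and no such obstruction appears; the genuinely new difficulty is the \emph{simultaneous} selection of one operator $E$ below all the $U_j$, a Helly-type problem for operator intervals whose Helly property is known to fail for the positive semidefinite order in general.

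To overcome this I would exploit the identity $\sum_jR_j=\binom{r-1}{2}\id$, which supplies the uniform slack that the individual $R_j$ may lack and shows that the obstruction can occur only along directions where the slack is distributed unevenly among the $U_j$. Concretely, I would replace $E_0$ by a variational selection — for instance a fixed point of an averaged-projection scheme on the compact convex set $\{0\le E\le\id\}$ alternating the operator projections onto $\{E\le U_j\}$ and onto $\{E\ge L\}$ — or, equivalently, attack the feasibility SDP through its self-dual Farkas certificate and rule the certificate out using the operator convexity of $X\mapsto X^*X$ in the manner of Lemma~\ref{lem:proj}. I expect the cleanest entry point is $r=3$, where only $F_2,F_3$ and $G:=G_{2,3}$ occur: there the binding upper bound switches between $c_1F_2$ (where $G$ is small) and $c_1F_3$ (where $\id-G$ is small), and the two regimes are reconciled through the common lower bound $L$, which already yields an explicit $E$ in the boundary case $G=0$; I would then try to bootstrap this selection inductively in $r$. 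The Helly-type simultaneous-selection step is the part I expect to be hardest, and is presumably the reason the statement has so far resisted proof.
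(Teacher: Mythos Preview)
The statement you are addressing is a \emph{conjecture}, and the paper does not prove it. The only argument the paper supplies in connection with Conjecture~\ref{claim:3b} is the ``Proof of equivalence of Claims~\ref{claim:3} and~\ref{claim:3b}'' immediately following the statement: one writes each $\trace\glb$ term in \eqref{eq:claim3} as a binary minimization, swaps the order of $\max_E$ and $\min_{A_i\ge0}$ via von~Neumann's minimax theorem, and reads off the operator inequalities \eqref{eq:claim3b} as the resulting feasibility conditions. Neither conjecture is settled in the paper; both remain open.

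Your proposal, by contrast, attempts to prove the conjecture outright. Your preliminary computations are correct: the identity $\sum_{j=2}^r R_j=\binom{r-1}{2}\id$ holds, the pairwise consistency $U_j\ge L$ for $c_1\ge1$ is derived correctly, and your candidate $E_0=L_+$ does satisfy $0\le E_0\le\id$ and $E_0\ge L$. But you then identify --- accurately --- that $E_0\le U_j$ does not follow, because the positive-part map lacks the needed order-monotonicity, and from that point on the proposal is a list of heuristic attack routes (averaged projections, a self-dual Farkas certificate in the spirit of Lemma~\ref{lem:proj}, induction from $r=3$) none of which you actually execute. As you yourself concede in the final sentence, the simultaneous-selection step ``is presumably the reason the statement has so far resisted proof.'' So there is a genuine gap: you have set up the problem cleanly and isolated the obstruction, but not overcome it. That is entirely consistent with the status of the statement in the paper --- it is stated as an open conjecture, and the paper offers only its equivalence with Conjecture~\ref{claim:3}, not a proof.
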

Note, however, that operator intervals behave very differently than ordinary intervals of real numbers and are not very well
understood. See for example the papers by Ando on this subject (e.g.\ \cite{ando93}).

\medskip

\noindent\textit{Proof of equivalence of Claims \ref{claim:3} and \ref{claim:3b}.}
The correspondence between the two claims is based on the following equivalent characterizations of the error probabilities:
\beas
\trace\glb(A_1,\overline{A_1}) &=& \min_E \trace((\id-E)A_1 + E\sum_{j=2}^r A_j) \\
\trace\glb(A_1,A_i) &=& \min_{F_i} \trace((\id-F_i)A_1 +F_iA_i) \\
\trace\glb(A_j,A_k) &=& \min_{G_{j,k}} \trace(G_{j,k}A_j +(\id-G_{j,k})A_k),
\eeas
where $E, F_i, G_{j,k}$ are POVM elements and satisfy $0\le E,F_i,G_{j,k}\le\id$.
Hence (\ref{eq:claim3}) holds if and only if
\beas
0 &\le& c_1 \sum_{i=2}^r \min_{F_i} \trace((\id-F_i)A_1 +F_iA_i) \\
&& + c_2 \sum_{j=2}^r \sum_{k=2: k\neq j}^r \min_{G_{j,k}} \trace(G_{j,k}A_j +(\id-G_{j,k})A_k) \\
&& - \min_E \trace((\id-E)A_1 + E\sum_{j=2}^r A_j) \\
&=&
\min_{F_i} c_1 \sum_{i=2}^r  \trace((\id-F_i)A_1 +F_iA_i) \\
&& + \min_{G_{j,k}} c_2 \sum_{j=2}^r \sum_{k=2: k\neq j}^r  \trace(G_{j,k}A_j +(\id-G_{j,k})A_k) \\
&& + \max_E -\trace((\id-E)A_1 + E\sum_{j=2}^r A_j) \\
&=& \min_{F_i} \min_{G_{j,k}} \max_E
\trace A_1\left(E-\id+c_1\sum_{i=2}^r (\id-F_i)\right) \\
&& + \sum_{j=2}^r \trace A_j\left(-E+c_1 F_j+c_2\left(\sum_{k=j+1}^r G_{j,k} + \sum_{k=2}^{j-1}(\id-G_{k,j})\right)\right)
\eeas
holds for all $A_i\ge0$.
This quantification can be rephrased as the requirement that the minimization of the RHS over all $A_i\ge0$ is non-negative.
By von Neumann's minimax theorem, the order between this minimization and the minimization over $E$  can be interchanged:
\beas
0&\le&\min_{F_i} \min_{G_{j,k}} \max_E
\min_{A_1\ge0} \trace A_1\left(E-\id+c_1\sum_{i=2}^r (\id-F_i)\right) \\
&& + \sum_{j=2}^r \min_{A_j\ge0}\trace A_j\left(-E+c_1F_j+c_2\left(\sum_{k=j+1}^r G_{j,k} + \sum_{k=2}^{j-1}(\id-G_{k,j})\right)\right).
\eeas
The minimizations over $F_i$ and $G_{j,k}$ and the maximization over $E$ can now be replaced by quantifications:
for all POVM elements $F_i$ and $G_{j,k}$ there should exist a POVM element $E$ such that
\beas
0&\le&
\min_{A_1\ge0} \trace A_1\left(E-\id+c_1\sum_{i=2}^r (\id-F_i)\right) \\
&& + \sum_{j=2}^r \min_{A_j\ge0}\trace A_j\left(-E+c_1 F_j+c_2\left(\sum_{k=j+1}^r G_{j,k} + \sum_{k=2}^{j-1}(\id-G_{k,j})\right)\right).
\eeas
Since $\trace AB\ge0$ for all $A\ge0$ if and only if $B\ge0$, this is so if and only if
$$
E-\id+c_1\sum_{i=2}^r (\id-F_i) \ge 0
$$
and, for all $j\ge2$,
$$-E+c_1F_j+c_2\left(\sum_{k=j+1}^r G_{j,k} + \sum_{k=2}^{j-1}(\id-G_{k,j})\right)\ge0.
$$
Combining this with the requirement $0\le E\le \id$ yields the inequalities of Claim \ref{claim:3b}.
\qed

\section{Nussbaum's mixed exponents approach}\label{sec:nussbaum}
In \cite{N13} Nussbaum presented a different approach towards splitting up the multi-hypothesis testing problem
into pairwise tests,
in which one pair of hypotheses is treated in a preferential way.
This leads to an upper bound on the total error probability in which different pairwise error probabilities
appear with different exponents. Here we generalize his approach and by combining it with our results
we improve his bounds on the total error probability.

First we need a lemma about POVM elements, the content of which is implicit in \cite{N13}:
\begin{lemma}\label{lem:POVMnb}
For any $E$ and $Q$  satisfying $0\le E,Q\le\id$,
\be
\half Q^{1/2}EQ^{1/2} \le \id-Q+E.\label{eq:nussbaum}
\ee
\end{lemma}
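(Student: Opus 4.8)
The plan is to reduce the inequality to one algebraic identity together with a handful of elementary operator-positivity facts, namely $0\le E\le\id\Rightarrow E^2\le E$, that $0\le E\le\id\Rightarrow YEY\le Y^2$ for any $Y\ge0$, and the trivial $(Y+E)^2\ge0$.

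First I would set $X:=Q^{1/2}$ and $Y:=\id-X$, so that $0\le X,Y\le\id$ and $X^2=Q$. Writing $X=\id-Y$ and expanding gives the identity
\[
Q^{1/2}EQ^{1/2}=XEX=E-(YE+EY)+YEY .
\]
The point of this rewriting is that the naive estimates on $Q^{1/2}EQ^{1/2}$ are useless: bounding it by $Q$ (via $E\le\id$) is far too lossy when $Q$ is close to $\id$, and no bound of the form $Q^{1/2}EQ^{1/2}\le cE$ with a universal constant can hold. Passing to $Y=\id-Q^{1/2}$ is exactly what makes the two regimes $Q\approx0$ and $Q\approx\id$ compatible; the price is the anticommutator $YE+EY$, which need not be positive.

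Next I would dispose of the anticommutator using $(Y+E)^2\ge0$, which gives $-(YE+EY)\le Y^2+E^2$. Substituting into the identity and then applying $E^2\le E$ and $YEY\le Y^2$ yields
\[
Q^{1/2}EQ^{1/2}\le E+(Y^2+E^2)+YEY\le 2E+2Y^2 ,
\]
hence $\half\,Q^{1/2}EQ^{1/2}\le E+Y^2$. Finally I would note that $Y^2=(\id-X)^2\le\id-X^2=\id-Q$, since $\id-X^2-Y^2=2X(\id-X)\ge0$; combining the last two displays gives $\half\,Q^{1/2}EQ^{1/2}\le\id-Q+E$, as claimed.

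The only real obstacle is conceptual rather than technical: one has to recognise that the useful decomposition is in terms of $\id-Q^{1/2}$ rather than $Q$ itself, and that the factor $\half$ is indispensable — the version of the inequality with coefficient $1$ is already false for $2\times 2$ matrices, so any argument that does not genuinely "spend" the $\half$ somewhere is doomed. Once the substitution is made, each remaining step is a one-line positivity check and none of them is delicate.
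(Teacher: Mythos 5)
Your proof is correct: the identity $Q^{1/2}EQ^{1/2}=E-(YE+EY)+YEY$ with $Y=\id-Q^{1/2}$, the bound $-(YE+EY)\le Y^2+E^2$ from $(Y+E)^2\ge 0$, the facts $E^2\le E$, $YEY\le Y^2$, and finally $(\id-Q^{1/2})^2\le\id-Q$ (equivalently $Q\le Q^{1/2}$ for $0\le Q\le\id$) all check out, and together they give exactly $\half Q^{1/2}EQ^{1/2}\le E+Y^2\le \id-Q+E$. The route, however, is genuinely different from the paper's. The paper starts from the numerical inequality $(X-2)^*(X-2)\ge0$, i.e.\ $\half X^*X\le(\id-X)^*(\id-X)+\id$, applies it to the non-self-adjoint operator $X=E^{1/2}Q^{1/2}E^{-1/2}$ for positive definite $E$, and conjugates back by $E^{1/2}$ to obtain the intermediate inequality $\half Q^{1/2}EQ^{1/2}\le(\id-Q^{1/2})E(\id-Q^{1/2})+E$, which holds for \emph{all} PSD $E,Q$ (no upper bounds needed) but requires a continuity argument to cover singular $E$; only afterwards are the hypotheses $E,Q\le\id$ used, via $(\id-Q^{1/2})E(\id-Q^{1/2})+E\le(\id-Q^{1/2})^2+E\le\id-Q+E$. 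Your argument trades away that stronger intermediate inequality — you invoke $E\le\id$ earlier, through $E^2\le E$ and $YEY\le Y^2$ — but in exchange it is entirely elementary: no similarity transform, no inversion of $E$, no continuity step, just self-adjoint squares and order-preservation under congruence. Both proofs share the essential insight you flag, namely that the expansion must be organized around $\id-Q^{1/2}$ rather than $Q$, and both "spend" the factor $\half$ in controlling the cross terms.
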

\begin{proof}
For any operator $X$, we have $0\le (X-2)^*(X-2)$, which can be rewritten as
$X^*X/2 \le (\id-X)^*(\id-X)+\id$.
In particular, let $E$ be positive definite and $Q$ positive semidefinite and let $X=E^{1/2}Q^{1/2}E^{-1/2}$.
Then we obtain
$$
\half E^{-1/2}Q^{1/2}EQ^{1/2}E^{-1/2} \le E^{-1/2}(\id-Q^{1/2})E(\id-Q^{1/2})E^{-1/2}+\id,
$$
which yields, after multiplying with $E^{1/2}$ on the left and on the right,
$$
\half Q^{1/2}EQ^{1/2} \le (\id-Q^{1/2})E(\id-Q^{1/2})+E.
$$
By continuity, this inequality also holds for positive semidefinite $E$.
If we now impose $E,Q\le\id$ then the RHS can be bounded above by a simplified expression:
\begin{equation*}
(\id-Q^{1/2})E(\id-Q^{1/2})+E \le (\id-Q^{1/2})^2+E \le \id-Q+E.
\qedhere
\end{equation*}
\end{proof}

Nussbaum's result relies on the following decomposition lemma, proven by him for the case of uniform priors and for $K=2$.
We provide the lemma in full generality, and with a somewhat shorter proof, but still based on Nussbaum's main idea to
decompose the POVM in a clever way into two parts.

\begin{lemma}\label{lem:nussbaum}
Let $A_1,\ldots,A_r\in\B(\hil)_+$. For all
% and $A^{(K)}=\sum_{i=1}^K A_i$
% For all $A_i\ge0$, $(i=1,\ldots,r)$ and
$1\le K\le r$,
% let $A^{(K)}=\sum_{i=1}^K A_i$. Then
\begin{align}\label{Nussbaum bound}
P_e^*(A_1,\ldots,A_r) \le 2P_e^*(A_1,\ldots,A_K) + P_e^*(3A^{(K)},A_{K+1},\ldots,A_r),
\end{align}
where $A^{(K)}:=\sum_{i=1}^K A_i$.
\end{lemma}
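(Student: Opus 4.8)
The plan is to construct a single POVM for the full $r$-hypothesis problem out of optimal POVMs for the two problems on the right-hand side of \eqref{Nussbaum bound}, using Nussbaum's ``sandwiching'' idea together with Lemma~\ref{lem:POVMnb}. First I would fix an optimal POVM $\{F_1,\dots,F_K\}$ for discriminating $A_1,\dots,A_K$, chosen so that $\sum_{i=1}^K F_i=\id$, which gives $\sum_{i=1}^K\Tr A_i(\id-F_i)=P_e^*(A_1,\dots,A_K)$. Next I would fix an optimal POVM $\{G,E_{K+1},\dots,E_r\}$ for discriminating $3A^{(K)},A_{K+1},\dots,A_r$, where $G$ is the element associated with the amalgamated hypothesis $3A^{(K)}$, so that $3\Tr A^{(K)}(\id-G)+\sum_{i>K}\Tr A_i(\id-E_i)=P_e^*(3A^{(K)},A_{K+1},\dots,A_r)$. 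Then I set $E_i:=G^{1/2}F_iG^{1/2}$ for $i=1,\dots,K$ and keep the $E_i$ for $i>K$ as above. Since $\sum_{i=1}^K E_i=G^{1/2}\bigl(\sum_{i=1}^K F_i\bigr)G^{1/2}=G$ and $\{G,E_{K+1},\dots,E_r\}$ is a POVM, the operators $\{E_1,\dots,E_r\}$ form a valid POVM, so $P_e^*(A_1,\dots,A_r)\le\sum_{i=1}^r\Tr A_i(\id-E_i)$, and it remains to bound the right-hand side.

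The core estimate concerns $\sum_{i=1}^K\Tr A_i(\id-E_i)$. I would write $\id-G^{1/2}F_iG^{1/2}=(\id-G)+G^{1/2}(\id-F_i)G^{1/2}$ and apply Lemma~\ref{lem:POVMnb} with $Q:=G$ and $E:=\id-F_i$ (both between $0$ and $\id$), which yields $G^{1/2}(\id-F_i)G^{1/2}\le 2(\id-G)+2(\id-F_i)$, hence $\id-E_i\le 3(\id-G)+2(\id-F_i)$ for $i\le K$. Pairing with $A_i$ and summing over $i\le K$, using $\sum_{i=1}^K A_i=A^{(K)}$, gives $\sum_{i=1}^K\Tr A_i(\id-E_i)\le 3\Tr A^{(K)}(\id-G)+2P_e^*(A_1,\dots,A_K)$. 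Adding $\sum_{i>K}\Tr A_i(\id-E_i)$, the quantity $3\Tr A^{(K)}(\id-G)+\sum_{i>K}\Tr A_i(\id-E_i)=\Tr(3A^{(K)})(\id-G)+\sum_{i>K}\Tr A_i(\id-E_i)$ is exactly $P_e^*(3A^{(K)},A_{K+1},\dots,A_r)$, and \eqref{Nussbaum bound} follows.

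The one genuinely non-routine point, and the step I expect to be the main obstacle, is identifying the right inputs to Lemma~\ref{lem:POVMnb}: feeding in $Q=G$ (the element for the amalgamated hypothesis) and $E=\id-F_i$ (the \emph{complement} of the inner POVM element) is precisely what makes the coefficient of $\Tr A^{(K)}(\id-G)$ collapse to $3=1+2$ and the coefficient of $P_e^*(A_1,\dots,A_K)$ come out to $2$; other pairings either fail to assemble into a POVM or produce worse constants. Everything else is bookkeeping with positive-semidefinite inequalities and traces, plus the standard facts recalled in the preliminaries that an optimal POVM can be taken with $\sum_i E_i=\id$ and that $P_e^*$ is the infimum of $P_e$ over POVMs. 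The boundary cases $K=r$ (the second term on the right degenerates) and $K=1$ (trivial) are covered automatically.
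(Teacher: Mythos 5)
Your proposal is correct and follows essentially the same route as the paper: the same sandwiched POVM $E_i=Q^{1/2}F_iQ^{1/2}$ built from the two optimal POVMs, with Lemma~\ref{lem:POVMnb} applied to $Q$ and $\id-F_i$ to split $\id-E_i$ into $3(\id-Q)+2(\id-F_i)$, yielding exactly the paper's bookkeeping and constants. The only cosmetic difference is that you normalize the inner POVM to be complete so that $\sum_{i\le K}E_i=Q$, whereas the paper does not need this.
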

\begin{proof}
Let $\cF=\{F_1,\ldots,F_K\}$ be the optimal POVM for discriminating between $A_1,\ldots,A_K$,
and let
$\cE^{(K)}=\{Q,E_{K+1},\ldots,E_r\}$ be the optimal POVM for
discriminating between $3A^{(K)}$, $A_{K+1},\ldots,A_r$. Define
$E_i:=Q^{1/2}F_iQ^{1/2}$ for $i=1,\ldots,K$. Then
$\cE=\{E_1,\ldots,E_r\}$ is a POVM.

In terms of the POVM $\cE$ we have, for $i=1,\ldots,K$,
$$
\trace A_i E_i = \trace A_i(Q-(Q-E_i))=\trace A_i Q - \trace A_i Q^{1/2}(\id-F_i) Q^{1/2}.
$$
The total error probability for this POVM (an upper bound on $P_e^*$) is given by
\beas
P_e\bz \cE\jz &=& \sum_{i=1}^K \trace A_i(\id-E_i)+ \sum_{i=K+1}^r\trace A_i(\id-E_i)  \\
&=&
\trace A^{(K)}(\id-Q) +\sum_{i=1}^K \trace A_i Q^{1/2}(\id-F_i) Q^{1/2} + \sum_{i=K+1}^r \trace A_i (\id-E_i).
\eeas
Using (\ref{eq:nussbaum}) of Lemma \ref{lem:POVMnb}, the second sum can be bounded above by
$$
2\left(\trace \sum_{i=1}^K A_i(\id-F_i) + \trace A^{(K)}(\id-Q)\right)
=
2P_e^*(A_1,\ldots,A_K)+ 2\trace A^{(K)}(\id-Q).
$$
Then
\begin{align*}
P_e^*\le P_e\bz \cE\jz  &\le
2P_e^*(A_1,\ldots,A_K)+3\trace A^{(K)}(\id-Q) + \sum_{i=K+1}^r \trace A_i (\id-E_i)\\
&=
2P_e^*(A_1,\ldots,A_K)+
P_e^*\bz 3A^{(K)},A_{K+1},\ldots,A_r\jz,
\end{align*}
proving \eqref{Nussbaum bound}.
\end{proof}

The above lemma yields immediately the following:
\begin{thm}
Let $A_1,\ldots,A_r\in\B(\hil)_+$. Then
\begin{align}\label{Nussbaum decoupling}
P_e^*(A_1,\ldots,A_r)\le
2^{r-2} P_e^*(A_1,A_2)+3\sum_{k=2}^{r-1} 2^{r-1-k} P_e^*\bz \sum_{i=1}^k A_i,A_{k+1}\jz.
\end{align}
\end{thm}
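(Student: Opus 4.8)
The plan is to derive \eqref{Nussbaum decoupling} by iterating Lemma \ref{lem:nussbaum}, at each stage splitting off the last hypothesis; concretely I would argue by induction on the number $r$ of hypotheses. The base case $r=2$ is trivial: the sum $\sum_{k=2}^{r-1}$ is empty and the coefficient of $P_e^*(A_1,A_2)$ is $2^0=1$, so the asserted bound is the identity $P_e^*(A_1,A_2)\le P_e^*(A_1,A_2)$.

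For the inductive step I would apply Lemma \ref{lem:nussbaum} with $K=r-1$, obtaining
\[
P_e^*(A_1,\ldots,A_r)\le 2\,P_e^*(A_1,\ldots,A_{r-1})+P_e^*(3A^{(r-1)},A_r),\qquad A^{(r-1)}:=\sum_{i=1}^{r-1}A_i.
\]
Choosing $K=r-1$ is exactly what forces the ``remainder'' term to be a genuine \emph{binary} error $P_e^*(3A^{(r-1)},A_r)$ rather than a problem with three or more hypotheses. To that binary term I would apply the elementary homogeneity estimate $P_e^*(\lambda A,B)\le\lambda\,P_e^*(A,B)$, valid for any $\lambda\ge 1$ and $A,B\in\B(\hil)_+$ — it is immediate, since evaluating the POVM optimal for $(A,B)$ in the definition of $P_e^*(\lambda A,B)$ and using $\Tr BE\ge 0$ gives $P_e^*(\lambda A,B)\le\lambda\Tr A(I-E)+\lambda\Tr BE=\lambda P_e^*(A,B)$ — so that $P_e^*(3A^{(r-1)},A_r)\le 3\,P_e^*(\sum_{i=1}^{r-1}A_i,A_r)$. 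For the first term I would invoke the induction hypothesis on the $r-1$ operators $A_1,\ldots,A_{r-1}$, multiply the resulting bound by $2$, and then add back the $3\,P_e^*(\sum_{i=1}^{r-1}A_i,A_r)$ contribution.

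What remains is pure bookkeeping: multiplying the inductive bound by $2$ turns its leading term $2^{r-3}P_e^*(A_1,A_2)$ into $2^{r-2}P_e^*(A_1,A_2)$ and turns each coefficient $3\cdot 2^{(r-1)-1-k}$ into $3\cdot 2^{r-1-k}$ for $k=2,\ldots,r-2$, while the freshly added binary term supplies the missing $k=r-1$ summand with coefficient $3\cdot 2^{0}$; assembling these reproduces exactly the right-hand side of \eqref{Nussbaum decoupling}. I do not expect any real obstacle here — the only ingredient beyond Lemma \ref{lem:nussbaum} is the trivial scaling inequality above — but I would be careful to use $K=r-1$ at every level of the recursion (a fixed small $K$ would leave a growing multiplicative constant in front of the running sum of the $A_i$) and to double-check the index ranges, for instance by verifying the case $r=3$ directly against Lemma \ref{lem:nussbaum} with $K=2$: there one gets $P_e^*(A_1,A_2,A_3)\le 2P_e^*(A_1,A_2)+P_e^*(3(A_1+A_2),A_3)\le 2P_e^*(A_1,A_2)+3P_e^*(A_1+A_2,A_3)$.
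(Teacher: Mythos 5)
Your proof is correct and is essentially the paper's own argument: the paper also unrolls Lemma \ref{lem:nussbaum} with $K$ equal to one less than the current number of hypotheses (i.e.\ splitting off the last hypothesis at each stage), obtaining $P_e^*(A_1,\ldots,A_r)\le 2^{r-2}P_e^*(A_1,A_2)+\sum_{k=2}^{r-1}2^{r-1-k}P_e^*(3A^{(k)},A_{k+1})$, and then invokes the same scaling bound $P_e^*(3A^{(k)},A_{k+1})\le 3P_e^*(A^{(k)},A_{k+1})$ that you justify via the optimal POVM. Phrasing it as an induction rather than an explicit recursion, and applying the scaling inequality at each step rather than at the end, are only cosmetic differences.
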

\begin{proof}
Applying Lemma \ref{lem:nussbaum} recursively, we get
\begin{align*}
P_e^*(A_1,\ldots,A_r)
&\le 2P_e^*(A_1,\ldots,A_{r-1}) + P_e^*(3A^{(r-1)},A_r) \\
&\le 4P_e^*(A_1,\ldots,A_{r-2}) + 2P_e^*(3A^{(r-2)},A_{r-1})+P_e^*(3A^{(r-1)},A_r) \\
&\le \ldots \\
&\le 2^{r-2} P_e^*(A_1,A_2)+\sum_{k=2}^{r-1} 2^{r-1-k} P_e^*(3A^{(k)},A_{k+1}).
\end{align*}
Note that $P_e^*(3A^{(k)},A_{k+1})\le 3P_e^*(A^{(k)},A_{k+1})$, and thus we obtain \eqref{Nussbaum decoupling}.
\end{proof}

\begin{rem}
Note that the upper bound in \eqref{Nussbaum decoupling} is similar to the bound $P_e^*\le P^*_{e,dich}$ in Theorem \ref{th:dich},
but the two are not directly comparable regarding their tightness.
\end{rem}

Combining now Theorem \ref{thm:averaged upper bounds} with the above theorem, we finally get the following 
decoupling bound in terms of the optimal pairwise error probabilities:
\begin{theorem}\label{th:nussbaum}
Let $A_1,\ldots,A_r\in\B(\hil)_+$ and
$\kappa:=3\max_{1\le i<j\le r}\sqrt{\Tr A_i+\Tr A_j}$. Then
\begin{align}\label{nussbaum mixed}
P_e^*(A_1,\ldots,A_r)
\le
2^{r-2} P_e^*(A_1,A_2)+\kappa\sum_{k=2}^{r-1} 2^{r-1-k} \sum_{l=1}^k \sqrt{P_e^*(A_l,A_{k+1})}.
\end{align}
If $A_k$ is of rank one for $k=3,\ldots,r$, then
\begin{equation}\label{nussbaum pure}
P_e^*(A_1,\ldots,A_r)\le
2^{r-2} P_e^*(A_1,A_2)+\kappa'\sum_{k=2}^{r-1} 2^{r-1-k} \sum_{l=1}^k  P_e^*(A_l,A_{k+1}),
\end{equation}
where $\kappa':=3\Tr A_0/\bz\min_{3\le i\le r}\Tr A_i\jz$.
\end{theorem}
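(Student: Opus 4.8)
The plan is to feed the averaged–hypothesis bounds of Theorem~\ref{thm:averaged upper bounds} into the recursive decoupling \eqref{Nussbaum decoupling}, one summand at a time. After \eqref{Nussbaum decoupling} the only quantities left to control are $P_e^*\bz\sum_{i=1}^k A_i,A_{k+1}\jz$ for $k=2,\dots,r-1$; note that for these indices $k+1\in\{3,\dots,r\}$, so in the rank-one case $A_{k+1}$ is automatically of rank one. Throughout I would use $P_e^*(X,Y)=P_e^*(Y,X)$ to place $A_{k+1}$ in the first slot and then apply Theorem~\ref{thm:averaged upper bounds} with $A:=A_{k+1}$ and $B_j:=A_j$, $j=1,\dots,k$.

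For \eqref{nussbaum mixed}: for each $k$, the chain \eqref{averaged single-shot bounds} gives $P_e^*\bz A_{k+1},\sum_{i=1}^k A_i\jz\le\sum_{l=1}^k\sqrt{\Tr(A_l+A_{k+1})}\,\sqrt{P_e^*(A_l,A_{k+1})}$, and since $\{k+1,l\}$ is a genuine index pair (as $l\le k<k+1$) each factor $\sqrt{\Tr(A_l+A_{k+1})}$ is at most $\max_{1\le i<j\le r}\sqrt{\Tr A_i+\Tr A_j}=\kappa/3$; hence $3P_e^*\bz\sum_{i=1}^k A_i,A_{k+1}\jz\le\kappa\sum_{l=1}^k\sqrt{P_e^*(A_l,A_{k+1})}$, and substituting this into \eqref{Nussbaum decoupling} term by term reproduces \eqref{nussbaum mixed}. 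For \eqref{nussbaum pure} the same scheme is run with the rank-one bound \eqref{pure upper bounds} in place of \eqref{averaged single-shot bounds}: its second line yields
\[
P_e^*\Bigl(A_{k+1},\textstyle\sum_{i=1}^k A_i\Bigr)\le\Bigl(\sum_{j=1}^k\bigl(1+\frac{\Tr A_j}{\Tr A_{k+1}}\bigr)\Bigr)\sum_{l=1}^k P_e^*(A_l,A_{k+1}).
\]
It then remains to dominate the scalar prefactor by a constant multiple of $\Tr A_0/\min_{3\le i\le r}\Tr A_i$, using $\sum_{j=1}^k\Tr A_j\le\Tr A_0$, $\Tr A_{k+1}\ge\min_{3\le i\le r}\Tr A_i$, and the elementary counting fact that $k$ itself is controlled by $\Tr A_0/\min_{3\le i\le r}\Tr A_i$, since $\Tr A_0\ge\sum_{i=3}^r\Tr A_i\ge(r-2)\min_{3\le i\le r}\Tr A_i$ while $k\le r-1$. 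Tracking the constants through the factor $3$ of \eqref{Nussbaum decoupling} produces the coefficient $\kappa'$, and plugging everything back in — the base term $2^{r-2}P_e^*(A_1,A_2)$ needing no rank-one hypothesis — gives \eqref{nussbaum pure}.

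The substantive ingredients, namely \eqref{Nussbaum decoupling} (i.e.\ Lemma~\ref{lem:nussbaum} iterated) and Theorem~\ref{thm:averaged upper bounds}, are already established, so the bulk of the argument is assembly. The one genuinely non-mechanical point is the estimate of the prefactor $\sum_{j=1}^k\bz 1+\Tr A_j/\Tr A_{k+1}\jz$ in the rank-one case: because of its additive order-$k$ contribution, keeping the final constant independent of the number of hypotheses forces one to use the counting observation above that $k$ is itself dominated by $\Tr A_0/\min_{3\le i\le r}\Tr A_i$. I expect this to be the only place where more than routine bookkeeping is required.
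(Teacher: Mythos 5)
Your strategy is exactly the paper's: its proof of this theorem is a one-line application of Theorem \ref{thm:averaged upper bounds} to each term $P_e^*\bz\sum_{i=1}^k A_i,A_{k+1}\jz$ appearing in \eqref{Nussbaum decoupling}, which is precisely what you do. Your derivation of \eqref{nussbaum mixed} is complete and correct: symmetry of $P_e^*$, the chain \eqref{averaged single-shot bounds} with $A=A_{k+1}$, $B_j=A_j$, and the bound $\sqrt{\Tr(A_l+A_{k+1})}\le\kappa/3$ for every admissible pair give the stated inequality after substitution into \eqref{Nussbaum decoupling}.

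For \eqref{nussbaum pure} your route is again the intended one, but the final bookkeeping does not deliver the coefficient $\kappa'$ as you assert. The prefactor coming from the second line of \eqref{pure upper bounds} is $\sum_{j=1}^k\bz 1+\Tr A_j/\Tr A_{k+1}\jz=k+\bz\sum_{j\le k}\Tr A_j\jz/\Tr A_{k+1}$, and this can exceed $\Tr A_0/\min_{3\le i\le r}\Tr A_i$: with all traces equal and $k=r-1$ it equals $2(r-1)$, whereas $\Tr A_0/\min_{3\le i\le r}\Tr A_i=r$. Your counting step ($k\le r-1\le 2(r-2)\le 2\Tr A_0/\min_{3\le i\le r}\Tr A_i$) is fine, but it only bounds the prefactor by $3\Tr A_0/\min_{3\le i\le r}\Tr A_i$, so after the factor $3$ from \eqref{Nussbaum decoupling} you obtain \eqref{nussbaum pure} with a coefficient of order $3\kappa'$, not $\kappa'$; the claim that "tracking the constants produces $\kappa'$" is therefore overstated. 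In fairness, the paper's own one-line proof silently incurs the same additive-$k$ term and its stated constant is open to the same objection, and the discrepancy is immaterial for every use made of the theorem (Lemma \ref{lemma:singleshot}, Theorems \ref{thm:recursive} and \ref{thm:nussbaum achievability}), where only the exponential order of the prefactor matters. But to claim the theorem verbatim you would need either to state the larger universal constant explicitly or to supply a sharper estimate of the prefactor than the one you propose.
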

\begin{proof}
%Using Lemma \ref{lem:nussbaum} recursively with $K$ going down from $r-1$ to $2$:
%\beas
%P_e^*(A_1,\ldots,A_r)
%&\le& 2P_e^*(A_1,\ldots,A_{r-1}) + P_e^*(3A^{(r-1)},A_r) \\
%&\le& 4P_e^*(A_1,\ldots,A_{r-2}) + 2P_e^*(3A^{(r-2)},A_{r-1})+P_e^*(3A^{(r-1)},A_r) \\
%&\le& \ldots \\
%&\le& 2^{r-2} P_e^*(A_1,A_2)+\sum_{k=2}^{r-1} 2^{r-1-k} P_e^*(3A^{(k)},A_{k+1}).
%\eeas
%Note that $P_e^*(3A^{(k)},A_{k+1})\le 3P_e^*(A^{(k)},A_{k+1})$, and
Applying Theorem \ref{thm:averaged upper bounds} to each term in the summand in \eqref{Nussbaum decoupling}
yields the inequalities of the theorem.
\end{proof}

\begin{rem}
The constants in \eqref{nussbaum mixed} and \eqref{nussbaum pure} are in general worse than the ones in
Theorems \ref{th:half} and \ref{thm:dichotomic decoupling}. On the other hand, \eqref{nussbaum mixed}
outperforms all the previous bounds in the sense that for one pair of states, it contains the optimal binary error probability
instead of its square root. We will explore the consequences of this in the next section.
\end{rem}

\section{Asymptotics: the Chernoff bound}
\label{sec:asymptotics}

The various single-shot decoupling bounds, that we obtained in the previous sections for the multiple 
state discrimination problem, can be summarized as follows:
\begin{lemma}\label{lemma:singleshot}
For every $r\in\bN$, there exist $\kappa_r,\kappa'_r>0$ such that for all
$A_1,\ldots,A_r\in\B(\hil)_+$,
\begin{align}\label{mixed decoupling}
P_e^*(A_1,\ldots,A_r)\le\kappa_r(\Tr A_0)^{1/2}\sum_{(i,j):\,i\ne j}P_e^*(A_i,A_j)^{1/2}.
\end{align}
If all but at most two of the $A_i$ are of rank $1$ then we also have
\begin{align}\label{pure decoupling}
P_e^*(A_1,\ldots,A_r)\le\kappa_r'\frac{\Tr A_0}{\min_i\Tr A_i}\sum_{(i,j):\,i\ne j}P_e^*(A_i,A_j).
\end{align}
\end{lemma}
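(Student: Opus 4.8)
The plan is to obtain both inequalities by collecting the single-shot decoupling bounds already proved in Sections~\ref{sec:Tyson}, \ref{sec:dichotomic} and \ref{sec:nussbaum}, and then absorbing the trace-dependent prefactors into constants that depend only on $r$. Throughout I write $\Tr A_0=\sum_{k=1}^r\Tr A_k$ and recall that the $A_i$ are non-zero, so $\min_i\Tr A_i>0$.

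For \eqref{mixed decoupling} I would start from the first case of \eqref{eq:mixedboundT2} in Theorem~\ref{th:half} (equivalently, the last inequality in \eqref{mixed dich upper} of Theorem~\ref{thm:dichotomic decoupling}), namely
\[
P_e^*(A_1,\ldots,A_r)\le\sum_{(i,j):\,i\ne j}\sqrt{\Tr A_i+\Tr A_j}\;P_e^*(A_i,A_j)^{1/2}.
\]
Since $\Tr A_i+\Tr A_j\le\Tr A_0$ for every pair, one may replace $\sqrt{\Tr A_i+\Tr A_j}$ by $(\Tr A_0)^{1/2}$ and pull it out of the sum, which gives \eqref{mixed decoupling} with $\kappa_r=1$ for every $r$ (the case $r=1$ being trivial).

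For \eqref{pure decoupling} the cleaner ``all pure'' bounds of Theorems~\ref{th:half} and~\ref{thm:dichotomic decoupling} do not apply, since we only assume that at most two of the $A_i$ fail to have rank one; this is precisely the situation handled by Nussbaum's recursion. I would first relabel the hypotheses so that the (at most two) operators that are not of rank one are $A_1$ and $A_2$, so that $A_3,\ldots,A_r$ all have rank one, and then invoke \eqref{nussbaum pure} of Theorem~\ref{th:nussbaum}:
\[
P_e^*(A_1,\ldots,A_r)\le 2^{r-2}P_e^*(A_1,A_2)+\kappa'\sum_{k=2}^{r-1}2^{r-1-k}\sum_{l=1}^k P_e^*(A_l,A_{k+1}),
\]
with $\kappa'=3\,\Tr A_0/\bz\min_{3\le i\le r}\Tr A_i\jz$. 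As $k$ runs from $2$ to $r-1$ and $l$ from $1$ to $k$, the ordered pair $(l,k+1)$ ranges over a subset of $\{(l,m):1\le l<m\le r\}$ with no repetitions, and $2^{r-1-k}\le 2^{r-3}$, so the double sum is at most $2^{r-3}\sum_{(i,j):\,i\ne j}P_e^*(A_i,A_j)$; similarly $P_e^*(A_1,A_2)\le\sum_{(i,j):\,i\ne j}P_e^*(A_i,A_j)$ since all summands are non-negative. Finally $\min_{3\le i\le r}\Tr A_i\ge\min_{1\le i\le r}\Tr A_i$, so $\kappa'\le 3\,\Tr A_0/\min_i\Tr A_i$, and $\Tr A_0/\min_i\Tr A_i\ge1$, so the leading coefficient $2^{r-2}$ is also bounded by $2^{r-2}\,\Tr A_0/\min_i\Tr A_i$. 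Combining these estimates yields \eqref{pure decoupling} with, for instance, $\kappa'_r:=5\cdot 2^{r-3}$ (the cases $r\le 2$ being trivial, where \eqref{nussbaum pure} is not even needed).

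I expect no real difficulty here; the argument is bookkeeping. The one point worth stressing is why Nussbaum's bound is needed at all: the sharper estimates of Theorems~\ref{th:half} and~\ref{thm:dichotomic decoupling} require every $A_i$ to have rank one, whereas \eqref{nussbaum pure} tolerates one ``preferred'' pair of arbitrary PSD operators, at the price of a constant growing exponentially in $r$ — which is harmless since $\kappa'_r$ is only required to depend on $r$.
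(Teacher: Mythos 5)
Your proposal is correct and follows essentially the same route as the paper: \eqref{mixed decoupling} is read off from \eqref{eq:mixedboundT2} of Theorem~\ref{th:half} (equivalently \eqref{mixed dich upper}) after bounding $\sqrt{\Tr A_i+\Tr A_j}$ by $(\Tr A_0)^{1/2}$, and \eqref{pure decoupling} from \eqref{nussbaum pure} of Theorem~\ref{th:nussbaum} after relabelling so that only $A_1,A_2$ may fail to be rank one. Your explicit bookkeeping of the constants (e.g.\ $\kappa_r=1$, $\kappa_r'=5\cdot 2^{r-3}$) is fine and merely makes precise what the paper leaves implicit.
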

\begin{proof}
The bound in \eqref{mixed decoupling} can be obtained from either of the following: the bound \eqref{BK bound} of
\cite{barnumknill} using the Fuchs--van de Graaf inequalities; from the bound \eqref{eq:mixedboundT2} of Theorem \ref{th:half};
from the bound \eqref{mixed dich upper} of Theorem \ref{thm:dichotomic decoupling}; and from \eqref{nussbaum mixed} of Theorem \ref{th:nussbaum}.

The bound \eqref{pure decoupling} follows from \eqref{nussbaum pure} of Theorem \ref{th:nussbaum}.
(We can assume without loss of generality that at most hypotheses $1$ and $2$ are not represented by rank one operators.)
However,
when all the $A_i$ are pure, \eqref{pure decoupling} also follows from any of the following: from the bound \eqref{HLS bound} of \cite{HLS}
using the Fuchs--van de Graaf inequalities; from the bound \eqref{eq:pureboundT2} of Theorem \ref{th:half};
and from the bound \eqref{pure dich upper} of Theorem \ref{thm:dichotomic decoupling}.
\end{proof}
\smallskip

Armed with these upper bounds,
%on the multiple-hypothesis discrimination error probability,
we now turn to the study of its asymptotic behavior.
Let our hypotheses be represented by the sequences $\vecc{A}_i:=\{A_{i,n}\}_{n\in\bN}$, $i=1,\ldots,r$, and define
$A_{0,n}:=\sum_{i=1}^r A_{i,n}$. Recall the definitions of
$\pli\bz\vecc{A}_1,\ldots,\vecc{A}_r\jz$ and $\pls\bz\vecc{A}_1,\ldots,\vecc{A}_r\jz$ from \eqref{li exponent}--\eqref{ls exponent}.
Due to Theorem \ref{th:41}, we have
\begin{equation}\label{asymptotic lower bound}
\pli\bz\vecc{A}_1,\ldots,\vecc{A}_r\jz
\ge
\max_{(i,j):\,i\ne j}\pli\bz\vecc{A}_i,\vecc{A}_j\jz.
\end{equation}
Our aim here is to complement the above inequality by giving upper bounds on $\pls\bz\vecc{A}_1,\ldots,\vecc{A}_r\jz$ in terms of the pairwise exponents.
Recall the definition of the asymptotic Chernoff divergence from \eqref{regularized Chernoff},
\begin{equation*}
C(\vecc{A}_1,\vecc{A_2})=\liminf_{n\to\infty}\frac{1}{n}\ch{A_{1,n}}{A_{2,n}}=
-\limsup_{n\to\infty}\frac{1}{n}\log \min_{0\le s\le 1}\Tr A_{1,n}^s A_{2,n}^{1-s}.
\end{equation*}

We conjecture that the following converse to \eqref{asymptotic lower bound} holds under very mild conditions:
\begin{align}\label{asymptotic upper conjecture}
\pls\bz\vecc{A}_1,\ldots,\vecc{A}_r\jz
\le
\max_{(i,j):\,i\ne j}\pls\bz\vecc{A}_i,\vecc{A}_j\jz
\le
-\min_{(i,j):\,i\ne j}C(\vecc{A}_1,\vecc{A_2}).
\end{align}
Note that the second inequality is always true, due to \eqref{Chernoff upper}. Below we show that the weaker inequality
\begin{align}\label{asymptotic upper half}
\pls\bz\vecc{A}_1,\ldots,\vecc{A}_r\jz
\le
\half\max_{(i,j):\,i\ne j}\pls\bz\vecc{A}_i,\vecc{A}_j\jz
\le
-\half\min_{(i,j):\,i\ne j}C(\vecc{A}_1,\vecc{A_2})
\end{align}
is always true as long as $\limsup_{n\to\infty}\frac{1}{n}\log\Tr A_{0,n}=0$, which is trivially satisfied in the case of weighted states. We also show
\eqref{asymptotic upper conjecture} in a number of special cases.

We have the following general result:
\begin{thm}\label{thm:general asymptotic upper bounds}
Assume that $\limsup_{n\to\infty}\frac{1}{n}\log\Tr A_{0,n}=0$. Then
\begin{align}\label{asymptotic upper}
\pls\bz\vecc{A}_1,\ldots,\vecc{A}_r\jz
\le
\half\max_{(i,j):\,i\ne j}\pls\bz\vecc{A}_i,\vecc{A}_j\jz
\le
-\half\min_{(i,j):\,i\ne j}C(\vecc{A}_1,\vecc{A_2}).
\end{align}
Assume, moreover, that $A_{i,n}$ is of rank one for every $n\in\bN$ for at least $r-2$ of the hypotheses. 
If $\limsup_{n\to\infty}\frac{1}{n}\log\frac{\Tr A_{0,n}}{\min_i\Tr A_{i,n}}=0$, then
we have the stronger inequality
\begin{align}\label{asymptotic upper pure}
\pls\bz\vecc{A}_1,\ldots,\vecc{A}_r\jz
\le
\max_{(i,j):\,i\ne j}\pls\bz\vecc{A}_i,\vecc{A}_j\jz
\le
-\min_{(i,j):\,i\ne j}C(\vecc{A}_i,\vecc{A_j}).
\end{align}
\end{thm}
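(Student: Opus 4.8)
The plan is to package the single-shot decoupling bounds collected in Lemma~\ref{lemma:singleshot} into asymptotic statements by taking $\frac1n\log$ of both sides, passing to the limit superior (using its subadditivity), and applying the elementary rate lemma, Lemma~\ref{lemma:maximum rate}, to turn a sum of rates into a maximum. The trace prefactors are then disposed of by the hypotheses on $\Tr A_{0,n}$, and the passage from pairwise rates to (regularized) Chernoff divergences is done by \eqref{Chernoff upper}.

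For the first inequality in \eqref{asymptotic upper}, I would fix $n$ and apply \eqref{mixed decoupling} to $A_{1,n},\ldots,A_{r,n}$, giving $P_e^*(A_{1,n},\ldots,A_{r,n})\le\kappa_r(\Tr A_{0,n})^{1/2}\sum_{(i,j):\,i\ne j}P_e^*(A_{i,n},A_{j,n})^{1/2}$. Taking $\frac1n\log$ and then $\limsup_{n\to\infty}$, the term $\frac1n\log\kappa_r$ vanishes because $\kappa_r$ depends only on $r$; the term $\frac{1}{2n}\log\Tr A_{0,n}$ has limit superior $\half\limsup_n\frac1n\log\Tr A_{0,n}=0$ by hypothesis; and for the remaining term the last inequality of Lemma~\ref{lemma:maximum rate}, applied to the finitely many sequences $P_e^*(A_{i,n},A_{j,n})^{1/2}$, gives $\limsup_n\frac1n\log\sum_{(i,j)}P_e^*(A_{i,n},A_{j,n})^{1/2}\le\half\max_{(i,j):\,i\ne j}\pls(\vecc A_i,\vecc A_j)$. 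Hence $\pls(\vecc A_1,\ldots,\vecc A_r)\le\half\max_{(i,j):\,i\ne j}\pls(\vecc A_i,\vecc A_j)$, and the second inequality of \eqref{asymptotic upper} follows from \eqref{Chernoff upper}, i.e.\ $\pls(\vecc A_i,\vecc A_j)\le-C(\vecc A_i,\vecc A_j)$, whence $\half\max_{(i,j)}\pls(\vecc A_i,\vecc A_j)\le-\half\min_{(i,j):\,i\ne j}C(\vecc A_i,\vecc A_j)$.

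For \eqref{asymptotic upper pure} the argument is the same verbatim, except that the assumption that at least $r-2$ of the $A_{i,n}$ are rank one lets me invoke \eqref{pure decoupling} in place of \eqref{mixed decoupling}; the prefactor is then $\kappa_r'\,\Tr A_{0,n}/\min_i\Tr A_{i,n}$, whose logarithmic rate is $0$ by the second hypothesis, and the pairwise errors now enter to the first power, so Lemma~\ref{lemma:maximum rate} yields $\pls(\vecc A_1,\ldots,\vecc A_r)\le\max_{(i,j):\,i\ne j}\pls(\vecc A_i,\vecc A_j)$, after which \eqref{Chernoff upper} again supplies the bound in terms of the regularized Chernoff divergences.

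Given Lemma~\ref{lemma:singleshot} there is no deep obstacle here; the only point requiring a word of care is that some pairwise errors $P_e^*(A_{i,n},A_{j,n})$ may be zero (for instance when two of the i.i.d.\ states have orthogonal supports), so that their logarithms are $-\infty$ and the corresponding $C(\vecc A_i,\vecc A_j)$ equals $+\infty$. This is harmless under the conventions $\log 0=-\infty$ fixed in the Notations: Lemma~\ref{lemma:maximum rate} and \eqref{Chernoff upper} remain valid, and the sums in \eqref{mixed decoupling}--\eqref{pure decoupling} are still controlled by their nonvanishing terms, so one may as well restrict attention to those $n$ for which the relevant quantities are strictly positive.
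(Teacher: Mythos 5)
Your proof is correct and follows exactly the route the paper takes: its own proof consists of the single line ``Immediate from Lemma \ref{lemma:singleshot}, Lemma \ref{lemma:maximum rate}, and \eqref{Chernoff upper},'' and you have simply spelled out that argument, including a sensible handling of the degenerate case $P_e^*(A_{i,n},A_{j,n})=0$.
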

\begin{proof}
Immediate from Lemma \ref{lemma:singleshot}, Lemma \ref{lemma:maximum rate}, and \eqref{Chernoff upper}.
\end{proof}

We can also prove \eqref{asymptotic upper conjecture} in the following special cases,
by using Theorem \ref{th:nussbaum}.

%\begin{thm}
%Assume that $A_{i,n}$ is rank one for every $n\in\bN$ for at least $r-2$ of the hypotheses.
%If we also have $\limsup_{n\to\infty}\frac{1}{n}\log\frac{\Tr A_{0,n}}{\min_i\Tr A_{i,n}}=0$ then
%\begin{align*}
%\pls\bz\vecc{A}_1,\ldots,\vecc{A}_r\jz
%\le
%\max_{(i,j):\,i\ne j}\pls\bz\vecc{A}_i,\vecc{A}_j\jz
%\le
%-\min_{(i,j):\,i\ne j}C(\vecc{A}_i,\vecc{A_j}).
%\end{align*}
%\end{thm}
%\begin{proof}
%We can assume without loss of generality that at most hypotheses $1$ and $2$ are not represented by rank one operators. Then the assertion follows immediately
%from \eqref{nussbaum pure}.
%\end{proof}

\begin{thm}\label{thm:recursive}
Assume that \eqref{asymptotic upper conjecture} holds for hypotheses $\vecc{A}_i,\,i=1,\ldots,r$, and that $A_{r+1,n}$ is rank one for every $n$.
If  $\limsup_{n\to\infty}\frac{1}{n}\log\frac{\Tr\sum_{i=1}^r A_i}{\Tr A_{r+1,n}}=0$ then
\begin{align*}
\pls\bz\vecc{A}_1,\ldots,\vecc{A}_r,\vecc{A}_{r+1}\jz
&\le
\max\left\{\pls\bz\vecc{A}_i,\vecc{A}_j\jz:\,1\le i<j\le r+1\right\}\\
&\le
-\min\left\{C(\vecc{A}_i,\vecc{A_j}):\,1\le i<j\le r+1\right\}.
\end{align*}
\end{thm}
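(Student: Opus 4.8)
The plan is to bootstrap the validity of \eqref{asymptotic upper conjecture} from $r$ to $r+1$ hypotheses by isolating the new, rank-one hypothesis $\vecc{A}_{r+1}$ with Nussbaum's decomposition lemma, and then absorbing the resulting binary term using the rank-one bounds of Section~\ref{sec:averaged}. Concretely, I would first apply Lemma~\ref{lem:nussbaum} to the enlarged family $A_{1,n},\ldots,A_{r+1,n}$ with splitting parameter $K=r$ (allowed since $r\le r+1$), giving for every $n$
\[
P_e^*(A_{1,n},\ldots,A_{r+1,n})\le 2P_e^*(A_{1,n},\ldots,A_{r,n})+P_e^*\bz 3A^{(r)}_n,A_{r+1,n}\jz,\qquad A^{(r)}_n:=\sum_{i=1}^r A_{i,n}.
\]
The first term is controlled by the inductive hypothesis: $\limsup_n\frac1n\log 2P_e^*(A_{1,n},\ldots,A_{r,n})=\pls(\vecc{A}_1,\ldots,\vecc{A}_r)\le\max_{1\le i<j\le r}\pls(\vecc{A}_i,\vecc{A}_j)$, the last step being the assumed \eqref{asymptotic upper conjecture} for $\vecc{A}_1,\ldots,\vecc{A}_r$.

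The main work is the binary term $P_e^*(3A^{(r)}_n,A_{r+1,n})$, i.e.\ discriminating the rank-one operator $A_{r+1,n}$ from $3A^{(r)}_n=\sum_{j=1}^r 3A_{j,n}$. By symmetry of $P_e^*$ this is $P_e^*(A_{r+1,n},\sum_{j=1}^r 3A_{j,n})$, so I would apply the second rank-one bound of \eqref{pure upper bounds} in Theorem~\ref{thm:averaged upper bounds} with $A=A_{r+1,n}$ and $B_j=3A_{j,n}$, and then use $P_e^*(A_{r+1,n},3A_{j,n})\le 3P_e^*(A_{r+1,n},A_{j,n})$ exactly as in the proof of \eqref{Nussbaum decoupling}, obtaining
\[
P_e^*\bz 3A^{(r)}_n,A_{r+1,n}\jz\le 3\Bigl(r+\frac{3\Tr A^{(r)}_n}{\Tr A_{r+1,n}}\Bigr)\sum_{j=1}^r P_e^*\bz A_{r+1,n},A_{j,n}\jz.
\]
The hypothesis $\limsup_n\frac1n\log(\Tr A^{(r)}_n/\Tr A_{r+1,n})=0$ forces the prefactor to have exponential rate $0$ (it lies between $\frac1n\log(3r)$ and $\frac1n\log 3(r+3)+\frac1n\log\max\{1,\Tr A^{(r)}_n/\Tr A_{r+1,n}\}$), so Lemma~\ref{lemma:maximum rate} applied to the $r$-term sum yields $\limsup_n\frac1n\log P_e^*(3A^{(r)}_n,A_{r+1,n})\le\max_{1\le j\le r}\pls(\vecc{A}_{r+1},\vecc{A}_j)$.

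Finally I would combine the two terms via the two-sequence version of Lemma~\ref{lemma:maximum rate}: $\pls(\vecc{A}_1,\ldots,\vecc{A}_{r+1})$ is at most the maximum of the two rates just obtained, which is precisely $\max_{1\le i<j\le r+1}\pls(\vecc{A}_i,\vecc{A}_j)$; the second inequality of the theorem then follows at once from \eqref{Chernoff upper} applied to each pair $(\vecc{A}_i,\vecc{A}_j)$. I do not anticipate a genuine obstacle: the only slightly delicate point is checking that the trace prefactor contributes nothing to the exponent, which is a one-line consequence of the stated hypothesis, and one should dispose at the outset of the degenerate cases where some pairwise $\pls$ equals $+\infty$ (the claim being vacuous there) so that all the $\limsup$-subadditivity manipulations are legitimate. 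It is precisely the rank-one-ness of $A_{r+1,n}$ that allows \eqref{pure upper bounds} to be used without square roots, which is what keeps the full conjectured rate rather than its factor-$1/2$ weakening from Theorem~\ref{thm:general asymptotic upper bounds}.
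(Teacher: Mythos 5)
Your proposal is correct and follows essentially the same route as the paper: apply Lemma \ref{lem:nussbaum} (inequality \eqref{Nussbaum bound}) with $K=r$ to isolate the rank-one hypothesis, control the first term by the assumed validity of \eqref{asymptotic upper conjecture} for $r$ hypotheses, and bound the binary term $P_e^*\bz 3A^{(r)}_n,A_{r+1,n}\jz$ via the rank-one bound \eqref{pure upper bounds}, with the trace-ratio hypothesis killing the prefactor's exponent. Your write-up simply makes explicit the bookkeeping (the factor $3$, the prefactor rate, and Lemma \ref{lemma:maximum rate}) that the paper leaves implicit.
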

\begin{proof}
By \eqref{Nussbaum bound},
\begin{align*}
P_e^*(A_{1,n},\ldots,A_{r,n}) \le 2P_e^*(A_1,\ldots,A_r) + P_e^*(3A^{(r)},A_{r+1}).
\end{align*}
Applying then \eqref{pure upper bounds} to the second term yields the assertion.
\end{proof}

\begin{rem}
Note that the binary case \eqref{Chernoff upper}, combined with a recursive application of Theorem \ref{thm:recursive}, 
gives an alternative proof of the second part of Theorem \ref{thm:general asymptotic upper bounds}.
\end{rem}

Inequality \eqref{asymptotic upper conjecture} has been proved in \cite{N13} for the i.i.d.~case under the assumption that 
there exists a pair of states $\sigma_k,\sigma_l,\,k\ne l$, such that $C(\sigma_k,\sigma_l)\le \frac{1}{6}C(\sigma_i,\sigma_j)$ 
for every $(i,j)\ne (k,l),\,i\ne j$. Theorem
\ref{thm:nussbaum achievability} below shows that the constant $1/6$ can be improved to $1/2$.

\begin{thm}\label{thm:nussbaum achievability}
Assume that $\limsup_{n\to\infty}\frac{1}{n}\log\Tr A_{0,n}=0$. For any pair $(k,l),\,k\ne l$,
\begin{align*}
\pls\bz\vecc{A}_1,\ldots,\vecc{A}_r\jz
&\le
\max\left\{\pls(\vecc{A}_k,\vecc{A}_l),\,%\max_{(i,j)\ne (k,l),i\ne j}
\half\pls(\vecc{A}_i,\vecc{A}_j),\,i\ne j,\,(i,j)\ne (k,l) \right\}\\
&\le
-\min\left\{C(\vecc{A}_k,\vecc{A}_l),\,\half C(\vecc{A}_i,\vecc{A}_j),\,i\ne j,\,(i,j)\ne (k,l)\right\}.
\end{align*}
In particular, if there exists a pair $(k,l),\,k\ne l$, such that $\pls(\vecc{A}_k,\vecc{A}_l)\ge\half\pls(\vecc{A}_i,\vecc{A}_j)$
or $C(\vecc{A}_k,\vecc{A}_l)\le\half C(\vecc{A}_i,\vecc{A}_j)$, $i\ne j,\,(i,j)\ne (k,l)$, then
%\eqref{asymptotic upper conjecture} holds.
\begin{align*}
\pls\bz\vecc{A}_1,\ldots,\vecc{A}_r\jz
\le
\max_{(i,j):\,i\ne j}\pls\bz\vecc{A}_i,\vecc{A}_j\jz
\le
-\min_{(i,j):\,i\ne j}C(\vecc{A}_i,\vecc{A_j}).
\end{align*}
\end{thm}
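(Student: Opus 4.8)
The plan is to read off the first displayed inequality directly from the single-shot bound \eqref{nussbaum mixed} in Theorem \ref{th:nussbaum}, obtain the second from the binary Chernoff bound \eqref{Chernoff upper}, and then deduce the two ``in particular'' statements by inspection. The one preparatory step is a relabelling: since \eqref{nussbaum mixed} singles out the pair $(A_1,A_2)$ (it is the only pair appearing there without a square root), I would relabel the hypotheses so that $A_1:=A_k$ and $A_2:=A_l$, ordering the remaining ones arbitrarily. With this labelling \eqref{nussbaum mixed} becomes
\begin{align*}
P_e^*(A_{1,n},\ldots,A_{r,n})\le 2^{r-2}P_e^*(A_{k,n},A_{l,n})+\kappa_n\sum_{m=2}^{r-1}2^{r-1-m}\sum_{l'=1}^{m}\sqrt{P_e^*(A_{l',n},A_{m+1,n})},
\end{align*}
with $\kappa_n:=3\max_{i<j}\sqrt{\Tr A_{i,n}+\Tr A_{j,n}}$. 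The index set $\{(l',m+1):2\le m\le r-1,\ 1\le l'\le m\}$ is precisely the collection of the $\binom{r}{2}-1$ unordered pairs other than $(k,l)$, each occurring exactly once, so the right-hand side is $2^{r-2}P_e^*(A_{k,n},A_{l,n})$ plus a sum over all pairs $(i,j)\ne(k,l)$ of terms of the form $c_{ij,n}\sqrt{P_e^*(A_{i,n},A_{j,n})}$, where $c_{ij,n}$ equals $\kappa_n$ times a fixed power of $2$.

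Next I would pass to exponential rates. Because $\Tr A_{i,n}+\Tr A_{j,n}\le\Tr A_{0,n}$, the hypothesis $\limsup_{n\to\infty}\frac{1}{n}\log\Tr A_{0,n}=0$ forces $\limsup_{n\to\infty}\frac{1}{n}\log\kappa_n=0$, hence $\limsup_{n\to\infty}\frac{1}{n}\log c_{ij,n}=0$ for every pair. Applying Lemma \ref{lemma:maximum rate} to the displayed single-shot bound — using that the constants $2^{r-2}$ and $c_{ij,n}$ do not affect the exponential rate and that $\frac{1}{n}\log\sqrt{P_e^*}=\half\cdot\frac{1}{n}\log P_e^*$ — gives
\begin{align*}
\pls\bz\vecc{A}_1,\ldots,\vecc{A}_r\jz\le\max\left\{\pls(\vecc{A}_k,\vecc{A}_l),\ \half\pls(\vecc{A}_i,\vecc{A}_j):\ i\ne j,\ (i,j)\ne(k,l)\right\},
\end{align*}
which is the first assertion. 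The second follows at once by inserting $\pls(\vecc{A}_i,\vecc{A}_j)\le -C(\vecc{A}_i,\vecc{A}_j)$ from \eqref{Chernoff upper}. For the final statement: if $\pls(\vecc{A}_k,\vecc{A}_l)\ge\half\pls(\vecc{A}_i,\vecc{A}_j)$ for all $(i,j)\ne(k,l)$, the maximum above is attained at $(k,l)$, so $\pls\bz\vecc{A}_1,\ldots,\vecc{A}_r\jz\le\pls(\vecc{A}_k,\vecc{A}_l)\le\max_{(i,j):\,i\ne j}\pls(\vecc{A}_i,\vecc{A}_j)$; the $C$-hypothesis is handled the same way from the second inequality, since it makes $C(\vecc{A}_k,\vecc{A}_l)=\min_{(i,j):\,i\ne j}C(\vecc{A}_i,\vecc{A}_j)$ and $\min\{C(\vecc{A}_k,\vecc{A}_l),\ \half C(\vecc{A}_i,\vecc{A}_j):(i,j)\ne(k,l)\}=C(\vecc{A}_k,\vecc{A}_l)$.

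There is no deep obstacle: the proof is essentially bookkeeping on top of Theorem \ref{th:nussbaum} and Lemma \ref{lemma:maximum rate}. The two points that require a little care are, first, that \eqref{nussbaum mixed} treats exactly one pair preferentially, so the relabelling must put $(k,l)$ in the first two slots and one must check that the remaining double sum exhausts the other pairs; and second, that the $n$-dependent prefactor $\kappa_n$ is subexponential, which is precisely where the hypothesis $\limsup_n\frac{1}{n}\log\Tr A_{0,n}=0$ is used, through $\Tr A_{i,n}+\Tr A_{j,n}\le\Tr A_{0,n}$.
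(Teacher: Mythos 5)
Your proof is correct and follows the same route as the paper, whose proof of this theorem is simply "Immediate from Theorem \ref{th:nussbaum}": you spell out the relabelling that puts $(k,l)$ in the preferred slot of \eqref{nussbaum mixed}, verify the double sum covers all remaining pairs, and run the standard rate bookkeeping via Lemma \ref{lemma:maximum rate} and \eqref{Chernoff upper}. The only (harmless) imprecision is the claim $\limsup_n\frac{1}{n}\log\kappa_n=0$; the hypothesis only gives $\le 0$, which is all your argument actually uses.
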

\begin{proof}
Immediate from Theorem \ref{th:nussbaum}.
\end{proof}
\medskip

Finally, we note that in many important cases, we have the optimality relation
\begin{equation}
\pli\bz\vecc{A}_i,\vecc{A}_j\jz\ge -C\bz\vecc{A}_i,\vecc{A}_j\jz.
\end{equation}	
For instance, this happens in the standard state discrimination problem if the hypotheses $i,j$ are i.i.d.~\cite{NSz}, or
Gibbs states of a finite-range, translation-invariant Hamiltonian on a spin chain \cite{HMO2}, or Gibbs states of 
interaction-free fermionic or bosonic
chains \cite{MHOF,M}. In these cases, if $\limsup_{n\to\infty}\frac{1}{n}\log\Tr A_{0,n}=0$ then we have
\begin{equation*}
-\min_{(i,j):\,i\ne j}C\bz\vecc{A}_i,\vecc{A}_j\jz\le
\pli\bz\vecc{A}_1,\ldots,\vecc{A}_r\jz
\le
\pls\bz\vecc{A}_1,\ldots,\vecc{A}_r\jz
\le
-\half\min_{(i,j):\,i\ne j}C\bz\vecc{A}_i,\vecc{A}_j\jz.
\end{equation*}
If, moreover, \eqref{asymptotic upper conjecture} is satisfied then we get the stronger statement
\begin{equation*}
\lim_{n\to+\infty}\frac{1}{n}\log P_e^*\bz A_{1,n},\ldots,A_{r,n}\jz
=
-\min_{(i,j):\,i\ne j}C\bz\vecc{A}_i,\vecc{A}_j\jz.
\end{equation*}

\appendix

\s\bigskip

\noindent\textbf{\LARGE Appendix}

\section{Least upper bound and greatest lower bound for operators}
\label{sec:LUB}

As mentioned already in Section \ref{sec:ML}, for a set $A_1,\ldots,A_r$ of self-adjoint operators on the same Hilbert space, the set of upper bounds
$\A:=\{Y:\,Y\ge A_k,\,k=1,\ldots,r\}$ has no minimal element in general. The following example shows that a minimal element 
may not exist even if all the $A_k$ commute with each other.

\begin{ex}\label{ex:no max}
Let $\hil=\bC^2$, and let the operators $A_1,A_2,Y_{\alpha,\beta,\delta}$ be given by their matrices in the standard basis of $\bC^2$ as
\begin{align*}
A_1:=\begin{bmatrix} 1 & 0 \\ 0 & 2\end{bmatrix},\ds\ds
A_2:=\begin{bmatrix} 2 & 0 \\ 0 & 1\end{bmatrix},\ds\ds
Y_{\alpha,\beta,\delta}:=\begin{bmatrix} 2+\alpha & \delta \\ \overline{\delta} & 2+\beta\end{bmatrix}.
\end{align*}
Let $\I:=\{(\alpha,\beta,\delta)\in\bR^3:\,\alpha,\beta\ge 0,\,
\min\{\alpha,\beta\}+\alpha\beta\ge|\delta|^2\}$.
It is easy to see that $\U(A_1,A_2)=\{Y:\,Y\ge A_1,A_2\}=\{Y_{\alpha,\beta,\delta}:\,(\alpha,\beta,\delta)\in\I\}$.
Assume that $\U(A_1,A_2)$ has a minimal element $Y$.
The assumption $Y\ge A_1,A_2$ yields that $Y_{11}\ge 2$ and $Y_{22}\ge 2$,
while the assumption that $Y\le Y_{\alpha,\beta,\delta}$ for all $(\alpha,\beta,\delta)\in\I$ yields that $Y_{11}\le 2$ and $Y_{22}\le 2$. Hence,
$0\le Y-A_1=\begin{bmatrix} 1 & Y_{12} \\ \overline{Y_{12}} & 0\end{bmatrix}$, which yields $Y_{12}=0$, i.e., $Y=2I$. Now,
$Y_{\alpha,\beta,\delta}-Y\ge 0$ if and only if $\alpha,\beta\ge 0$ and
$\alpha\beta\ge |\delta|^2$, which defines a strictly smaller set than $\I$, contradicting our initial assumption that $Y$ is a lower bound to $\U(A_1,A_2)$.
\end{ex}

In general, the set $\A:=\{Y:\,Y\ge A_k,\,k=1,\ldots,r\}$ is the intersection of $r$ cones,
and the intersection of two cones
is not itself a cone, unless one is completely contained in the other.
Thus, $\A$ has no unique minimal element in general, in the sense that there would be an element $Y_0$ such that $Y_0\le Y$ for all $Y\in\A$.
Rather, there is an infinity of minimal elements, in the sense that there is an infinity of operators $Y\in \A$ for
which no other $Y'\in\A$ exists such that $Y'\le Y$,
and these minima constitute the boundary of $\A$ \cite{ando93}.
The upshot is that one can not define a least upper bound on the basis of the PSD ordering alone.

However, there is a unique minimal element within $\A$ in terms of the \emph{trace ordering}.
We can therefore define a least upper bound in this more restrictive sense as
\be
\lub(A_1,\ldots,A_r) := \argmin_{Y}\{\trace Y: Y\ge A_k,\, k=1,\ldots,r\}.\label{eq:defLUB2}
\ee
To make sense of the definition, we have to prove the uniqueness of the minimizer.
For the proof, we will need the following simple fact, which has been stated, e.g., in
\cite{ando93} without a proof. Here we provide a proof for readers' convenience.
\begin{lemma}\label{lemma:supports}
Let $D,T\in\B(\hil)$ be self-adjoint operators such that
$D\ge \pm T$. Then $D$ is positive semidefinite, and its support dominates the support of $T$.
\end{lemma}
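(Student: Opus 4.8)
The plan is to split the statement into its two parts and dispatch them in order. For positivity of $D$, I would simply add the two hypothesized operator inequalities $D\ge T$ and $D\ge -T$: this gives $2D=(D+T)+(D-T)\ge 0$, hence $D\ge 0$. No further work is needed for this part.

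For the support statement, I would reformulate it as $\ker D\subseteq\ker T$, which is equivalent since $\supp D=(\ker D)^\perp$ and likewise for $T$. So I would fix an arbitrary vector $v\in\ker D$ and show $Tv=0$. The starting point is $\langle v,Dv\rangle=0$. Pairing the inequality $D-T\ge 0$ with $v$ gives $\langle v,Tv\rangle\le\langle v,Dv\rangle=0$, and pairing $D+T\ge 0$ with $v$ gives $\langle v,Tv\rangle\ge -\langle v,Dv\rangle=0$; hence $\langle v,Tv\rangle=0$.

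To upgrade $\langle v,Tv\rangle=0$ to $Tv=0$, I would use the positive semidefinite operator $M:=D-T$ and observe that $\langle v,Mv\rangle=\langle v,Dv\rangle-\langle v,Tv\rangle=0$. Since $M\ge 0$, its positive square root $M^{1/2}$ exists and $\|M^{1/2}v\|^2=\langle v,Mv\rangle=0$, so $M^{1/2}v=0$ and therefore $Mv=0$, i.e. $Tv=Dv=0$. Thus $v\in\ker T$, proving $\ker D\subseteq\ker T$, which is exactly the claim that the support of $D$ dominates the support of $T$.

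I do not anticipate a genuine obstacle; the argument is elementary. The only point that warrants a word of care is the standard fact that a positive semidefinite operator $M$ with $\langle v,Mv\rangle=0$ must annihilate $v$, which is immediate from the existence of $M^{1/2}$, and I would state it explicitly rather than leave it implicit.
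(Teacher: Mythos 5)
Your proof is correct and is essentially the paper's argument recast in vector form: the averaging step $2D=(D+T)+(D-T)\ge 0$ is the paper's first line, and your kernel argument (for $v\in\ker D$, first $\langle v,Tv\rangle=0$ from both inequalities, then $(D-T)v=0$ since a PSD operator whose quadratic form vanishes at $v$ annihilates $v$) is exactly the content of the paper's block decomposition with respect to $\supp D$, where it shows $T_{22}=0$ and then $T_{12}=0$. No gap; the two presentations differ only in packaging.
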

\begin{proof}
First, $D\ge \pm T$ implies $D\ge (T+(-T))/2=0$, proving that $D$ is PSD.
Let $\hil_1$ denote the support of $D$, and decompose $\hil$ as $\hil=\hil_1\oplus\hil_2$.
Then $D$ and $T$ can be written in the corresponding block forms as
$D=\begin{bmatrix} D_{11} & 0\\0 & 0\end{bmatrix}$ and
$T=\begin{bmatrix} T_{11} & T_{12}\\T_{12}^* & T{22}\end{bmatrix}$,
and positive semidefiniteness of $D\pm T$ implies $0\ge T_{22}\ge 0$. Using again that
$D+T\ge 0$, we finally obtain that $T_{12}=0$, too, from which the assertion follows.
\end{proof}

\begin{theorem}
Let $A_1,\ldots,A_r\in\B(\hil)\sa$ be a finite number of self-adjoint operators.
Then in the set $\A:=\{Y: Y\ge A_1,\ldots,A_r\}$ there is a unique element with minimal trace.
\end{theorem}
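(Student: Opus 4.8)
The plan is to prove existence and uniqueness of the trace-minimal upper bound separately; uniqueness is the part requiring a genuine idea, existence is a routine compactness argument.

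\textbf{Existence.} First I would note that $\A:=\{Y:Y\ge A_1,\ldots,A_r\}$ is nonempty (it contains $\lambda I$ for $\lambda\ge\max_k\|A_k\|$), is closed (being an intersection of finitely many closed half-spaces $\{Y:Y-A_k\ge 0\}$), and that $\Tr$ is bounded below on it, since $Y\ge A_1$ forces $\Tr Y\ge\Tr A_1$. Let $\tau:=\inf_{Y\in\A}\Tr Y>-\infty$ and take a minimizing sequence $Y_n\in\A$ with $\Tr Y_n\le\tau+1$. Writing $c:=\max_k\|A_k\|$, from $Y_n\ge A_1\ge -cI$ we get $0\le Y_n+cI$ and $\Tr(Y_n+cI)\le\tau+1+c\dim\hil$; since a PSD operator with bounded trace has bounded norm, the sequence lies in a norm-bounded (hence compact, as $\dim\hil<\infty$) subset of $\B(\hil)\sa$. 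Passing to a convergent subsequence $Y_{n_j}\to Y^\ast$, closedness of $\A$ gives $Y^\ast\in\A$ and continuity of the trace gives $\Tr Y^\ast=\tau$, so the minimum is attained.

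\textbf{Uniqueness.} Suppose $Y_1,Y_2\in\A$ both have trace $\tau$. Set $Y:=\half(Y_1+Y_2)$ and $T:=\half(Y_1-Y_2)$; by convexity $Y\in\A$ with $\Tr Y=\tau$, and $\Tr T=0$. From $Y\pm T=Y_{1,2}\ge A_k$ we obtain $D_k:=Y-A_k\ge\pm T$ for every $k$, so Lemma~\ref{lemma:supports} gives $D_k\ge 0$ and $\supp D_k\supseteq\supp T$. The key step is then the following auxiliary fact: if $D\ge 0$ and $\supp D\supseteq\supp T$, then $D\ge\ep|T|$ for all sufficiently small $\ep>0$. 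I would prove this by decomposing $\hil=\ker D\oplus\supp D$: there $D=0\oplus\tilde D$ with $\tilde D$ positive definite on $\supp D$, while $T=0\oplus\tilde T$ and $|T|=0\oplus|\tilde T|$ are supported inside $\supp D$, so $D-\ep|T|=0\oplus(\tilde D-\ep|\tilde T|)\ge 0$ once $\ep\le\lambda_{\min}(\tilde D)/\|\tilde T\|$ (any $\ep$ if $\tilde T=0$). Applying this to each of the finitely many $D_k$ and choosing $\ep$ below all the thresholds, $Y':=Y-\ep|T|$ satisfies $Y'-A_k=D_k-\ep|T|\ge 0$ for all $k$, hence $Y'\in\A$, while $\Tr Y'=\tau-\ep\Tr|T|<\tau$ unless $T=0$. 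This contradicts minimality of $\tau$, so $T=0$, i.e. $Y_1=Y_2$.

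The main obstacle is the auxiliary fact inside the uniqueness argument. The natural but \emph{false} temptation is to argue $D_k\ge\pm T\Rightarrow D_k\ge|T|$ — this already fails for $2\times2$ matrices (e.g. $D=\bigl(\begin{smallmatrix}2&\sqrt2\\\sqrt2&2\end{smallmatrix}\bigr)$, $T=\bigl(\begin{smallmatrix}1&0\\0&-1\end{smallmatrix}\bigr)$ satisfy $D\ge\pm T$ but $D-|T|=D-I\not\ge 0$). The correct point is weaker and uses only $D\ge 0$ together with the support inclusion supplied by Lemma~\ref{lemma:supports}: on $\supp D$ the operator $D$ is bounded below by a positive constant, which absorbs a small multiple of $|T|$. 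Everything else — the compactness for existence, the convexity reduction to the midpoint, and the final trace comparison — is routine.
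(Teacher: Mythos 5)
Your proof is correct and follows essentially the same route as the paper: take the midpoint $Y$ and difference $T$ of two trace-minimal elements, use Lemma~\ref{lemma:supports} to get $Y-A_k\ge 0$ with support dominating that of $T$, absorb a small multiple of $|T|$ into each $Y-A_k$, and contradict trace minimality. The only differences are that you spell out explicitly the step $D\ge\ep|T|$ (which the paper compresses into its remark following the supports lemma) and add a routine compactness argument for existence, which the paper leaves implicit.
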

\begin{proof}
Let us assume that there are two distinct elements $Y_1$ and $Y_2$ in $\A$ with minimal trace $\trace Y_1=\trace Y_2$.
Let $Y_m=(Y_1+Y_2)/2$ and $\Delta=(Y_1-Y_2)/2$. Then $Y_1=Y_m+\Delta$ and $Y_2=Y_m-\Delta$,
and
$Y_1, Y_2\ge A_i$ implies $Y_m-A_i\ge\pm\Delta$. Hence, by Lemma \ref{lemma:supports},
there exists a constant $c_i>0$ such that $Y_m-A_i\ge c_i|\Delta|$
for every $i=1,\ldots,r$. Taking $c:=\min_i c_i$, we have
$Y_m-c|\Delta|\ge A_i,\,i=1,\ldots,r$. Thus, $Y_m-c|\Delta|\in\A$, but
$\Tr(Y_m-c|\Delta|)=\Tr Y_m-c\Tr|\Delta|<\Tr Y_m=\Tr Y_i,\,i=1,2$, contradicting our original assumption.
\end{proof}
\medskip

Next, we explore some properties of the LUB.
It is easy to see from (\ref{eq:defLUB2}) that the LUB satisfies the \emph{translation property}:
\be
\lub(A_1+B,\ldots,A_r+B) = \lub(A_1,\ldots,A_r)+B.\label{eq:LUBshift}
\ee
This is because the addition $X\mapsto X+B$, with a fixed self-adjoint operator $B$, is an order-preserving operation.
Furthermore, the LUB is jointly homogeneous: for any $c\ge0$,
\be
\lub(cA_1,\ldots,cA_r) = c\lub(A_1,\ldots,A_r).\label{eq:LUBhom}
\ee

The positive part and modulus can be expressed in terms of the LUB.
\begin{lemma}\label{lemma:positive part}
For all Hermitian operators $A$,
\be
A_+ = \lub(A,0), \ds\ds\mbox{ and }\ds\ds |A| = \lub(A,-A).
\ee
\end{lemma}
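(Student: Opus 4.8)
The plan is straightforward: for each identity, exhibit the claimed operator as a member of the relevant set of upper bounds, show it attains the minimal trace in that set, and then invoke the uniqueness theorem for the minimal-trace upper bound proved just above.

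First I would treat $\lub(A,0)$. Since $A_+-A=A_-\ge 0$ and $A_+\ge 0$, the operator $A_+$ lies in $\{Y:\,Y\ge A,\ Y\ge 0\}$, so it suffices to check it has the smallest trace in this set. Let $P:=\{A>0\}$ be the spectral projection onto the support of $A_+$. From $A_+A_-=0$ one gets $A_-P=0$, hence $AP=A_+P=A_+$ and in particular $\Tr A_+=\Tr AP$. Now for any $Y$ with $Y\ge A$ and $Y\ge 0$ we have $\Tr(Y-A)P\ge 0$, i.e.\ $\Tr YP\ge\Tr AP=\Tr A_+$, while $Y\ge 0$ and $0\le P\le\id$ give $\Tr Y\ge\Tr YP$. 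Hence $\Tr Y\ge\Tr A_+$, so $A_+$ is a minimal-trace upper bound of $\{A,0\}$, and uniqueness forces $A_+=\lub(A,0)$.

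For $|A|=\lub(A,-A)$ I would deduce the statement from the first part using the translation property \eqref{eq:LUBshift} and homogeneity \eqref{eq:LUBhom}: adding $A$ to both arguments gives $\lub(2A,0)=\lub(A,-A)+A$, so $\lub(A,-A)=\lub(2A,0)-A=2\lub(A,0)-A=2A_+-A=A_++A_-=|A|$. A direct route is also available: $|A|\ge\pm A$, and for any $Y\ge\pm A$ Lemma \ref{lemma:supports} gives $Y\ge 0$ with support dominating that of $A$; writing the support projection of $A$ as $\{A>0\}+\{A<0\}$ and bounding $\Tr Y\{A>0\}\ge\Tr A_+$ and $\Tr Y\{A<0\}\ge\Tr A_-$ yields $\Tr Y\ge\Tr|A|$, again identifying the minimal-trace element.

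There is no real obstacle here; the only points needing a moment's care are the orthogonality facts $A_-P=0$ and $A_+P=A_+$ that convert $\Tr A_+$ into the linear functional $\Tr AP$, and, for the second identity, confirming that it genuinely reduces to a one-line consequence of the already-established translation and homogeneity properties (or, in the direct approach, that Lemma \ref{lemma:supports} supplies the missing positivity of $Y$).
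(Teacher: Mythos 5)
Your proof is correct, and its overall skeleton matches the paper's: show $A_+$ lies in $\{Y:\,Y\ge A,\ Y\ge 0\}$, show it has minimal trace there, invoke uniqueness, and then obtain $|A|=\lub(A,-A)$ from the translation property \eqref{eq:LUBshift} and homogeneity \eqref{eq:LUBhom} exactly as the paper does. The only genuine divergence is in how trace-minimality is established: the paper invokes Weyl's monotonicity principle to get $\lambda_j(Y)\ge\max\{\lambda_j(A),0\}=\lambda_j(A_+)$ for every $Y$ in the set and sums eigenvalues, whereas you test $Y$ against the support projection $P=\{A>0\}$, using $\Tr A_+=\Tr AP$, $\Tr (Y-A)P\ge 0$ and $\Tr Y(\id-P)\ge 0$. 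Your route is slightly more elementary (it needs only that the trace of a product of PSD operators is non-negative, not an eigenvalue interlacing/monotonicity principle), while the paper's eigenvalue argument gives the marginally stronger conclusion that every $Y$ in the set majorizes $A_+$ spectrum-wise, not just in trace; for the lemma as stated both deliver exactly what is needed. Your optional direct argument for $|A|=\lub(A,-A)$, using Lemma \ref{lemma:supports} only to supply positivity of $Y$ and then testing against $\{A>0\}$ and $\{A<0\}$ separately, is also sound, though unnecessary once the first identity and \eqref{eq:LUBshift}--\eqref{eq:LUBhom} are in hand.
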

\begin{proof}
Consider the set $\A=\{Y: Y\ge A, Y\ge0\}$. Clearly, $A_+\in\A$.
By Weyl's monotonicity principle, the eigenvalues of any $Y\in \A$ are non-negative and not smaller than those of $A$; that is,
$\lambda_j(Y)\ge\lambda_j(A)$, where $\lambda_j$ denotes the $j^{th}$ largest eigenvalue.
Hence, $\lambda_j(Y)\ge\lambda_j(A_+)$, since the spectrum of $A_+$ consists of the positive eigenvalues of $A$ and zero.
As the sum of all eigenvalues is the trace,
$A_+$ is an element (and therefore \emph{the} element) in $\A$ with minimal trace.

Using (\ref{eq:LUBshift}) and (\ref{eq:LUBhom}), the modulus $|A| = 2A_+ - A$ can be similarly expressed as
$|A| = 2\lub(A,0) -A = \lub(A,-A)$.
\end{proof}

\begin{rem}
We emphasize again that the $\lub$ is a minimum with respect to the trace ordering and not the PSD ordering. In particular,
$X\ge A$ and $X\ge -A$ doesn't imply $X\ge |A|$. A counterexample can be easily given by taking $A=\begin{bmatrix} 1 & 0\\0 & -1\end{bmatrix}$
and $X=\begin{bmatrix} 2 & \sqrt{3}\\\sqrt{3} & 2\end{bmatrix}$. However, as lemma \ref{lemma:supports} shows, 
there always exists a positive constant $c$, depending on $A$ and $X$, such that
$X\ge c|A|$.
\end{rem}
\smallskip

By lemma \ref{lemma:positive part} and (\ref{eq:LUBshift}), $(A-B)_+ = \lub(A,B)-B$.
This immediately leads to a closed form expression for the LUB of two Hermitian operators:
\begin{lemma}
For all Hermitian operators $A,B$,
\be
\lub(A,B) = B+(A-B)_+ = \half(A+B+|A-B|) = A+(A-B)_-. \label{eq:LUB2}
\ee
\end{lemma}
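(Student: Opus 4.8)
The plan is to obtain all three expressions from the two facts already established just above: the translation property \eqref{eq:LUBshift} and the identity $A_+=\lub(A,0)$ from Lemma \ref{lemma:positive part}. First I would apply \eqref{eq:LUBshift} with the shift $B\mapsto -B$ to write $\lub(A,B)=\lub(A-B,0)+B$, and then invoke Lemma \ref{lemma:positive part} to identify $\lub(A-B,0)=(A-B)_+$. This already yields the first claimed equality $\lub(A,B)=B+(A-B)_+$; equivalently, it is just a rearrangement of the identity $(A-B)_+=\lub(A,B)-B$ noted immediately before the statement.

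Second, I would pass to the symmetric form by substituting the Jordan decomposition $(A-B)_+=\half\bz(A-B)+|A-B|\jz$, so that $B+(A-B)_+=B+\half(A-B)+\half|A-B|=\half\bz A+B+|A-B|\jz$, giving the middle expression.

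Third, for the last expression I would use $X=X_+-X_-$ with $X=A-B$, i.e.\ $(A-B)_+=(A-B)+(A-B)_-$, and substitute into $\lub(A,B)=B+(A-B)_+$ to get $\lub(A,B)=B+(A-B)+(A-B)_-=A+(A-B)_-$.

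There is no genuine obstacle here: \eqref{eq:LUBshift} holds because $X\mapsto X+B$ is order-preserving, Lemma \ref{lemma:positive part} has already been proved, and the remaining manipulations are the elementary algebra of the Jordan decomposition $|X|=X_++X_-$, $X=X_+-X_-$. Thus the only "work" is to present these three one-line reductions cleanly and in the right order.
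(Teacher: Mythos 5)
Your proposal is correct and follows the paper's own route exactly: the paper likewise obtains $(A-B)_+=\lub(A,B)-B$ from the translation property \eqref{eq:LUBshift} together with Lemma \ref{lemma:positive part}, and the remaining two expressions are the same elementary Jordan-decomposition manipulations you describe.
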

\noindent From these expressions it is clear that for $A,B\ge0$, the LUB is PSD as well.
\medskip

In a similar vein we can define the \textit{greatest lower bound} (GLB) as
\be
\glb(A_1,\ldots,A_r) := \argmax_Y\{\trace Y: Y\le A_k,\, k=1,\ldots,r\}.\label{eq:defGLB2}
\ee
Clearly, we have
\be
\glb(A_1,\ldots,A_r)  = -\lub(-A_1,\ldots,-A_r).\label{eq:LUBGLB2}
\ee
Hence, for two operators, we get
\begin{lemma}
For all Hermitian operators $A,B$,
\be
\glb(A,B) = \half(A+B-|A-B|) = A-(A-B)_+ = B-(A-B)_-. \label{eq:GLB2}
\ee
\end{lemma}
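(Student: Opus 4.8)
The plan is to derive \eqref{eq:GLB2} directly from the already-established closed form \eqref{eq:LUB2} for the LUB of two Hermitian operators, together with the duality relation \eqref{eq:LUBGLB2}. No new ideas are needed; the content is a one-line substitution plus elementary sign bookkeeping.

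First I would invoke \eqref{eq:LUBGLB2} to write $\glb(A,B) = -\lub(-A,-B)$, and then substitute $-A$ and $-B$ into each of the three expressions on the right-hand side of \eqref{eq:LUB2}. Since $-A-(-B) = B-A = -(A-B)$, this gives
\be
\lub(-A,-B) = -B+(B-A)_+ = \half\bigl(-A-B+|A-B|\bigr) = -A+(B-A)_-.
\ee
Next I would use the elementary facts $|-X| = |X|$, $(-X)_+ = X_-$, and $(-X)_- = X_+$ for a Hermitian operator $X$ (here with $X = A-B$) to rewrite $(B-A)_+ = (A-B)_-$ and $(B-A)_- = (A-B)_+$. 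Negating the displayed identities then yields
\be
\glb(A,B) = B-(A-B)_- = \half\bigl(A+B-|A-B|\bigr) = A-(A-B)_+,
\ee
which is exactly \eqref{eq:GLB2} up to a reordering of the three equivalent forms.

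Alternatively, one could bypass \eqref{eq:LUB2} and argue from the definition: $\glb(A,B)$ is the trace-maximal $Y$ with $Y\le A$ and $Y\le B$, equivalently with $A-Y\ge 0$ and $A-Y\ge A-B$, i.e.\ $A-Y$ ranges over upper bounds of $\{0,A-B\}$; by Lemma \ref{lemma:positive part} the unique trace-minimal such operator is $(A-B)_+$, so $A-\glb(A,B) = (A-B)_+$, and symmetrically $B-\glb(A,B)=(B-A)_+=(A-B)_-$. This is the same computation read backwards. Either route is a routine verification; the only point requiring a little care is keeping the positive and negative parts straight under negation, and there is no genuine obstacle here. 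I would present the proof via \eqref{eq:LUBGLB2} and \eqref{eq:LUB2}, as those identities are immediately at hand and make the argument a single displayed line.
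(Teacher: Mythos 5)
Your proposal is correct and follows essentially the same route as the paper: the paper obtains \eqref{eq:GLB2} precisely by applying the duality relation \eqref{eq:LUBGLB2} to the closed form \eqref{eq:LUB2} for the two-operator LUB, exactly as in your first argument (your alternative direct derivation via Lemma \ref{lemma:positive part} is also sound, but is not needed). The sign bookkeeping $(B-A)_+=(A-B)_-$ and $(B-A)_-=(A-B)_+$ is handled correctly, so there is nothing to add.
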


A warning is in order about the sign of the GLB.
When $A$ and $B$ commute, their GLB is given by the entrywise minimum in the joint eigenbasis.
If $A$ and $B$ are also PSD, then clearly their GLB will be PSD.
When $A$ and $B$ are PSD but do not commute, however, their GLB need not be PSD;
only the trace of their GLB will be guaranteed to be non-negative. The reason is that while the function $x\mapsto x_+ = \max(0,x)$ is
monotone increasing, it is also convex and therefore not operator monotone. Thus, for $A,B\ge0$, $(A-B)_+\le A$ need not be true.
For a concrete counterexample, take $A=\pr{x},\,B=\pr{y}$ with $x=(1,1),\,y=(1,i)$; then it is easy to check that $0\nleq\lub(A,B)$.
Similarly, the LUB of two negative semidefinite operators need not be negative semidefinite.

Both LUB and GLB are monotonous in their arguments with respect to the PSD ordering.
\begin{lemma}\label{lem:LUBGLBmono}
For all Hermitian operators $\{A_i\}$ and $\{B_i\}$, if $A_i\le B_i$ then
\bea
\trace\lub(A_1,\ldots,A_r) &\le& \trace\lub(B_1,\ldots,B_r), \label{eq:LUBmono} \\
\trace\glb(A_1,\ldots,A_r) &\le& \trace\glb(B_1,\ldots,B_r). \label{eq:GLBmono}
\eea
\end{lemma}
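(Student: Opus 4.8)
The whole statement reduces to the observation that both $\trace\lub$ and $\trace\glb$ are, by their very definitions \eqref{eq:defLUB2} and \eqref{eq:defGLB2} (together with the uniqueness theorem above, which guarantees these are honest minima/maxima), the optimal values of linear programs over feasible sets that only get larger when the operators are decreased. Concretely, for the LUB I would first note that if $A_i\le B_i$ for every $i$, then by transitivity of the PSD ordering any $Y$ satisfying $Y\ge B_k$ for all $k$ also satisfies $Y\ge A_k$ for all $k$; hence
\[
\{Y:\,Y\ge B_k,\,k=1,\ldots,r\}\subseteq\{Y:\,Y\ge A_k,\,k=1,\ldots,r\}.
\]
Minimizing the functional $Y\mapsto\trace Y$ over the smaller set can only give a larger (or equal) value, so
\[
\trace\lub(A_1,\ldots,A_r)=\min\{\trace Y:\,Y\ge A_k\,\forall k\}\le\min\{\trace Y:\,Y\ge B_k\,\forall k\}=\trace\lub(B_1,\ldots,B_r),
\]
which is \eqref{eq:LUBmono}.

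For the GLB I would simply transport this through the identity \eqref{eq:LUBGLB2}, $\glb(C_1,\ldots,C_r)=-\lub(-C_1,\ldots,-C_r)$. If $A_i\le B_i$ for all $i$ then $-B_i\le -A_i$ for all $i$, so applying the LUB monotonicity just proved (with $-B_i$ in the role of the smaller operators and $-A_i$ in the role of the larger ones) gives $\trace\lub(-B_1,\ldots,-B_r)\le\trace\lub(-A_1,\ldots,-A_r)$. Negating both sides and using \eqref{eq:LUBGLB2} yields
\[
\trace\glb(A_1,\ldots,A_r)=-\trace\lub(-A_1,\ldots,-A_r)\le-\trace\lub(-B_1,\ldots,-B_r)=\trace\glb(B_1,\ldots,B_r),
\]
which is \eqref{eq:GLBmono}.

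There is essentially no obstacle here: the argument is a one-line feasibility-set inclusion plus a sign flip, and it uses nothing beyond the definitions and the already-established well-definedness of $\lub$ and $\glb$. The only point worth stating carefully is that the inequalities hold for the \emph{traces}, not for the operators themselves — indeed (as the remark before the lemma already warns) one cannot expect an operator inequality $\lub(A_1,\ldots,A_r)\le\lub(B_1,\ldots,B_r)$ in the PSD sense, since these are minima in the trace ordering only; so I would phrase the statement and its proof strictly at the level of traces.
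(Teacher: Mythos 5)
Your proof is correct and is essentially the paper's own argument: the paper observes that $\lub(B_1,\ldots,B_r)\ge B_i\ge A_i$ makes $\lub(B_1,\ldots,B_r)$ a feasible (but generally non-minimal) upper bound for the $A_i$, which is exactly your feasible-set inclusion phrased through a single witness, and both proofs then deduce the GLB case from the correspondence \eqref{eq:LUBGLB2}. No gaps; nothing further needed.
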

\begin{proof}
By definition, $\lub(B_1,\ldots,B_r)\ge B_i\ge A_i$, for all $i$, so that $\lub(B_1,\ldots,B_r)$ is an upper bound
on all $A_i$. In general it is not the minimal one, hence $\trace\lub(B_1,\ldots,B_r)\ge\trace\lub(A_1,\ldots,A_r)$.
Monotonicity for the GLB follows from this and the correspondence (\ref{eq:LUBGLB2}).
\end{proof}

The LUB and GLB (and their trace) behave in the expected way
with respect to the direct sum:
\begin{lemma}\label{lub direct sum}
For any pair of sets of $A_i\in\B(\hil_1)\sa$ %all of dimension $d_A$,
and $B_i\in\B(\hil_2)\sa$, %all of dimension $d_B$, with
$i=1,\ldots,r$,
\bea
\lub(\{A_i\oplus B_i\}) &=& \LUB(\{A_i\}) \oplus \LUB(\{B_i\}) \\
\glb(\{A_i\oplus B_i\}) &=& \GLB(\{A_i\}) \oplus \GLB(\{B_i\}).
\eea
\end{lemma}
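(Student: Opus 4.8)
The plan is to verify the defining property of $\lub$ directly: show that $\LUB(\{A_i\})\oplus\LUB(\{B_i\})$ lies in the set of common upper bounds of $\{A_i\oplus B_i\}_{i=1}^r$, and that no common upper bound has strictly smaller trace; uniqueness of the trace-minimal upper bound (established in the theorem just above) then forces equality. Write $L_A:=\LUB(\{A_i\})$ and $L_B:=\LUB(\{B_i\})$. Since $L_A\ge A_i$ on $\hil_1$ and $L_B\ge B_i$ on $\hil_2$ for every $i$, we have $L_A\oplus L_B-(A_i\oplus B_i)=(L_A-A_i)\oplus(L_B-B_i)\ge 0$, so $L_A\oplus L_B$ is a common upper bound, with $\trace(L_A\oplus L_B)=\trace L_A+\trace L_B$.

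The crux is a lower bound on the trace of an \emph{arbitrary} common upper bound $Y\in\B(\hil_1\oplus\hil_2)\sa$. Here $Y$ need not be block diagonal, so one cannot compare it with $L_A\oplus L_B$ directly; instead I would pass to the pinching of $Y$ onto the two diagonal blocks. Write $Y=\twomat{Y_{11}}{Y_{12}}{Y_{12}^*}{Y_{22}}$ with respect to $\hil=\hil_1\oplus\hil_2$. Because each $A_i\oplus B_i$ is block diagonal, $Y-(A_i\oplus B_i)=\twomat{Y_{11}-A_i}{Y_{12}}{Y_{12}^*}{Y_{22}-B_i}\ge 0$, and testing this PSD operator against vectors supported in $\hil_1$, respectively in $\hil_2$, gives $Y_{11}-A_i\ge 0$ and $Y_{22}-B_i\ge 0$. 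As this holds for every $i$, $Y_{11}$ is a common upper bound of $\{A_i\}$ and $Y_{22}$ of $\{B_i\}$, whence $\trace Y_{11}\ge\trace L_A$ and $\trace Y_{22}\ge\trace L_B$ by the definition of the LUB. Therefore $\trace Y=\trace Y_{11}+\trace Y_{22}\ge\trace L_A+\trace L_B=\trace(L_A\oplus L_B)$. Combining this with the membership above and the uniqueness theorem yields $\lub(\{A_i\oplus B_i\})=L_A\oplus L_B$.

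The GLB identity then follows formally from \eqref{eq:LUBGLB2} and the case just proven:
$$
\glb(\{A_i\oplus B_i\})=-\lub(\{(-A_i)\oplus(-B_i)\})=-\bz\lub(\{-A_i\})\oplus\lub(\{-B_i\})\jz=\glb(\{A_i\})\oplus\glb(\{B_i\}).
$$
The only genuinely substantive step is the observation that domination over a block-diagonal operator survives pinching onto the block diagonal, i.e.\ that the diagonal blocks of a PSD operator are PSD; everything else is bookkeeping with the definitions and the already-established uniqueness of the trace-minimal common upper bound. I do not expect any real obstacle beyond making that pinching step explicit.
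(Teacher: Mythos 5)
Your proof is correct and rests on the same two ingredients as the paper's own argument: the pinching onto the diagonal blocks (equivalently, that the diagonal blocks of a common upper bound of block-diagonal operators are themselves upper bounds) and the uniqueness of the trace-minimal upper bound, so it is essentially the same approach, merely organized by bounding $\trace Y$ for an arbitrary upper bound rather than pinching the optimizer itself. The reduction of the GLB case via $\glb(\cdot)=-\lub(-\,\cdot)$ is a harmless variant of the paper's ``same way'' remark.
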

\begin{proof}
Consider first the LUB.
Let $X:=\lub(\{A_i\oplus B_i\})$, and
let $P_i$ denote the projection onto $\hil_i$ in the direct sum $\hil_1\oplus\hil_2$.
Then $P_1XP_1\oplus P_2XP_2\ge A_i\oplus B_i$ for all $i$, and
$\Tr X=\Tr P_1XP_1\oplus P_2XP_2$. The uniqueness of the LUB then yields
$X=P_1XP_1\oplus P_2XP_2$.

The proof for the GLB goes exactly the same way.
\end{proof}

This lemma has an important consequence.
For every set of subnormalized states $\{A_i\}_{i=1}^r$ there is a set of normalized states $\{\sigma_i\}_{i=1}^r$
such that $\trace\GLB(\{A_i\})=\trace\GLB(\{\sigma_i\})$; namely
$\sigma_i=A_i \oplus (1-\trace A_i)|i\rangle\langle i|$, where $\{\ket{i}\}_{i=1}^r$ is an orthonormal system.
This is because the `appended' states $B_i=(1-\trace A_i)|i\rangle\langle i|$ are mutually orthogonal so that
$\trace\GLB(\{B_i\})=0$.
Similar statements can be made when the arguments
of $\trace\GLB$ are linear combinations of states.
The upshot of this is two-fold. First, for a large class of statements it allows one to restrict to normalized states to prove them.
Secondly, it aids the heuristic processes of coming up with reasonable conjectures and finding counterexamples (see, e.g., at the end of Section \ref{sec:averaged}).
\medskip

Finally, we give another representation of the least upper bound as the max-relative entropy center in the case where all the operators are positive semidefinite.
For PSD operators $A,B\in\B(\hil)_+$, their max-relative entropy $\dmax{A}{B}$ is defined as \cite{Datta,RennerPhD}
\begin{equation*}
\dmax{A}{B}:=\inf\{\gamma:\,A\le 2^{\gamma}B\}.
\end{equation*}
For a set of states $\A\subset\S(\hil)$, its max-relative entropy radius $R_{\max}(\A)$ is defined as
$R_{\max}(\A):=\inf_{\omega\in\S(\hil)}\sup_{\sigma\in\A}\dmax{\sigma}{\omega}$. For the interpretation of this 
quantity in quantum information theory, see, e.g.~\cite{KRS,MD,MH} and references therein. We extend this definition to 
general positive semidefinite operators by keeping the reference $\omega$ varying only over the set of states. That is, for a set of PSD operators
$\A\subset\B(\hil)_+$, its max-relative entropy radius $R_{\max}(\A)$ is defined as
\begin{equation*}
R_{\max}(\A):=\inf_{\omega\in\S(\hil)}\sup_{A\in\A}\dmax{A}{\omega}.
\end{equation*}
Any state $\omega$ where the infimum above is attained is called a \ki{$D_{\max}$-divergence center} of $\A$.

If $\A=\{0\}$ then $R_{\max}(\A)=-\infty$, and any state is a divergence center.
Assume for the rest that $\A=\{A_1,\ldots,A_r\}$ is finite, and it contains a non-zero element,
and hence $R:=R_{\max}(\A)$ is a finite number.
By definition, for every $n\in\bN$, there exists an $\omega_{n}\in\S(\hil)$ such that
$A\le 2^{R+1/n}\omega_{n}$ for every $A\in\A$. Since $\S(\hil)$ is compact, there exists
a subsequence $n_k,\,k\in\bN$, such that $\omega_{n_k},\,k\in\bN$, is convergent. Let $\omega^*:=\lim_{k\to\infty}\omega_{n_k}$; then
$A\le 2^R\omega^*$ for every $A\in\A$, and hence $\omega^*$ is a divergence center. Thus, the set of divergence centers is non-empty. 
Obviously, if $\omega$ is a divergence center then $2^R\omega$ is an upper bound to $\A$, and hence
$2^R\omega\ge\lub(\A)=:L$. Let $\tilde R:=\log_2\Tr L$ and $\tilde\omega:=L/\Tr L$. Then
$2^R\omega\ge L$ yields $R\ge\tilde R$, while
$A\le 2^{\tilde R}\tilde\omega$ due to the definition of $\lub(\A)$, and hence $R\le\tilde R$.
Thus, $R=\tilde R$, i.e., $\Tr(2^R\omega)=\Tr\lub(\A)$. Taking into account that $2^R\omega\ge L$, this implies that
$2^R\omega=\lub(\A)$. Thus, the $D_{\max}$-divergence center is unique, and is equal to
$\lub(\A)/\Tr\lub(\A)$, while $R_{\max}(\A)=\log\Tr\lub(\A)$.

According to \cite{ykl} (see also Appendix \ref{sec:semidef}), this can be rewritten as
$\log P_e^*(A_1,\ldots,A_r)=R_{\max}(\A)$. A similar expression for the optimal error probability in terms of
the max-relative entropy has been given in \cite{KRS}.

\section{The classical case}
\label{sec:classical}

In the classical case the hypotheses (in the single-shot setting) are represented by non-negative functions
$A_i:\,\X\to\bR_+$, where $\X$ is some finite set, and POVM elements are replaced by non-negative functions
$E_i:\,\X\to\bR_+$, satisfying $\sum_i E_i(x)\le 1,\,\forall x\in\X$, which we may call a classical POVM.
The success probability corresponding to a classical POVM $\{E_i\}$ is
$P_s\bz\{E_i\}\jz=\sum_{i=1}^r\sum_{x\in\X}A_i(x)E_i(x)$. We can assign to each non-negative function $F:\,\X\to\bR_+$ a diagonal
operator $\hat F$ on $\bC^{\X}$ in an obvious way, and under this identification we get
$\sum_{i=1}^r\sum_{x\in\X}A_i(x)E_i(x)=\sum_{i=1}^r\Tr \hat A_i\hat E_i$, which is the success probability corresponding to hypotheses $\hat A_i$ and POVM elements $\hat E_i$.
On the other hand, if $A_1,\ldots,A_r\in\B(\hil)_+$ are mutually commuting then there exists a basis in $\hil$, labeled by the elements of some finite set $\X$, such that
$A_i=\sum_{x\in\X}\pinner{x}{A_i}{x}\pr{x}$. Moreover, for any operator $E\in\B(\hil)$, we have
$\Tr A_iE=\sum_{x\in\X}\tilde A(x)\tilde E_(x)$, where for $F\in\B(\hil)$, we let $\tilde F:\,\X\to\bC$ be defined by $\tilde F(x):=\pinner{x}{F}{x}$. In particular, if
$E_1,\ldots,E_r$ is a POVM then $\tilde E_1,\ldots,\tilde E_r$ is a classical POVM, and
$\sum_{i=1}^r\sum_{x\in\X}\tilde A_i(x)\tilde E_i(x)=\sum_{i=1}^r\Tr A_iE_i$. Hence, if the operators representing the 
hypotheses are diagonal in a given basis then it is enough to consider POVM elements that are also diagonal in the same 
basis, which reduces the problem into a classical one. Thus, the classical case can be represented both by functions and 
diagonal operators, and we will not make a difference in the notation between the two representations in what follows.

Consider first the classical binary state discrimination problem with hypotheses $A_1=A$ and $A_2=\sum_{i=1}^r B_i$.
Then we have the following strengthening of Theorem \ref{thm:averaged upper bounds}:
\begin{align}\label{classical subadditivity}
P_e^*\bz A,\sum\nolimits_i B_i\jz\le \sum_i P_e^*(A,B_i).
\end{align}
Indeed,
\begin{align*}
P_e^*\bz A,\sum\nolimits_i B_i\jz&=
\half\Tr\bz A+\sum\nolimits_i B_i\jz-\half\norm{A-\sum\nolimits_i B_i}_1\\
&=
\half\sum_x\left(A(x)+\sum\nolimits_i B_i(x)-\left|A(x)+\sum\nolimits_i B_i(x)\right|\right)\\
&=
\half\sum_x f_x\bz\sum\nolimits_i B_i(x) \jz,
\end{align*}
where $f_x(t):=t+x-|t-x|=2\min\{t,x\}$. It is easy to see that $f_x$ is subadditive for every $x$, and hence the above can be continued as
\begin{align*}
P_e^*\bz A,\sum\nolimits_i B_i\jz&=
\half\sum_x f_x\bz\sum\nolimits_i B_i(x) \jz
\le
\half\sum_x\sum_i f_x(B_i(x))
=
\sum_i\half\sum_x f_x(B_i(x))\\
&=
\sum_i\bz \half\Tr\bz A+B_i\jz-\half\norm{A-B_i}_1\jz
=
\sum_i P_e^*(A,B_i).
\end{align*}
Combining this with Theorem \ref{th:dich}, we get
\begin{equation}\label{classical decoupling}
P_e^*\bz A_1,\ldots,A_r\jz\le P_e^*
=
\sum_{k=1}^r P_e^*\bz A_k,\sum\nolimits_{l\ne k}A_l\jz
\le
\sum_{(k,l):\,k\ne l}^r P_e^*(A_k,A_l),
\end{equation}
proving Conjecture \ref{claim:1} with $c=2(r-1)$. Below we give a more direct proof of this, without using
Theorem \ref{th:dich}.

Consider now the i.i.d.~vs.~averaged i.i.d.~problem as in Section \ref{sec:averaged}, 
with hypotheses $A_{1,n}=p\rho^{\otimes n}$ and $B_{i,n}=(1-p)q_i\sigma_i^{\otimes n}$, where
$p\in(0,1)$ and $q$ is a probability distribution.  Then we have
\begin{align*}
\sum_i q_i P_e^*\bz p\rho^{\otimes n},(1-p)\sigma_i^{\otimes n}\jz
&\le
P_e^*\bz p\rho^{\otimes n},(1-p)\sum\nolimits_i q_i\sigma_i^{\otimes n}\jz\\
&\le
\sum_i P_e^*\bz p\rho^{\otimes n},(1-p)q_i\sigma_i^{\otimes n}\jz
\le
\sum_i P_e^*\bz p\rho^{\otimes n},(1-p)\sigma_i^{\otimes n}\jz,
\end{align*}
where the first inequality is due to the convexity of the trace-norm, the second is due to the 
subadditivity relation \eqref{classical subadditivity}, and the last inequality is obvious
from the definition of the error probability. This yields immediately Conjecture \ref{con:averaged asymptotics2} in the classical case, i.e.,
\begin{align*}
\lim_{n\to\infty}\frac{1}{n}\log P_e^*\bz p\rho^{\otimes n},(1-p)\sum\nolimits_i q_i\sigma_i^{\otimes n}\jz=
-\min_{i}C(\rho,\sigma_i).
\end{align*}
\smallskip

Consider now the classical single-shot state discrimination problem with hypotheses $A_1,\ldots,$ $A_r:\,\X\to\bR_+$,
and let $m(x):=\max_k A_k(x)$.
We say that a POVM $\{E_k\}_{k=1}^r$ is a \ki{maximum likelihood} POVM if $E_k(x)=0$ when $A_k(x)<m(x)$, and for every $x\in\X$, $\sum_k E_k(x)=1$.
For any POVM $\{E_k\}_{k=1}^r$, we have
\begin{align*}
P_s(E_1,\ldots,E_r)=\sum_k\sum_x A_k(x)E_k(x)\le\sum_k\sum_x m(x)E_k(x)\le \sum_x m(x)=
\Tr\max\{A_1,\ldots,A_r\},
\end{align*}
where $\max\{A_1,\ldots,A_r\}:=\sum_x m(x)\pr{x}$. The above inequality holds with equality 
if and only if $\{E_k\}_{k=1}^r$ is a maximum likelihood POVM, and hence we have
\begin{align*}
P_s^*(A_1,\ldots,A_r)=\Tr\max\{A_1,\ldots,A_r\}.
\end{align*}

Now let $E_1,\ldots,E_r$ be a maximum likelihood measurement. Then
the individual error probabilities are, for each $k$,
\begin{align*}
P_{e,k}=\sum_x A_k(x)\bz 1-E_k(x)\jz=
\sum_{x:\,A_k(x)<m(x)}A_k(x)+\sum_{x:\,A_k(x)=m(x)}A_k(x)\bz 1-E_k(x)\jz.
\end{align*}
%Let $\{f\ge g\}$ denote the set $\{x\in\X:\,f(x)\ge g(x)\}$ for some functions $f$ and $g$ on $\X$, and similarly for the other possible inequalities.
Obviously, if $A_k(x)<m(x)$ then there exists an $l\ne k$ such that $A_k(x)<A_l(x)$, and if
$A_k(x)=m(x)$ and $A_k(x)\bz 1-E_k(x)\jz>0$ then there exists and $l\ne k$ such that $A_k(x)=A_l(x)$.
Hence,
\begin{align*}
P_{e,k}\le
\sum_{l\ne k} \sum_{x:\,A_k(x)<A_l(x)}A_k(x)+\sum_{l\ne k}\sum_{x:\,A_k(x)=A_l(x)} A_k(x)
=
\sum_{l\ne k}\sum_{x:\,A_k(x)\le A_l(x)}A_k(x).
\end{align*}
Thus,
\begin{align*}
P_e^*(A_1,\ldots,A_r)&=
\sum_{k=1}^r P_{e,k}
\le
\sum_{k=1}^r\sum_{l\ne k}\sum_{x:\,A_k(x)\le A_l(x)}A_k(x)\\
&\le
\sum_{k=1}^r\sum_{l\ne k}\left[\sum_{x:\,A_k(x)\le A_l(x)}A_k(x)+
\sum_{x:\,A_k(x)> A_l(x)}A_l(x)
\right]\\
&=
\sum_{k=1}^r\sum_{l\ne k}\left[\half\Tr(A_k+A_l)-\half\norm{A_k-A_l}_1
\right]\\
&=
\sum_{(k,l):\,k\ne l}P_e^*(A_k,A_l),
\end{align*}
and we recover \eqref{classical decoupling}.

\section{The pure state case}
\label{sec:pure}

Let $A_1,A_2\in\B(\hil)_+$ be rank one operators; then we can write them as $A_i=\pr{x_i}=p_i\pr{\psi_i}=p_i\sigma_i$, where $p_i:=\Tr A_i$. 
Many of the divergence measures coincide in this case; indeed, it is easy to see that
\begin{align*}
|\inner{\psi_1}{\psi_2}|^2&=F(\sigma_1,\sigma_2)^2=Q_s(\sigma_1\|\sigma_2)=Q_{\min}(\sigma_1,\sigma_2)=\exp(-C(\sigma_1,\sigma_2)),\ds\ds\ds s\in[0,1].
\end{align*}
A straightforward computation gives that $\norm{A_1-A_2}_1=\sqrt{(p_1+p_2)^2-4p_1p_2|\inner{\psi_1}{\psi_2}|^2}$, and hence
\begin{align*}
P_e^*(A_1,A_2)=\half\Tr(A_1+A_2)-\half\norm{A_1-A_2}_1=
\frac{2p_1p_2|\inner{\psi_1}{\psi_2}|^2}{p_1+p_2+\norm{A_1-A_2}_1}
\end{align*}
Noting that $0\le\norm{A_1-A_2}_1\le p_1+p_2$, we get
\begin{align}\label{pure two-sided bounds}
\frac{p_1p_2}{p_1+p_2}|\inner{\psi_1}{\psi_2}|^2\le P_e^*(A_1,A_2)\le\frac{2p_1p_2}{p_1+p_2}|\inner{\psi_1}{\psi_2}|^2
\end{align}
Consider now two sequences of rank one operators $\vecc{A_i}=\{A_{i,n}\}_{n\in\bN}$, $i=1,2$, and let $p_{i,n}:=\Tr A_{i,n}$, and
$A_{i,n}=p_{i,n}\pr{\psi_{i,n}}=p_{i,n}\sigma_{i,n}$. Applying \eqref{pure two-sided bounds} to each $n$, we get
\begin{align*}
\frac{p_{1,n}p_{2,n}}{p_{1,n}+p_{2,n}}\exp\bz-C(\sigma_{1,n},\sigma_{2,n}\jz
\le
P_e^*(A_{1,n},A_{2,n})
\le
\frac{2p_{1,n}p_{2,n}}{p_{1,n}+p_{2,n}}\exp\bz-C(\sigma_{1,n},\sigma_{2,n}\jz.
\end{align*}
If we assume now that $0<\liminf_n p_{i,n}\le\limsup_n p_{i,n}<+\infty$, $i=1,2$, then taking the limit $n\to\infty$ in the above formula yields
\begin{align*}
\lim_{n\to\infty}\frac{1}{n}\log P_e^*(A_{1,n},A_{2,n})=-C\bz\vecc{\sigma_1},\vecc{\sigma_2}\jz=-C\bz\vecc{A}_1,\vecc{A}_2\jz,
\end{align*}
where the last identity is straightforward to verify.
Thus in the pure state case we can get the Chernoff bound theorem from the above elementary argument, 
without using the trace inequality of \cite{Aud} or the reduction to classical states
from \cite{NSz}.

Consider now the case $r>2$, and let $A_1,\ldots,A_r\in\B(\hil)_+$ be rank one operators. Let $E_i:=A_0^{-1/2}A_iA_0^{-1/2}$ 
be the POVM elements of the pretty good measurement, where
$A_0:=\sum_{i=1}^r A_i$. It was shown in Appendix A of \cite{HLS} that for every $i$,
\begin{align*}
\Tr A_i(I-E_i)\le\frac{1}{p_i}\sum_{j:\,j\ne i}|\inner{x_i}{x_j}|^2.
\end{align*}
Summing it over $i$, we get
\begin{align}\label{pure upper decoupling}
P_e^*(A_1,\ldots,A_r)&\le P_e(\{E_1,\ldots,E_r\})\le
\sum_{i=1}^r\frac{1}{p_i}\sum_{j:\,j\ne i}|\inner{x_i}{x_j}|^2
\le
\frac{1}{\min_i p_i}\sum_{(i,j):\,i\ne j}\exp(-C(A_i,A_j)),
\end{align}
while Theorem \ref{th:41} yields
\begin{align}\label{pure lower decoupling}
P_e^*(A_1,\ldots,A_r)&\ge \frac{1}{r-1} \sum_{(k,l):\,k< l} P_e^*(A_k,A_l)
\ge
\frac{1}{r-1} \sum_{(k,l):\,k< l}\frac{1}{p_k+p_l}\exp(-C(A_k,A_l)),
\end{align}
where the last inequality is due to \eqref{pure two-sided bounds}. Note that \eqref{pure upper decoupling} 
also yields a decoupling bound for the error probabilities, as
$|\inner{x_i}{x_j}|^2/p_i\le (1+p_j/p_i)P_e^*(A_i,A_j)$ by \eqref{pure two-sided bounds}, and hence
\begin{align*}
P_e^*(A_1,\ldots,A_r)&\le
\sum_{i=1}^r\frac{1}{p_i}\sum_{j:\,j\ne i}|\inner{x_i}{x_j}|^2
\le
\frac{\Tr A_0}{\min_i p_i}\sum_{(i,j):\,i\ne j}P_e^*(A_i,A_j).
\end{align*}

Consider now the asymptotic case, with hypotheses $\vecc{A_i},\,i=1,\ldots,r$, and assume as before that
$0<\liminf_n p_{i,n}\le\limsup_n p_{i,n}<+\infty,\,\forall i$. Applying
\eqref{pure upper decoupling} and \eqref{pure lower decoupling} to every $n$, and taking the limit $n\to\infty$, we get
\begin{equation*}
\lim_{n\to\infty}\frac{1}{n}\log P_e^*(A_{1,n},\ldots,A_{r,n})=-\max_{(i,j):\,i\ne j}C\bz\vecc{A}_i,\vecc{A}_j\jz.
\end{equation*}

\section{Semidefinite program representations of success and error probabilities}
\label{sec:semidef}

The average success probability of a POVM $\{E_k\}$ for discriminating between $r$ PSD operators $\{A_k\}_{k=1}^r$ is given by
\be
P_{s}(\{E_k\}) = \sum_{k=1}^r \trace(A_k E_k),
\ee
and the optimal success probability $P_s^*$ is the maximum over all POVMs:
\be
P_s^* = \max\left\{P_{s}(\{E_k\}):\,\{E_k\}_{k=1}^r\s\text{ POVM}\right\}.\label{eq:primal}
\ee
In this section we consider the consequences of the following simple observation \cite{ykl}: in (\ref{eq:primal})
the maximum of a linear functional is taken over the set of POVMs,
which is a convex set.
This optimization problem is therefore a so-called \textit{semidefinite program} (SDP) \cite{boyd}.
One consequence is that $P_s^*$ can be efficiently calculated numerically by SDP solvers
even when no closed form analytical solution exists. Another, theoretically important consequence is that
the duality theory of SDPs allows to express the value of $P_s^*$ in a dual way as a minimization problem
\cite{eldar,jezek}.

The Lagrangian of problem (\ref{eq:primal}) is
\beas
{\cal L} &=& \sum_k\trace(A_k E_k) +\sum_k\trace(Z_k E_k)+\trace Y\bz I-\sum_k E_k\jz \\
&=& \trace Y + \sum_k\trace E_k(A_k+Z_k-Y),
\eeas
where the operators $Z_k$ and $Y$ are the Lagrange multipliers of the problem.
If the $Z_k$ are taken to be PSD, we see that always $P_s(\{E_k\}) \le {\cal L}$.
This does not change when maximizing over all POVMs, and certainly not when in the
maximization of ${\cal L}$ over the $E_k$ the POVM constraints are dropped.
Hence $P_s^*\le \max_{E_k}{\cal L}$.
This unconstrained maximization is easy to do; when $Y=A_k+Z_k$ for all $k$,
$\max_{E_k}{\cal L} = \trace Y$, otherwise it is positive infinity.
Minimizing this upper bound over all PSD $Z_k$ and all $Y$ yields the best upper bound on $P_s^*$.
The positivity condition on the $Z_k$ can be replaced by requiring
that for all $k$, $Y\ge A_k$. Minimizing over such $Y$ then gives
\begin{equation}\label{dual upper bound}
P_s^* \le \min_Y\{\trace Y:\, Y\ge A_k,\, k=1\ldots,r\},
\end{equation}
which is again an SDP, called the \textit{dual} of the original (\textit{primal}) SDP
(see, e.g., \cite{ykl} or \cite{eldar}, equations (15) and (16)).

Therefore, the optimal success probability is bounded above by the trace of the LUB of all weighted density operators:
\be
P_s^* \le \trace \lub(A_1,\ldots,A_r). \label{ineq:dual}
\ee
Note that in the classical case (all $A_i$ are diagonal, with diagonal elements $A_i(j)=p_i q_i(j)$)
the LUB is the entrywise
maximum, so that the dual SDP reproduces the maximum-likelihood formula
$P_s^* = \sum_j\max_i(A_i(j))$.

The difference between the maximum of the primal SDP ($P_s^*$)
and the minimum of the dual SDP is called the \textit{duality gap}.
One can show that the duality gap is zero, provided some mild technical conditions are satisfied
(e.g.\ Slater's conditions), in which case equality holds:
\be
P_s^* = \trace\lub(A_1,\ldots,A_r). \label{eq:dual}
\ee

%\subsection{Optimality conditions}
If the duality gap is zero, then
the optimal $Z_k$ (denoted by $Z_k^*$) and the optimal POVM $\{E_k^*\}$ must necessarily satisfy a simple relation, called the
\textit{complementary slackness} condition.
Indeed, Let $Y^*$ be the operator where the minimum on the RHS of \eqref{dual upper bound} is attained.
As $\trace Y^*=\trace(Y^*\sum_k E_k^*)$
and $Y^*-A_k=Z_k^*$,
the equality $\sum_k\trace A_k E_k^* = \trace Y^*$ implies
$\sum_k \trace(Z_k^*E_k^*)=0$.
Since all $Z_k^*$ and $E_k^*$ are required to be PSD,
this actually means that
\be
Z_k^* E_k^*=0,\ds\ds \forall k. \label{eq:slack}
\ee

A simple consequence of these complementary slackness conditions is obtained by summing over $k$:
$\sum_k Z_k^* E_k^* = 0$. Noting that $Z_k=Y-A_k$, this yields
\be
Y^* = \sum_k A_k E_k^*. \label{eq:YKL}
\ee
Combined with the conditions $Y^*\ge A_k$ for all $k$,
these are the optimality conditions first obtained by Yuen, Kennedy and Lax \cite{ykl}.

\section{Short proofs of Barnum and Knill's and Tyson's bounds}
\label{sec:Tysonproofs}

\noindent\textit{Proof of Theorem \ref{th:BKbound}.}
The main ingredient of the proof is the following lemma (a slight improvement over Lemma 5 in \cite{barnumknill}, which lacked the factor $\half$).
Let $M$ be a positive semidefinite $n\times n$ matrix, symmetrically partitioned as the $2\times 2$ block matrix $M=\twomat{X}{Y}{Y^*}{Z}$, where $X$ is $n_1\times n_1$,
$Y$ is $n_1\times n_2$ and $Z$ is $n_2\times n_2$ (with $n=n_1+n_2$). 
Let $M^2$ be partitioned conformally.
Then the off-diagonal blocks of $M$ and $M^2$ satisfy
\[
||M_{1,2}||_2^2 \le \half||(M^2)_{1,2}||_1.
\]
Note that the validity of this lemma does not extend to general $m\times m$ partitions.

\begin{quote}
\noindent\textit{Proof of lemma.} We have $M_{1,2}=Y$ and $(M^2)_{1,2} = XY+YZ$.
Let us, without loss of generality, assume that $n_1\le n_2$.
From the singular value decomposition of $Y$ we can obtain a basis for representing $M$ in which $Y$ is pseudo-diagonal with non-negative diagonal elements.
Let (for $i=1,\ldots,n_1$) $x_i$ and $y_i$ be the diagonal elements of $X$ and $Y$, and $z_i$ the first $n_1$ diagonal elements of $Z$, all of which are non-negative.
As $M$ is PSD, any of its principal submatrices is PSD too, and we have $y_i\le \sqrt{x_iz_i}\le (x_i+z_i)/2$. Thus
\[
||Y||_2^2 = \sum_{i=1}^{n_1} y_i^2 \le \half\sum_{i=1}^{n_1} x_iy_i+y_i z_i =\half\sum_{i=1}^{n_1} (XY+YZ)_{i,i} \le\half ||XY+YZ||_1,
\]
as required. 
The last inequality follows from the inequality $|\trace A|\le ||A||_1$ applied to the square matrix obtained by padding  $XY+YZ$ with extra rows containing zero (an
operation that does not affect the trace norm).
\qed
\end{quote}

To prove Theorem \ref{th:BKbound}, let $X$ be the $r\times 1$ column matrix $X:=(A_j^{1/2}A_0^{-1/4})_{j=1}^r$.
Then $X^*X=\sum_{j=1}^r A_0^{-1/4} A_j A_0^{-1/4} = A_0^{1/2}$.
Furthermore, let $N=XX^*$. Then $N_{i,j}=A_i^{1/2} A_0^{-1/2} A_j^{1/2}$
and $(N^2)_{i,j}=(XX^*XX^*)_{i,j}=A_i^{1/2} A_j^{1/2}$.

For each value of $i=1,\ldots,r$ we now apply the lemma to the $2\times 2$ block matrix $M=\twomat{X}{Y}{Y^*}{Z}$ where
$X=N_{i,i}$,
$Y$ is the $i$-th row of $N$, but with the $i$-th column removed, 
and $Z$ is the submatrix of $N$ with the $i$-th row and $i$-th column removed.
Thus, $M_{1,2}=Y$ is itself a row block matrix consisting of the $r-1$ blocks $A_i^{1/2} A_0^{-1/2} A_j^{1/2}$ for fixed $i$ and $j\neq i$.
Likewise, $(M^2)_{1,2}$ is a row block matrix consisting of the $r-1$ blocks $A_i^{1/2} A_j^{1/2}$.
The lemma then implies, for all $i$,
\beas
\sum_{j: j\neq i} \trace A_i A_0^{-1/2}A_j A_0^{-1/2} &=& \sum_{j: j\neq i} ||A_i^{1/2} A_0^{-1/2} A_j^{1/2}||_2^2 = ||M_{1,2}||_2^2 \\
&\le& \half||(M^2)_{1,2}||_1 = \half||(A_i^{1/2} A_j^{1/2})_{j\neq i}||_1 \\
&\le& \half \sum_{j: j\neq i} ||A_i^{1/2} A_j^{1/2}||_1 = \half \sum_{j: j\neq i} F(A_i,A_j).
\eeas
The last inequality is just the triangle inequality for the trace norm.
Summing over all $i$ yields the stated bound on the error probability $P_e^{PG}$.
\qed

\bigskip

\noindent\textit{Proof of Theorem \ref{th:Tyson}.}
For any operator $X$ with $\norm{X}\le 1$ and any quantum state $\sigma$ we have
\begin{equation}\label{Tyson1}
1-\norm{X\sigma}_1 \le 1-\trace(X^*X\sigma) \le 1-\norm{X\sigma}_1^2 \le  2(1-\norm{X\sigma}_1).
\end{equation}
The first two inequalities both follow from H\"older's inequality (\cite{bhatia}, Cor IV.2.6):
$$
\trace (X^*X\sigma)\le \norm{X^*X\sigma}_1\le \norm{X\sigma}_1\;\norm{X} \le \norm{X\sigma}_1,
$$
and
$$
\norm{X\sigma}_1^2 = \norm{(X\sigma^{1/2})\sigma^{1/2}}_1^2
\le \norm{X\sigma^{1/2}}_2^2\;\;\norm{\sigma^{1/2}}_2^2 = \trace X^*X\sigma
$$
and the last inequality in \eqref{Tyson1} follows from $1-x^2\le2(1-x),\,x\in\bR$.
Applying \eqref{Tyson1} for $\sigma_k:=A_k/\Tr A_k$ and $X_k$ and summing over $k$ yields
$$
\sum_{k=1}^r (\Tr A_k)\bz 1-\norm{X_k\sigma_k}_1\jz
\le
\sum_{k=1}^r (\Tr A_k)\bz 1-\Tr (X_k^*X_k\sigma_k)\jz
\le
2\sum_{k=1}^r (\Tr A_k)\bz 1-\norm{X_k\sigma_k}_1\jz.
$$
Taking the minimum over all $X_k$ then yields the inequalities of the theorem.
\qed
\bigskip

\noindent\textit{Proof of Theorem \ref{th:Tyson2}.}
In the following we will abbreviate the expression $\sum_{k=1}^r A_k^2$ by $S$.

First note that the operator $\bigoplus_k X_k A_k$ is a pinching of the block operator
$(X_j A_k)_{j,k}$. This operator is the product of the column block operator ${\cal X}:=(X_j)_{j,1}$
and the row block operator ${\cal A}:=(A_k)_{1,k}$.
Because unitarily invariant norms do not increase under pinchings, we get
$$
\sum_k \norm{X_k A_k}_1
= \norm{\bigoplus_k X_k A_k}_1
\le \norm{(X_j A_k)_{j,k}}_1 =\norm{{\cal X}{\cal A}}_1
=\trace\left({\cal A}^*{\cal X}^*{\cal X}{\cal A}\right)^{1/2}.
$$
Noting that ${\cal X}^*{\cal X} = \sum_k X_k^* X_k=I$ yields
$$
\trace\left({\cal A}^*{\cal X}^*{\cal X}{\cal A}\right)^{1/2}
= \trace\left({\cal A}^*{\cal A}\right)^{1/2}
= \norm{\cal A}_1
= \norm{{\cal A}^*}_1
= \trace S^{1/2}.
$$
Hence,
$$
\sum_k \norm{X_k A_k}_1 \le \trace S^{1/2}.
$$
Equality can be achieved by taking the SQ measurement, $X_k=A_k S^{-1/2}$. Indeed,
from $X_k A_k = A_k S^{-1/2} A_k\ge0$,
we get
$$
\sum_k\norm{X_k A_k}_1=\sum_k\trace(X_kA_k)=\trace\left(S^{-1/2} \sum_k A_k^2\right)
= \trace S^{1/2}.
$$
This shows that the maximum of $\sum_k \norm{X_kA_k}_1$ over any complete set
of $r$ measurement operators $\{X_k\}$
is achieved for $X_k=A_k S^{-1/2}$ and is given by $\trace S^{1/2}$.
Hence, $\Gamma^* = \Tr A_0-\trace S^{1/2}$, which is (\ref{T2}).
\qed
%%%%%%%%%%%%%%%%%%%%%%%%%%%%%%%%%%%%%%%%%%%%%%%%%%%%%%%%%%%%%%%%%%%%%%%%%%%%%%%%%%%%
\section*{Acknowledgments}
The authors are grateful to Giulio Chiribella for pointing out the decoupling bounds of \cite{barnumknill}.
This work was partly supported by an Odysseus grant from the Flemish FWO (KA)
and by the Marie Curie International Incoming Fellowship ``QUANTSTAT'' within
the 7th European Community Framework Programme (MM).
MM also acknowledges support by the European Research Council (Advanced Grant ``IRQUAT'').
The authors are grateful to the following institutions for their hospitality: the Fields Institute in Toronto (KA and MM), during the
Thematic Program on Mathematics in Quantum Information, 2009, where this research has started, and
the University of Ulm (KA).
%%%%%%%%%%%%%%%%%%%%%%%%%%%%%%%%%%%%%%%%%%%%%%%%%%%%%%%%%%%%%%%%%%%%%%%%%%%%%
%%%%%%%%%%%%%%%%%%%%%%%%%%%%%%%%%%%%%%%%%%%%%%%%%%%%%%%%%%%%%%%%%%%%%%%%%%%%%


\begin{thebibliography}{99}

\bibitem{ando79} T.~Ando,
\kii{Concavity of certain maps on positive definite operators and applications to Hadamard operators},
\kiii{Linear Algebra Appl.} \textbf{26}, 203--241 (1979).

\bibitem{ando93} T.~Ando,
\kii{Parameterization of minimal points of some convex sets of operators},
\kiii{Acta Sci.\ Math.\ (Szeged)} \textbf{57}, 3--10 (1993).

\bibitem{ando99} T.~Ando,
\kii{Problem of infimum in the positive cone},
in \textit{Analytic and geometric inequalities and applications}, T.M.~Rassias and H.M.~Srivastava (Eds.),
\kiii{Mathematics and its Applications} \textbf{478}, 1--12, Kluwer Acad.\ Publ., Dordrecht (1999).

\bibitem{Aud} K.M.R.~Audenaert, J.~Calsamiglia, Ll.~Masanes, R.~Munoz-Tapia, A.~Acin, E.~Bagan and F.~Verstraete,
            \kii{Discriminating states: the quantum Chernoff bound},
           \kiii{Phys.~Rev.~Lett.} \textbf{98} 160501, (2007)

\bibitem{ANSzV}
K.M.R.~Audenaert, M.~Nussbaum, A.~Szko{\l}a and F.~Verstraete,
        \kii{Asymptotic error rates in quantum hypothesis testing},
        \kiii{Commun.~Math.~Phys.} \textbf{279}, 251--283, (2008)

\bibitem{finiten}
K.~Audenaert M.~Mosonyi, F.~Verstraete,
\kii{Quantum state discrimination bounds for finite sample size},
\kiii{J.\ Math.\ Phys.} \textbf{53}, 122205, (2012).

\bibitem{sandwich} K.M.R.~Audenaert,
\kii{Comparisons between quantum state distinguishability measures},
\kiii{Quant.\ Inf.\ Comp.} \textbf{14}(1\&2) 31--38. In press (2014).

\bibitem{barnumknill} H.~Barnum and E.~Knill,
\kii{Reversing quantum dynamics with near-optimal quantum and classical fidelity},
\kiii{J.\ Math.\ Phys.} \textbf{43}, 2097--2106 (2002).

 \bibitem{BS} I.~Bjelakovi\'c, R.~Siegmund-Schultze:
             \kii{An ergodic theorem for the quantum relative entropy};
             \kiii{Commun.~Math.~Phys.} \textbf{247}, 697--712, (2004)

\bibitem{BM} V.P.~Belavkin and V.~Maslov,
\kii{Design of Optimal Dynamic Analyzer: Mathematical Aspects of Wave Pattern Recognition},
\kiii{In Mathematical Aspects of Computer Engineering}, edited by
V.~Maslov, pp.~146-237 (Mir, Moscow 1987);

\bibitem{bhatia} R.~Bhatia,
\kii{Matrix Analysis},
\kiii{Springer Verlag}, (1997)

\bibitem{Curlander}
P.J.~Curlander,
\kii{Quantum Limitations on Communication Systems},
\kiii{Ph.D.~Thesis, MIT, Cambridge, MA}, (1979).

\bibitem{Datta} N.~Datta,
\kii{Min- and Max-Relative Entropies and a New Entanglement Monotone},
\kiii{IEEE Transactions on Information Theory}, vol.~55, no.~6, pp.~2816--2826, (2009).

\bibitem{eldar} Y.C.~Eldar, A.~Megretski and G.C.~Verghese,
\kii{Designing optimal quantum detectors via semidefinite programming},
\kiii{IEEE Trans.\ Inf.\ Theory} \textbf{49}, 1007--1012, (2003)

\bibitem{FvdG}
C.A.~Fuchs, J.~van de Graaf,
\kii{Cryptographic distinguishability measures for quantum-mechanical states},
\kiii{IEEE Transactions on Information Theory}, Volume: 45,  Issue: 4, pp.~1216 - 1227, (1999).

\bibitem{PGM} P.~Hausladen and W.~Wootters,
\kii{A ``pretty good'' measurement for distinguishing quantum states},
\kiii{J.\ Mod.\ Opt.} \textbf{41}, 2385 (1994).

\bibitem{hayashi} M.~Hayashi, \textit{Quantum Information: an Introduction}, Springer (2006).

\bibitem{HKK}
M.~Hayashi, A.~Kawachi and H.~Kobayashi, \kii{Quantum measurements for hidden subgroup
problems with optimal sample}, \kiii{Quantum Information and Computation} \textbf{8}, 0345--0358, (2008).

\bibitem{HLS}
P.~Hayden, D.~Leung, G.~Smith,
\kii{Multiparty data hiding of quantum information},
\kiii{Phys.~Rev.~A} \textbf{71}, 062339, (2005).

\bibitem{Helstrom} C.W.~Helstr\"om,
\kii{Quantum Detection and Estimation Theory},
\kiii{Academic Press, New York}, (1976)

\bibitem{HMO} F.~Hiai, M.~Mosonyi, T.~Ogawa,
       \kii{Large deviations and Chernoff bound for certain correlated states on the spin chain},
       \kiii{J.~Math.~Phys.~} \textbf{48}, 123301, (2007)

\bibitem{HMO2} F.~Hiai, M.~Mosonyi, T.~Ogawa,
              \kii{Error exponents in hypothesis testing for correlated states on a spin chain},
              \kiii{J.~Math.~Phys.} \textbf{49}, 032112, (2008)

\bibitem{HMH} F.~Hiai, M.~Mosonyi, M.~Hayashi,
\kii{Quantum hypothesis testing with group symmetry},
\kiii{J.~Math.~Phys.} \textbf{50}, 103304, (2009).

\bibitem{holevo73} A.S.~Holevo,
\kii{Information-theoretical aspects of quantum measurement},
Problemy Peredachi Informatsii \textbf{9}(2), 31--42, (1973) (in Russian);
English translation: A. S. Kholevo, Problems of Information Transmission,
\textbf{9}, 110--118, (1973).

\bibitem{Holevo78}
A.S.~Holevo,
\kii{On Asymptotically Optimal Hypothesis Testing in Quantum Statistics},
\kii{Theor.~Prob.~Appl.} \textbf{23}, pp.~411--415, (1978)

\bibitem{HB}
W.Y.~Hwang and J.~Bae, \kii{Minimum-error state discrimination constrained by the non-signalling principle},
\kiii{J.~Math.~Phys.} \textbf{51}, 022202, (2010);

\bibitem{jezek} M.~Jezek, J.~Rehacek and J.~Fiurasek,
\kii{Finding optimal strategies for minimum-error quantum-state discrimination},
\kiii{Phys.\ Rev.\ A} \textbf{65}, 060301(R) (2002).

\bibitem{KRS} R.~K\"onig, R.~Renner, C.~Schaffner,
\kii{The operational meaning of min- and max-entropy},
\kiii{IEEE Trans.~Inf.~Th.} \textbf{55}(9), 4337--4347, (2009).

\bibitem{montanaro07} A.~Montanaro,
\kii{On the distinguishability of random quantum states},
\kiii{Comm.\ Math.\ Phys.} \textbf{273}(3), 619--636, (2007);

\bibitem{montanaro08}
A.~Montanaro, \kii{A lower bound on the probability of error in quantum state discrimination},
\kiii{IEEE Inf. Theory Workshop, May 2008}, 378--380, (2008);

\bibitem{MHOF} M.~Mosonyi, F.~Hiai, T.~Ogawa, M.~Fannes,
\kii{Asymptotic distinguishability measures for shift-invariant quasi-free states of fermionic lattice systems},
\kiii{J.\ Math.\ Phys.} \textbf{49}, 072104, (2008)

\bibitem{M} M.~Mosonyi,
\kii{Hypothesis testing for Gaussian states on bosonic lattices},
\kiii{J.\ Math.\ Phys.} \textbf{50}, 032104, (2009)

\bibitem{MD}
M.~Mosonyi and N.~Datta,
\kii{Generalized relative entropies and the capacity of classical-quantum channels},
\kiii{J.\ Math.\ Phys.} \textbf{50}, 072104, (2009).

\bibitem{MH}
M.~Mosonyi, F.~Hiai,
\kii{On the quantum Renyi relative entropies and related capacity formulas},
\kiii{IEEE Transactions on Information Theory}, vol.~57, pp.~2474--2487, (2011).

\bibitem{M13}
M.~Mosonyi,
\kii{R\'enyi divergences and the classical capacity of finite compound channels},
\kiii{arXiv:1310.7525}, (2013).

\bibitem{NS}
A.~Nayak and J.~Salzman, \kii{Limits on the ability of quantum states to convey classical messages},
\kiii{Journal of the ACM}, \textbf{53}(1), 184--206, (2006)

\bibitem{NC}
M.A.~Nielsen, I.L.~Chuang,
\kii{Quantum Computation and Quantum Information},
\kiii{Cambridge University Press}, (2000).

\bibitem{NSz} M.~Nussbaum,  A.~Szko\l a:
\kii{The Chernoff lower bound for symmetric quantum hypothesis testing},
\kiii{Ann.\ Statist.} \textbf{37}(2), 1040--1057, (2009).

\bibitem{NS10} M.~Nussbaum,  A.~Szko\l a,
\kii{Exponential error rates in multiple state discrimination on a quantum spin chain},
\kiii{J.\ Math.\ Phys.} \textbf{51}, 072203, (2010).

\bibitem{NS11a} M.~Nussbaum,  A.~Szko\l a,
\kii{Asymptotically optimal discrimination between multiple pure quantum states},
In: Theory of Quantum Computation, Communication and Cryptography. 5th Conference,
TQC 2010, Leeds, UK. Revised Selected Papers. Lecture Notes in Computer
Science, Vol 6519, van Dam, Wim; Kendon, Vivien M.; Severini,
Simone (Eds.), Springer, 1--8, (2011).

\bibitem{NS11b} M.~Nussbaum,  A.~Szko\l a:
\kii{An asymptotic error bound for testing multiple quantum hypotheses},
\kiii{Ann.~Statist.} \textbf{39}, 3211--3233, (2011).

\bibitem{N13} M.~Nussbaum,
\kii{Attainment of the multiple quantum Chernoff bound for certain ensembles of mixed states},
Proceedings of the First International Workshop on Entangled Coherent States and Its Application to Quantum Information Science,
(Usuda, T.S., Kato, K., Eds.), Tamagawa University, Tokyo, Japan, 77--81 (2013).

\bibitem{OH} T.~Ogawa, M.~Hayashi:
\kii{On error exponents in quantum hypothesis testing};
\kiii{IEEE Trans.~Inform.~Theory} \textbf{50}, 1368--1372, (2004).

\bibitem{peres} A.~Peres and W.K.~Wootters,
\kii{Optimal Detection of Quantum Information},
\kiii{Phys.\ Rev.\ Lett.} \textbf{66}, 1119--1122, (1991).

\bibitem{Qiu}
D.W.~Qiu, \kii{Minimum-error discrimination between mixed quantum states},
\kiii{Phys.~Rev.~A} \textbf{77}, 012328, (2008).

\bibitem{QL}
D.W.~Qiu and L.~Li, \kii{Minimum-error discrimination of quantum states: New bounds and comparisons},
\kiii{Phys. Rev. A} \textbf{81}, 04232, (2010).

\bibitem{RennerPhD}
R.~Renner,
{\em Security of Quantum Key Distribution}, PhD dissertation, Swiss Federal Institute of Technology Zurich, Diss.~ETH No.~16242, (2005).

\bibitem{salikhov73} N.P.~Salikhov,
\kii{Asymptotic properties of error probabilities of tests for distinguishing between several multinomial testing schemes},
\kiii{Dokl.\ Akad.\ Nauk SSSR} \textbf{209}, 54--57, (Russian, 1973).

\bibitem{salikhov03} N.P.~Salikhov,
\kii{On a generalization of Chernoff divergence},
\kiii{Teor. Veroyatn. Primen.} \textbf{43}, 294--314, (Russian, 1998). English translation in
\kiii{Theory Probab.\ Appl.} \textbf{47}(2), 286--298, (2003).

\bibitem{Tyson} J.~Tyson,
\kii{Two-sided estimates of minimum-error distinguishability of mixed quantum
states via generalized Holevo-Curlander bounds},
\kiii{J.~Math.~Phys.} \textbf{50}, 032106, (2009).

\bibitem{Tyson2}
J.~Tyson,
\kii{Two-sided bounds on minimum-error quantum measurement, on the reversibility of quantum dynamics, and on maximum overlap using directional iterates},
\kiii{J.~Math.~Phys.} \textbf{51}, 092204, (2010).

\bibitem{boyd} L.~Vandenberghe and S.~Boyd,
\kii{Semidefinite programming},
\kiii{SIAM Review} \textbf{38}, 49--95, (1996).

\bibitem{ykl} H.P.~Yuen, R.S.~Kennedy and M.~Lax,
\kii{Optimum testing of Multiple Hypotheses in Quantum Detection Theory},
\kiii{IEEE Trans. Inform. Theory} \textbf{21}(2), 125--134, (1975).

\end{thebibliography}
\end{document}